\newtheorem{theorem}{Theorem}
\newtheorem{lemma}{Lemma}
\newtheorem{proposition}{Proposition}
\newtheorem{assumption}{Assumption}
\title{\vspace{-3.5cm}\textbf{A Variational Inference method for Bayesian variable selection  }}
\author{Lin Guoqiang\footnote{ 
School of Mathematical Sciences, Zhejiang University, China.
Email:lgq0616@zju.edu.cn}}
\date{November 2022}
\begin{document}
    \maketitle
    \begin{abstract}
        Variable selection is a classic problem in statistics. In this paper, we consider a Bayes variable selection problem based on spike-and-slab prior with mixed normal distribution proposed by Ročková and George (2014). Motivated by Ormerod and You (2017, 2023), we use the variational inference and collapsed variational inference method to solve the Bayesian problem instead of MCMC. Like Ormerod and You (2017, 2023), we also explain how the sparsity estimator is  induced, and under certain mild assumptions, we also prove the  consistent and asymptotic results.\\
        \noindent{\textbf{Keywords:}} Bayesian variable selection,
        variational inference, spike-and-slab prior
    \end{abstract}

    \section{Introduction}
        Variable selection is an important part of statistical analysis and inference, the methods of variable selection mainly include criteria method
        \cite{AIC,BIC},  penalty method\cite{lasso,SCAD} and Bayesian method\cite{2002}.
        Overviews of recent developments in variable selection
        are \cite{Spring,review1}.
        Bayesian methods have many advantages, 
        the penalty method has a corresponding Bayesian interpretation, such as Ridge regression corresponding to the normal prior of the regression coefficient, and lasso corresponding to the Double Exponential prior of the regression coefficient.A good choice of prior could induce closed form expressions for a given model. When extracting information from the posterior to identify promising subset models, the Markov chain Monte Carlo (MCMC)\cite{2000} stochastic search or SSVS (Stochastic Search Variable Selection)\cite{GM97} approach is ofen used to identify  
        high probability models.However, MCMC and SSVS are too slow when p is large.
        
        A popular scalable alternative is variational Bayesian (VB)\cite{fox2012tutorial}, which approximates the posterior by solving an optimization problem.It is computationally efficient for Bayesian inference\cite{explaining}.The mean-field\cite{mean} is usually used for the variational family.
        This sparse variational family has been employed in various settings\cite{sparse,logit1,logit2}.Furthermore,in recent researches, some theoretical properties and guarantees are proposed, which 
        shows that VB is useful in Bayesian analysis
        \cite{frequentist,yang2020alpha,yang2020alpha,zhang2022bayesian,han2019statistical,huang2016variational,carbonetto2012scalable,wang2020simple}.
        
        In this paper, we consider the Bayesian variable selection problem with the
        continuous conjugate version of the “spike-and-slab” normal mixture prior proposed by Ročková and George (2014)\cite{EMVS}.Motivated by Ormerod et al.(2017, 2023)\cite{Ormerod2017,Ormerod2022}, we use the variational inference method to solve the Bayesian problem,
        and we also get the similar properties and results 
        as Ormerod et al.(2017, 2023).Our main contributions are as follows:\\
        (i) we show how the sparsity estimator is  induced;\\
        (ii) we prove the  consistence of the estimator and variable selection 
        in our algorithm.
        
        Jin Wang(2016)\cite{UIUC2016}
        also considers the the same prior proposed by Ročková and George (2014)\cite{EMVS},and also uses the VB to solve the problem, but this paper is different from the following points:\\
        (i) Part of the prior is different. The prior of the regression coefficient in this paper is not related to the variance of the random error term,but
        \cite{UIUC2016} is the opposite.
        In addition, we treats some parameters as hyperparameters, while \cite{UIUC2016} treats them as random variable.\\
        (ii) The algorithm is different.
        We only use VB to solve the Bayesian problem, but \cite{UIUC2016} uses
        VB and EM algorithm to solve the Bayesian problem.\\
        (iii) The assumptions and some theoretical results are different. We analyze the changes of  parameters in the iterative process of the algorithm and explains the reasons for the sparsity; but \cite{UIUC2016} has no corresponding results.
        In addition, \cite{UIUC2016} assumes that the observation value $\mathbf{x}_i \in \mathbb{R}_p(1 \leq i \leq n)$ of each independent variable is a non-random variable, but some assumptions are required for the design matrix $\boldsymbol{X}$, and the consistency result of variable selection is proved under this assumption; However, we assume that the observed value $\mathbf{x}_i \in \mathbb{R}_p(1 \leq i \leq n)$ of each independent variable is a random variable and independent and identically distributed. Under this assumption, we not only proves the consistency result of variable selection, but also prove consistency results for the estimation of the regression coefficient and the variance of the random error term.
        
\section{Linear Regression Model Setup}
        Suppose there are $n$ independent and identically distributed samples $(y_i, \mathbf{x_i}), 1 \leq i \leq n$, where $y_i$ is the response variable,
        $\mathbf{x_i} \in \mathbb{R}^p $is an independent variable, let $\mathbf{y} = (y_1, y_2, \cdot, y_n)^T $, the design matrix 
        $\mathbf{X} \in \mathbb{R}^{n\times p},$ the $i$-th row element is 
        $\mathbf{x}_i^{T}$ and
        \begin{equation}
		    \mathbf{y} | \boldsymbol{\beta},\sigma^2 \sim \mathrm{N}_n(\mathbf{X}\boldsymbol{\beta},\sigma^2\mathbf{I_n})
	    \end{equation}
	    where $\boldsymbol{\beta} = (\beta_1,\beta_2,\cdots,\beta_p)^T \in \mathbb{R}^p,$
	    $\boldsymbol{\beta}$
	    is the regression coefficient,$\sigma^2 > 0$.
	    
	    We consider the spike-and-slab prior of $\boldsymbol{\beta}$ in \cite{EMVS}
	    $$
	    \pi\left(\beta_j \mid \sigma, \gamma_j\right)= \begin{cases}\mathrm{N}\left(0,  v_0\right), & \text { if } \gamma_j=0 \\ \mathrm{~N}\left(0,  v_1\right), & \text { if } \gamma_j=1\end{cases}
	    $$
	    where $\boldsymbol{\gamma}=(\gamma_1,
	    \gamma_2,\cdots,\gamma_p)^T,\gamma_j \in \{0,1\}$ is the indicator variable,When $\gamma_j = 1$, it means that the $j$th variable should be included in the model (1).
	    And
	    $v_1 > v_0 > 0$,
	    This prior is different from Ormerod et al.( 2017), in Ormerod et al.( 2017)\cite{Ormerod2017}, 
	    $\boldsymbol{\beta}$'s distribution is
	    not related with $\boldsymbol{\gamma}$.
	    For $\boldsymbol{\gamma}$, this paper assumes that $\gamma_j(1\leq j \leq p)$ are independent and all obey the binomial distribution with mean $\rho$. Without any other prior information, we take $\rho= 0.5$. For $\sigma^2$, we consider its conjugate distribution: Inverse gamma distribution, that is, $\sigma^2 \sim \mathrm{IG}(A,B)$, same as\cite{GM97}, we take $A = B = 0.5$. Therefore, the hierarchical model in this paper can be expressed as
	    	\begin{equation}
		\begin{aligned}
			&\mathrm{\pi}(\mathbf{y} | \boldsymbol{\beta},\sigma^2) \sim \mathrm{N}_n(\mathbf{X}\boldsymbol{\beta},\sigma^2\mathbf{I_n}),\quad
			\mathrm{\pi}(\sigma^2) \sim \mathrm{IG}(A,B)\\
			&\mathrm{\pi}(\boldsymbol{\beta}|\boldsymbol{\gamma},v_0,v_1) \sim \mathrm{N}_{p}(0,\mathbf{C}_{
			\boldsymbol{\gamma}}),
			\quad \mathrm{\pi}(\boldsymbol{\gamma}|\rho) \propto 
			\rho^{\sum_{j = 1}^{p}\gamma_j}\cdot(1 - \rho)^{p-\sum_{j = 1}^{p}\gamma_j}
		\end{aligned}	
	\end{equation}
	where $\mathbf{C}_{\boldsymbol{\gamma}} = \operatorname{diag}(c_1,c_2,\ldots,c_p)$,
	$c_j = (1 - \gamma_j)v_0 + \gamma_j v_1,1\leq j\leq p $,and $\rho \in (0,1),A > 0 ,B > 0,v_1 > v_0 > 0$
	are all hyperparameters. So
	the joint distribution is 
	\begin{equation}
		\begin{aligned}
			&\mathrm{\pi}\left(\boldsymbol{\beta},\sigma^2,\boldsymbol{\gamma},\mathbf{y}\right)\\
			\propto&\mathrm{\pi}\left(\mathbf{y}|\boldsymbol{\beta},\sigma^2\right) \cdot \mathrm{\pi}\left(\boldsymbol{\beta}|\boldsymbol{\gamma},v_0,v_1\right) \cdot \mathrm{\pi}\left(\sigma^2\right)
			\cdot \pi\left(\gamma | \rho\right)\\
			\propto&{\left(\sigma^2\right)}^{-\frac{n}{2}}\cdot
			\exp\left\{-\frac{1}{2 \sigma^2}\left\|\mathbf{y}-\mathbf{X}  \boldsymbol{\beta}\right\|^2\right\}\\
			&\cdot
			\prod_{j=1}^{p}{\left[\left(1-\gamma_j\right)v_0+\gamma_{j}v_1\right]}^{-\frac{1}{2}} \cdot\exp\left\{-\frac{1}{2} \boldsymbol{\beta}^T\mathbf{C}_{
			\boldsymbol{\gamma}}^{-1}\boldsymbol{\beta}
			\right\}\\
			&\cdot 
			\rho^{\sum_{j = 1}^{p}\gamma_j}\cdot\left(1 - \rho\right)^{p-\sum_{j = 1}^{p}\gamma_j}
			\cdot {\left(\sigma^2\right)}^{-A-1}\cdot
			\exp\left\{-\frac{B}{\sigma^2} \right\}
		\end{aligned}	
	\end{equation}
	where $\|\cdot\|$ is the vector's $2$-norm.Henceforth
	\begin{equation}
		\begin{aligned}
			\ln\mathrm{\pi}(\mathbf{y},\boldsymbol{\beta},\sigma^2,\boldsymbol{\gamma}) 
			=&\left(-A-\frac{n}{2}-1\right)\ln\sigma^2-
			\frac{1}{2 \sigma^2}\left\|\mathbf{y}-\mathbf{X}  \boldsymbol{\beta}\right\|^2\\
			&-\frac{1}{2}\sum_{j=1}^{p}\ln[\left(1-\gamma_j\right)v_0+\gamma_{j}v_1] -\frac{1}{2} \boldsymbol{\beta}^T\mathbf{C}_{
			\boldsymbol{\gamma}}^{-1}\boldsymbol{\beta}\\
			&-\frac{B}{\sigma^2} + \sum_{j=1}^{p}\gamma_j\ln\left(\frac{\rho}{1-\rho}\right) + \mathrm{constant}
		\end{aligned}	
	\end{equation}
	Next, we consider the variational inference method to solve this problem.
	The choices for the factorization of
	$q\left(\boldsymbol{\beta}, \sigma^2, \boldsymbol{\gamma}\right)$is
	$q(\boldsymbol{\beta}) q\left(\sigma^2\right) \prod_{j=1}^{p} q\left(\gamma_j\right).$\\
	The density $q(\boldsymbol{\beta})$ is given by:
	\begin{equation}
	\begin{aligned}
		q(\boldsymbol{\beta}) & \propto \exp \left\{
		\mathbb{E}_{-q(\boldsymbol{\beta})}\left[
		-\frac{1}{2 \sigma^2}\|\mathbf{y}-\mathbf{X}  \boldsymbol{\beta}\|^2-\frac{1}{2} \boldsymbol{\beta}^T\mathbf{C}_{
		\boldsymbol{\gamma}}^{-1}\boldsymbol{\beta}\right]\right\} \\
		& \propto \exp \left\{-\frac{1}{2} \boldsymbol{\beta}^T\left(
		\tau\mathbf{X}^T\mathbf{X}+\frac{1}{v_0}\mathbf{I_p} + \left(\frac{1}{v_1}-\frac{1}{v_0}\right)\mathbf{W}\right) \boldsymbol{\beta}
		+\tau\boldsymbol{\beta}^T\mathbf{X}^T \mathbf{y} \right\} \\ 
		&=\mathrm{N}(\boldsymbol{\mu}, \boldsymbol{\Sigma})
	\end{aligned}
	\end{equation}
	where $\boldsymbol{\Sigma} = \left(
	\tau\mathbf{X}^T\mathbf{X}+\mathbf{D}\right)^{-1},
	\mathbf{D}=\frac{1}{v_0}\mathbf{I_p} + \left(\frac{1}{v_1}-\frac{1}{v_0}\right)\mathbf{W},
	\boldsymbol{\mu} = \tau\boldsymbol{\Sigma}\mathbf{X}^T \mathbf{y},\\
	\tau = \mathbb{E}_{q}\left[\frac{1}{\sigma^2}\right],
	\mathbf{w} =\mathbb{E}_{q}\boldsymbol{\gamma}, \mathbf{W}= \operatorname{diag}(\mathbf{w})$.\\
	The density $q\left(\sigma^2\right)$ is given by 
	\begin{equation}
	\begin{aligned}
		q\left(\sigma^2\right) \propto \exp & {\left\{
		\mathbb { E } _ { - q ( \sigma ^ { 2 } ) } \left[-\frac{n}{2} \log \left(\sigma^2\right)-\frac{1}{2 \sigma^2}\left\|\mathbf{y}-\mathbf{X\boldsymbol { \beta }}\right\|^2\right.\right.} \\
		&\left.\left.-(A+1) \log \left(\sigma^2\right)-\frac{B}{\sigma^2}\right]\right\}
	\end{aligned}
	\end{equation}
	Henceforth $q\left(\sigma^2\right) = \mathrm{IG}(A_1,B_1)$ where $A_1 = A + \frac{n}{2}$,
	$$
	\begin{aligned}
		B_1 &=B + \mathbb { E } _ { - q ( \sigma ^ { 2 } ) }\left[
		\frac{1}{2}\left\|\mathbf{y}-\mathbf{X\boldsymbol { \beta }}\right\|^2
		\right]\\
		&= B+\frac{1}{2}\left[\left\|\mathbf{y}-\mathbf{X\boldsymbol { \boldsymbol{\mu} }}\right\|^2+\operatorname{tr}\left(\mathbf{X}^T \mathbf{X} \boldsymbol{\Sigma}\right)\right].\\
	\end{aligned}
	$$
	So $\tau = \mathbb{E}_{q(\sigma^2)}\left[\frac{1}{\sigma^2}\right] = \frac{A_1}{B_1}.$\\
	Let $\mathrm{expit}(x)=\exp(x)/\left(1+\exp(x)\right)$,$\forall 1\leq j \leq p,$
	The density $q\left(\gamma_j\right)$ is given by
	\begin{equation}
	\begin{aligned}
		q(\gamma_{j}) &\propto \exp \left\{\gamma_j \mathbb{E}_{-q_{(\gamma_j)}}\left[ 
		\frac{1}{2}\ln\frac{v_0}{v_1}+\ln\frac{\rho}{1-\rho}+
		\frac{1}{2}\beta_j^2\left(\frac{1}{v_0}-\frac{1}{v_1}\right)\right]\right\}\\
		&\propto \exp \left\{\gamma_j \left[ 
		\frac{1}{2}\ln\frac{v_0}{v_1}+\ln\frac{\rho}{1-\rho}+
		\frac{1}{2}\left(\mu_j^2+\Sigma_{jj}\right)\left(\frac{1}{v_0}-\frac{1}{v_1}\right)\right]\right\}
	\end{aligned}
	\end{equation}
	So $q(\gamma_{j})=\mathrm{B}(1,w_j)$ where 
	$w_j=
	\mathrm{expit}(\eta_j),$ and 
	$$
	\eta_j = 
	\frac{1}{2}\ln\frac{v_0}{v_1}+\ln\frac{\rho}{1-\rho}+
	\frac{1}{2}\left(\mu_j^2+\Sigma_{jj}\right)\left(\frac{1}{v_0}-\frac{1}{v_1}\right).
	$$
	Combining (5)-(7), using the coordinate descent method, the specific algorithm of the hierarchical Bayes model (2) can be seen in Algorithm 1.
	\begin{algorithm}[H]
		\caption{linear regression Bayes estimation}
		\label{alg1}
		\begin{algorithmic}[1]
			\Require $(\mathbf{X},\mathbf{y},v_0,v_1,A,B,\rho)$where$
			\mathbf{X}\in 
			\mathbb{R}^{n\times p},\mathbf{y}\in\mathbb{R}^n
			,v_0 > 0,v_1 > v_0,A > 0,B > 0,\rho \in (0,1)
			$ \\
			\textbf{initial parameter value:}$\mathbf{w}^{(0)}
			=\frac{1}{2}\mathbf{1}_p,
			\tau^{(0)} = 1 $\\
			t = 1\quad;\quad $\lambda = \ln\frac{\rho}{1-\rho}$
			\While{not converge}
			\State $\mathbf{W}^{(t)}
			=\mathrm{diag}\left(\mathbf{w}^{(t-1)}\right)\quad;\quad
			\mathbf{D}^{(t)}=\frac{1}{v_0}\mathbf{I_p} + \left(\frac{1}{v_1}-\frac{1}{v_0}\right)\mathbf{W}^{(t)}$
			\State$
			\boldsymbol{\Sigma}^{(t)} = \left[
			\tau^{(t-1)}\mathbf{X}^T\mathbf{X}+\mathbf{D}^{(t)}\right]^{-1}\quad;\quad
			\boldsymbol{\mu}^{(t)} = \tau^{(t-1)}\boldsymbol{\Sigma}^{(t)}\mathbf{X}^T \mathbf{y}$
			\State $B_1
			=B+\frac{1}{2}\left[\left\|\mathbf{y}-\mathbf{X}\boldsymbol { \boldsymbol{\mu} }^{(t)}\right\|^2+\operatorname{tr}\left(\mathbf{X}^T \mathbf{X} \boldsymbol{\Sigma}^{(t)}\right)\right]
			$
			\State $\tau^{(t)} = (A + n/2)/B_{1}$
			\For{$j = 1,2,...,p$}
			\State $\eta_j^{(t)}=
			\lambda + \frac{1}{2}\ln\frac{v_0}{v_1}+
			\frac{1}{2}\left[\left(\mu_j^{(t)}\right)^2+\Sigma_{jj}^{(t)}\right]\left(\frac{1}{v_0}-\frac{1}{v_1}\right)$
			\State $w_j^{(t)}=
			\mathrm{expit}\left(\eta_j^{(t)}\right)$
			\EndFor
			\State$t = t + 1$
			\EndWhile 
			\Ensure $(\boldsymbol{\mu},\mathbf{w},\tau)$where $\boldsymbol{\mu}\in\mathbb{R}^p,	\mathbf{w}\in\mathbb{R}^p,\tau > 0$ 
		\end{algorithmic}
	\end{algorithm}
	\section{Theoretical Results}
	In this section, we will analyze the limit properties of each parameter estimation from one iteration to the next in Algorithm 1. The proof is shown in Appendix $1.2$.
	\begin{proposition}
		Given observations $\mathbf{X}, \mathbf{y}$, if $ p \leq n, \mathbf{X}$ is full Rank, then in the iterative process of algorithm $1$
		$$\lim\limits_{v_0 \to 0}\lim\limits_{w_j^{(t-1)} \to 0}\mu_j^{(t)}=0. \quad(j = 1,2,\cdots,p)$$
	\end{proposition}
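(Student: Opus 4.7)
The plan is to recognize that $\boldsymbol{\mu}^{(t)}$, as defined by lines 5--6 of Algorithm~1, is the minimizer of an explicit penalized least-squares objective, and then to use the trivial upper bound obtained by plugging in $\boldsymbol{\beta}=\mathbf{0}$ to control $\mu_j^{(t)}$.

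Concretely, I would first verify from the first-order condition $(\tau^{(t-1)}\mathbf{X}^T\mathbf{X}+\mathbf{D}^{(t)})\boldsymbol{\mu}^{(t)}=\tau^{(t-1)}\mathbf{X}^T\mathbf{y}$ that $\boldsymbol{\mu}^{(t)}=\arg\min_{\boldsymbol{\beta}} L(\boldsymbol{\beta})$, where
\begin{equation*}
L(\boldsymbol{\beta})=\frac{\tau^{(t-1)}}{2}\|\mathbf{y}-\mathbf{X}\boldsymbol{\beta}\|^2+\frac{1}{2}\boldsymbol{\beta}^T\mathbf{D}^{(t)}\boldsymbol{\beta}=\frac{\tau^{(t-1)}}{2}\|\mathbf{y}-\mathbf{X}\boldsymbol{\beta}\|^2+\frac{1}{2}\sum_{k=1}^{p}D_{kk}^{(t)}\beta_k^2.
\end{equation*}
Next I would take the inner limit $w_j^{(t-1)}\to 0$, which forces $D_{jj}^{(t)}=1/v_0$. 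Comparing $L(\boldsymbol{\mu}^{(t)})$ with $L(\mathbf{0})=\tau^{(t-1)}\|\mathbf{y}\|^2/2$ and keeping only the nonnegative $j$-th summand on the penalty side gives
\begin{equation*}
\frac{(\mu_j^{(t)})^2}{2v_0}\;\leq\;\frac{1}{2}\sum_{k=1}^{p}D_{kk}^{(t)}(\mu_k^{(t)})^2\;\leq\;L(\boldsymbol{\mu}^{(t)})\;\leq\;L(\mathbf{0})\;=\;\frac{\tau^{(t-1)}}{2}\|\mathbf{y}\|^2,
\end{equation*}
hence $|\mu_j^{(t)}|\leq \sqrt{v_0\,\tau^{(t-1)}}\,\|\mathbf{y}\|$, which tends to $0$ as $v_0\to 0$.

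The genuine work is essentially the one-line identification of $\boldsymbol{\mu}^{(t)}$ as a ridge-type minimizer; the rest is a trivial energy comparison. The main subtlety to spell out is that the bound stays meaningful under the prescribed order of limits: the other diagonal entries $D_{kk}^{(t)}$ for $k\neq j$ may themselves blow up as $v_0\to 0$, but because they appear with nonnegative coefficients on the left-hand side, they only strengthen the inequality. The hypotheses $p\leq n$ and $\mathbf{X}$ full rank are not actually needed for this particular estimate (since $\mathbf{D}^{(t)}\succ 0$ already guarantees that $\tau^{(t-1)}\mathbf{X}^T\mathbf{X}+\mathbf{D}^{(t)}$ is invertible), so a route through Sherman--Morrison or block inversion would be strictly harder, having to track how $\mathbf{A}^{-1}$ with the $j$-th penalty removed depends on $v_0$ through the remaining diagonal; the variational argument bypasses that entirely.
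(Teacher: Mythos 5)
Your argument is correct, but it takes a genuinely different and more elementary route than the paper's. The paper goes through block (Schur-complement) inversion: it first establishes uniform bounds $\tau_L\le\tau^{(t)}\le\tau_R$ on the precision iterates (its Lemma 6, which is where $p\le n$ and the full-rank hypothesis enter), then bounds $\Sigma_{jj}^{(t)}\le\bigl[\tfrac{1}{v_0}+(\tfrac{1}{v_1}-\tfrac{1}{v_0})w_j^{(t-1)}+s_j\bigr]^{-1}$ (Lemma 7) and $\mu_j^{(t)}\le c_j\Sigma_{jj}^{(t)}$ via Cauchy--Schwarz (Lemma 8), with $s_j,c_j$ free of $v_0,v_1$, before passing to the iterated limit. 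Your identification of $\boldsymbol{\mu}^{(t)}$ as the ridge-type minimizer and the energy comparison $L(\boldsymbol{\mu}^{(t)})\le L(\mathbf{0})$ collapse all of this into one line, dispense with the rank hypotheses as you note, and control $|\mu_j^{(t)}|$ directly (the paper's Lemma 8 is, strictly speaking, only a one-sided bound). What the paper's heavier route buys is a bound on $\Sigma_{jj}^{(t)}$ and $(\mu_j^{(t)})^2$ that is uniform in $v_1$ and is reused verbatim in the proof of Proposition 2, whereas your bound $|\mu_j^{(t)}|\le\sqrt{v_0\,\tau^{(t-1)}}\,\|\mathbf{y}\|$ serves only the present statement. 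One point you should make explicit: $\tau^{(t-1)}$ is produced by earlier iterations of the algorithm run with the same hyperparameter $v_0$, so it is not automatically a constant as $v_0\to0$; the paper neutralizes this through Lemma 6, and in your argument the same is available for free, since $B_1\ge B$ in Algorithm 1 gives $\tau^{(t-1)}\le(A+n/2)/B$ uniformly in $v_0$, whence $\sqrt{v_0\,\tau^{(t-1)}}\,\|\mathbf{y}\|\to0$ regardless. With that one sentence added, your proof is complete.
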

	
	\begin{proposition}
		Given observations $\mathbf{X},\mathbf{y},$ if $p \leq n,\mathbf{X}$ is full Rank, and $v_0$ is fixed, then $ \forall j = 1,2,\cdots,p$,if $w_j^{(t-1)} << 1$,
		then in the iterative process of algorithm $1$
		$$w_j^{(t)}
		\leq \mathrm{expit}\left[M_{j} + \frac{1}{2}\ln\frac{v_0}{v_1}  + \mathrm{O}(w_{j}^{(t-1)})\right]
		$$
		where $M_{j}$ is a constant independent of $v_1$.
	\end{proposition}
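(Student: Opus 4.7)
The plan is to treat the $j$-th update as a rank-one perturbation of the precision matrix in the small parameter $w_j^{(t-1)}$ and apply the Sherman--Morrison formula to separate the $w_j^{(t-1)} = 0$ baseline from the linear correction. Since $\mathrm{expit}$ is monotone, it suffices to produce an upper bound $\eta_j^{(t)} \leq M_j + \tfrac{1}{2}\ln\tfrac{v_0}{v_1} + O(w_j^{(t-1)})$ with $M_j$ free of $v_1$.

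First, I would define the reference precision matrix $\mathbf{P}_0$ to be the value of $\tau^{(t-1)}\mathbf{X}^T\mathbf{X} + \mathbf{D}^{(t)}$ obtained by replacing $w_j^{(t-1)}$ with $0$ while keeping the other coordinates of $\mathbf{w}^{(t-1)}$ and all other parameters fixed. Then $\mathbf{P}^{(t)} = \mathbf{P}_0 + a\,\mathbf{e}_j\mathbf{e}_j^T$ with $a := (\tfrac{1}{v_1} - \tfrac{1}{v_0})w_j^{(t-1)}$ and $|a| \leq w_j^{(t-1)}/v_0$. Writing $\boldsymbol{\Sigma}_0 := \mathbf{P}_0^{-1}$ and $\mu_{0,j} := \tau^{(t-1)}(\boldsymbol{\Sigma}_0\mathbf{X}^T\mathbf{y})_j$, Sherman--Morrison gives the closed forms $\Sigma_{jj}^{(t)} = \Sigma_{0,jj}/(1+a\Sigma_{0,jj})$ and $\mu_j^{(t)} = \mu_{0,j}/(1+a\Sigma_{0,jj})$. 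A geometric Taylor expansion (valid because $|a\Sigma_{0,jj}|$ is small when $w_j^{(t-1)} \ll 1$) then yields $(\mu_j^{(t)})^2 + \Sigma_{jj}^{(t)} = \mu_{0,j}^2 + \Sigma_{0,jj} + O(w_j^{(t-1)})$. Substituting into the expression for $\eta_j^{(t)}$ from Algorithm~1 and using $\tfrac{1}{v_0} - \tfrac{1}{v_1} \leq \tfrac{1}{v_0}$ produces $\eta_j^{(t)} \leq \lambda + \tfrac{1}{2}\ln\tfrac{v_0}{v_1} + \tfrac{1}{2v_0}\bigl(\mu_{0,j}^2 + \Sigma_{0,jj}\bigr) + O(w_j^{(t-1)})$.

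The critical step is then to bound $\mu_{0,j}^2 + \Sigma_{0,jj}$ by a quantity not depending on $v_1$. Since $\mathbf{D}^{(t)} \succeq 0$, one has $\mathbf{P}_0 \succeq \tau^{(t-1)}\mathbf{X}^T\mathbf{X}$, and by operator monotonicity of matrix inversion $\boldsymbol{\Sigma}_0 \preceq (\tau^{(t-1)}\mathbf{X}^T\mathbf{X})^{-1}$. The full-rank hypothesis on $\mathbf{X}$ (with $p \leq n$) makes $\mathbf{X}^T\mathbf{X}$ invertible, so $\Sigma_{0,jj} \leq [(\tau^{(t-1)}\mathbf{X}^T\mathbf{X})^{-1}]_{jj}$ is bounded uniformly in $v_1$; Cauchy--Schwarz combined with $\|\boldsymbol{\Sigma}_0\|_{\mathrm{op}} \leq 1/(\tau^{(t-1)}\lambda_{\min}(\mathbf{X}^T\mathbf{X}))$ similarly yields $|\mu_{0,j}| \leq \|\mathbf{X}^T\mathbf{y}\|/\lambda_{\min}(\mathbf{X}^T\mathbf{X})$, also free of $v_1$. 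Taking $M_j$ to be $\lambda$ plus $\tfrac{1}{2v_0}$ times this uniform bound on $\mu_{0,j}^2 + \Sigma_{0,jj}$ completes the argument by monotonicity of $\mathrm{expit}$.

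The main obstacle I anticipate is the uniformity of the $O(w_j^{(t-1)})$ remainder in $v_1$. This hinges on the same operator-monotonicity argument: because $\Sigma_{0,jj}$ is controlled by $\mathbf{X}^T\mathbf{X}$ alone, the product $|a\Sigma_{0,jj}| \leq w_j^{(t-1)}\Sigma_{0,jj}/v_0$ stays uniformly small as $v_1$ varies over $(v_0, \infty)$, which legitimizes expanding the Sherman--Morrison denominator with implicit constants that are independent of $v_1$.
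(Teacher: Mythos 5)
Your Sherman--Morrison route is a genuinely different (and in places cleaner) decomposition than the paper's, which works instead with block inverses and Schur complements: the paper first proves iteration-uniform bounds $\tau_L\leq\tau^{(t)}\leq\tau_R$ with $\tau_L,\tau_R$ depending only on $(\mathbf{X},\mathbf{y},A,B,p,\tau^{(0)})$ (its Lemma 6), then bounds $\Sigma_{jj}^{(t)}\leq\bigl[\tfrac{1}{v_0}+(\tfrac{1}{v_1}-\tfrac{1}{v_0})w_j^{(t-1)}+s_j\bigr]^{-1}$ and $\mu_j^{(t)}\leq c_j\Sigma_{jj}^{(t)}$ with $s_j$ built from $\tau_L$ and $c_j$ from $\tau_R$, and only then Taylor-expands in $w_j^{(t-1)}$. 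Your exact identities $\Sigma_{jj}^{(t)}=\Sigma_{0,jj}/(1+a\Sigma_{0,jj})$ and $\mu_j^{(t)}=\mu_{0,j}/(1+a\Sigma_{0,jj})$ are correct and replace the paper's inequality chain by closed forms.

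However, there is a genuine gap at the critical uniformity step. You bound $\Sigma_{0,jj}\leq\bigl[(\tau^{(t-1)}\mathbf{X}^T\mathbf{X})^{-1}\bigr]_{jj}$ and declare this ``bounded uniformly in $v_1$,'' but $\tau^{(t-1)}$ is an iterate of the algorithm and depends on $v_1$ (and on $t$) through the whole trajectory; if $\tau^{(t-1)}$ could degenerate toward $0$ as $v_1$ grows, both your candidate $M_j$ and the implicit constant in your $\mathrm{O}(w_j^{(t-1)})$ remainder (which involves $\Sigma_{0,jj}$ and $1/(1+a\Sigma_{0,jj})$) would depend on $v_1$, defeating the statement and its intended use in the sparsity remark where $v_1\to\infty$. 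This is exactly what the paper's Lemma 6 is for, and you neither prove nor invoke an analogue. (Your bound $|\mu_{0,j}|\leq\|\mathbf{X}^T\mathbf{y}\|/\lambda_{\min}(\mathbf{X}^T\mathbf{X})$ is fine, since there $\tau^{(t-1)}$ cancels.) The cheapest repair inside your framework avoids $\tau$ altogether: since $\mathbf{P}_0\succeq\mathbf{D}_0$ with $\mathbf{D}_0$ diagonal and $j$-th entry $1/v_0$, operator monotonicity gives $\Sigma_{0,jj}\leq v_0$, which is independent of both $\tau^{(t-1)}$ and $v_1$ and also makes $|a\Sigma_{0,jj}|\leq w_j^{(t-1)}$ uniformly small; alternatively, establish $v_1$-independent bounds $\tau_L\leq\tau^{(t)}\leq\tau_R$ as the paper does and cite them where you currently assert uniformity.
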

	\begin{lemma}
		$\left(\text{Proof is in Appendix} 1.1\right)$
		if $x > 0$,we have the quantities $\mathrm{expit}(-x) = \exp(-x)+\mathrm{O}(\exp(-2x))$ and 
		$\mathrm{expit}(x) = 1 - \exp(-x)+\mathrm{O}(\exp(-2x)).$
	\end{lemma}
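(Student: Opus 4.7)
The plan is to do a direct Taylor/geometric series expansion of the sigmoid function in the variable $u=\exp(-x)$, which lies in $(0,1)$ whenever $x>0$. I would first rewrite
$$
\mathrm{expit}(-x) \;=\; \frac{\exp(-x)}{1+\exp(-x)} \;=\; \frac{u}{1+u},
$$
and then apply the identity $\frac{u}{1+u}=u-\frac{u^{2}}{1+u}$. Since $0<u<1$, we have $0<\frac{u^{2}}{1+u}\le u^{2}=\exp(-2x)$, which immediately gives
$$
\mathrm{expit}(-x) \;=\; \exp(-x) \;-\; \frac{\exp(-2x)}{1+\exp(-x)} \;=\; \exp(-x) + \mathrm{O}\!\left(\exp(-2x)\right).
$$

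For the second formula I would exploit the elementary symmetry $\mathrm{expit}(x)+\mathrm{expit}(-x)=1$, which follows directly from the definition $\mathrm{expit}(x)=\exp(x)/(1+\exp(x))$. Substituting the expansion already obtained,
$$
\mathrm{expit}(x) \;=\; 1 - \mathrm{expit}(-x) \;=\; 1 - \exp(-x) + \mathrm{O}\!\left(\exp(-2x)\right),
$$
which is exactly the claimed expansion. Alternatively one could expand $\frac{1}{1+u}$ directly as $1-u+\frac{u^{2}}{1+u}$, but routing through the symmetry relation avoids repeating the same bound.

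There is no real obstacle here; the only point that requires any care is making the $\mathrm{O}$-constant explicit (and independent of $x$), which is handled by the inequality $\frac{u^{2}}{1+u}\le u^{2}$ valid on $(0,1)$. This constant being uniform in $x>0$ is what allows Proposition~2 to absorb the higher-order term into a single $\mathrm{O}(w_{j}^{(t-1)})$ correction when $w_{j}^{(t-1)}\ll 1$ is translated back into a bound on the logit variable $\eta_j^{(t)}$.
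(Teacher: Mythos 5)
Your proof is correct. The paper itself gives no argument for this lemma --- it simply defers to Ormerod (2017), Appendix B --- and the argument there is the same elementary expansion you use, so your route is essentially the intended one, with the merit of being self-contained: writing $u=\exp(-x)\in(0,1)$, using $\frac{u}{1+u}=u-\frac{u^{2}}{1+u}$ with the uniform bound $0<\frac{u^{2}}{1+u}\leq u^{2}=\exp(-2x)$, and then obtaining the second expansion from the symmetry $\mathrm{expit}(x)+\mathrm{expit}(-x)=1$ is exactly what is needed, and your remark that the $\mathrm{O}$-constant is uniform in $x>0$ is the right point to make explicit for its later use in Proposition~2.
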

	\noindent{}\textbf{Remark:}
	From Proposition 1, we can see that in the iterative process, when $w_j^{(t-1)}$ is close to $0$, by selecting a suitable $v_0$ and making $v_0$ small enough, the estimated value of $\boldsymbol{\beta}_j$ in the next iteration process $\mu_j^{(t)}\approx0$.Similarly,it can be known from Proposition 2 and Lemma 1 that when $w_j^{(t-1)}$ takes a small value, then
	$$
		\mathrm{expit}\left[M_{j} + \frac{1}{2}\ln\frac{v_0}{v_1}  + \mathrm{O}(w_j^{(t-1)})\right]
		\approx \exp(M_{j} + \frac{1}{2}\ln\frac{v_0}{v_1})
	$$
	When $v_0$ is fixed and $v_1$ is large enough, it can be seen that the above formula is approximately $0$, so $w_j^{(t)}\approx 0 .$ This explains how Algorithm 1 induces the sparsity of the inclusion probability $\mathbf{w}$ and regression coefficient $\boldsymbol{\beta}$.
	
	In order to establish the consistency results of this paper, we consider the same assumptions in  \cite{Ormerod2017,2014AUS,Ormerod2022}:
	\begin{assumption}
	For $1 \leq i \leq n$ the $y_i \mid \mathbf{x}_i=\mathbf{x}_i^T \boldsymbol{\beta}_0+\varepsilon_i$ where $\varepsilon_i$ and $\varepsilon_j$ are independent if $i \neq j, \mathbb{E}\left(\varepsilon_i\right)=0, \operatorname{Var}\left(\varepsilon_i\right)=\sigma_0^2$ and $0<\sigma_0^2<\infty, \boldsymbol{\beta}_0$ are the true values of $\boldsymbol{\beta}$ and $\sigma^2$ with $\boldsymbol{\beta}_0$ being element-wise finite;
	\end{assumption}
    \begin{assumption}
    for $1 \leq i \leq n$ the random variables $\mathbf{x}_i \in \mathbb{R}^p$ are independent and identically distributed.
    \end{assumption}
    \begin{assumption}
    the $p \times p$ matrix $\mathbf{S} \equiv \mathbb{E}\left(\mathbf{x}_i \mathbf{x}_i^T\right)$ is element-wise finite and $\mathbf{X}=\left[\mathbf{X}_1, \ldots, \mathbf{X}_p\right]$ where $\operatorname{rank}(\mathbf{X})=p$;
    \end{assumption}
    \begin{assumption}
    for $1 \leq i \leq n$ the random variables $\mathbf{x}_i$ and $\varepsilon_i$ are independent.
    \end{assumption}
    The first four assumptions are the same as \cite{2014AUS,Ormerod2017,Ormerod2022}.
    \begin{assumption}
        The number of variables $p$  increases as n increases and satisfies 
    $p = O_p(n^{\frac{1}{12}}).$
    \end{assumption}
    \begin{assumption}
    $v_0$ satisfies \\
    (i)$0 < v_0 < \min\left(v_1\exp\left(-2\lambda\right),v_1\right)$;\\
    (ii)Let $S^\star = \{j:\beta_{0,j} \neq 0,1 \leq j \leq p \}$,
	$\hat{S}_n = \{j:w_{j} > 0.5,1 \leq j \leq p \},$
	if $S^\star$ is not empty set, let $l_0 =\min\{|\beta_{0,j}|: 
	j \in S^\star
	\},\delta > 0$,$v_0$ satisfies
	$$0 < v_0 < l_0^2,\quad
	\frac{l_0^2}{v_0} + \ln{v_0}
	\geq \ln v_1+\frac{l_0^2}{v_1}-2\lambda + 2\delta.
	$$
    \end{assumption}
    \noindent{}\textbf{Remark:}When $0<v_0 < l_0^2$, $f(v_0)=\frac{l_0^2}{v_0} + \ln{v_0}$ decreases monotonously with respect to $v_0$, while $v_0 \to 0$
    , $f(v_0) \to +\infty$, as long as $v_0$ takes a smaller value, the condition (ii) of assumption 6 can be satisfied, so assumption 6
    The conditions (i) and (ii) of can be satisfied at the same time and are not contradictory. We will construct the following consistency results based on the above assumptions. The proof is shown in Appendix 2.
    \begin{theorem}
    (Consistency of Coefficient Estimates) Under assumptions $1-4$, $\boldsymbol{\mu}$ and $1/\tau$ obtained by Algorithm 1 are consistent estimates of the true value of the regression coefficient $\boldsymbol{\beta}_0$ and the variance of the random error term $\sigma_0^2$.
    \end{theorem}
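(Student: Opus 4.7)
The plan is to reduce the consistency claim to classical limit theorems applied to the fixed-point equations of Algorithm 1, with the main work being a two-sided boundedness argument for the scalar $\tau$ that breaks the coupling between $\boldsymbol{\mu}$ and $\tau$. Under assumptions 2 and 3 the strong law of large numbers gives $n^{-1}\mathbf{X}^T\mathbf{X}\xrightarrow{\text{a.s.}}\mathbf{S}$ with $\mathbf{S}$ positive definite (the rank condition forces non-degeneracy of the law of $\mathbf{x}_i$ in the relevant directions); under assumptions 1 and 4, $n^{-1}\mathbf{X}^T\boldsymbol{\varepsilon}\xrightarrow{P}\mathbf{0}$ and $n^{-1}\|\boldsymbol{\varepsilon}\|^2\xrightarrow{\text{a.s.}}\sigma_0^2$. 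Since $p$ is fixed under these assumptions, the OLS residual projection $\mathbf{P}^{\perp}=\mathbf{I}_n-\mathbf{X}(\mathbf{X}^T\mathbf{X})^{-1}\mathbf{X}^T$ (defined eventually) satisfies $n^{-1}\|\mathbf{P}^{\perp}\boldsymbol{\varepsilon}\|^2\xrightarrow{P}\sigma_0^2$.

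At any fixed point of Algorithm 1,
\[
\boldsymbol{\mu}=\bigl(\mathbf{X}^T\mathbf{X}+\tau^{-1}\mathbf{D}\bigr)^{-1}\mathbf{X}^T\mathbf{y},\qquad \frac{1}{\tau}=\frac{B+\tfrac{1}{2}\|\mathbf{y}-\mathbf{X}\boldsymbol{\mu}\|^2+\tfrac{1}{2}\operatorname{tr}(\mathbf{X}^T\mathbf{X}\boldsymbol{\Sigma})}{A+n/2},
\]
with $\mathbf{D}$ having spectrum in $[1/v_1,1/v_0]$ and hence operator norm bounded independently of $n$. Substituting $\mathbf{y}=\mathbf{X}\boldsymbol{\beta}_0+\boldsymbol{\varepsilon}$ and rearranging I would obtain the decomposition
\[
\boldsymbol{\mu}-\boldsymbol{\beta}_0=-\tau^{-1}\bigl(\mathbf{X}^T\mathbf{X}+\tau^{-1}\mathbf{D}\bigr)^{-1}\mathbf{D}\boldsymbol{\beta}_0+\bigl(\mathbf{X}^T\mathbf{X}+\tau^{-1}\mathbf{D}\bigr)^{-1}\mathbf{X}^T\boldsymbol{\varepsilon},
\]
so that once $\tau$ is pinned between two positive constants in probability, Slutsky immediately gives $\boldsymbol{\mu}\xrightarrow{P}\boldsymbol{\beta}_0$: the first term is $O_p(n^{-1})$ in norm and the second is $\mathbf{S}^{-1}\cdot O_p(n^{-1/2})$.

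The main obstacle is establishing $\tau=\Theta_p(1)$ without already knowing the consistency of $\boldsymbol{\mu}$. For the upper bound, unconstrained minimisation of the least-squares criterion gives $\|\mathbf{y}-\mathbf{X}\boldsymbol{\mu}\|^2\ge\|\mathbf{P}^{\perp}\boldsymbol{\varepsilon}\|^2$, forcing $\tau\le 2(A+n/2)/\|\mathbf{P}^{\perp}\boldsymbol{\varepsilon}\|^2\to 1/\sigma_0^2$. For the lower bound, contracting the normal equations with $\boldsymbol{\mu}$ yields $\|\mathbf{X}\boldsymbol{\mu}\|^2+\tau^{-1}\boldsymbol{\mu}^T\mathbf{D}\boldsymbol{\mu}=\boldsymbol{\mu}^T\mathbf{X}^T\mathbf{y}$, so $\|\mathbf{X}\boldsymbol{\mu}\|\le\|\mathbf{y}\|$ and hence $\|\boldsymbol{\mu}\|=O_p(1)$ via $\lambda_{\min}(n^{-1}\mathbf{X}^T\mathbf{X})\to\lambda_{\min}(\mathbf{S})>0$; therefore $\|\mathbf{y}-\mathbf{X}\boldsymbol{\mu}\|^2=O_p(n)$, $B_1=O_p(n)$, and $\tau$ is bounded away from $0$.

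With $\tau=\Theta_p(1)$ established, the displayed decomposition delivers $\boldsymbol{\mu}\xrightarrow{P}\boldsymbol{\beta}_0$. Feeding this back, $n^{-1}\|\mathbf{y}-\mathbf{X}\boldsymbol{\mu}\|^2=n^{-1}\|\boldsymbol{\varepsilon}\|^2+o_p(1)\xrightarrow{P}\sigma_0^2$ after expanding the square and bounding the cross and quadratic corrections by $O_p(n^{-1/2})\cdot O_p(n^{-1/2})=o_p(1)$, while $n^{-1}\operatorname{tr}(\mathbf{X}^T\mathbf{X}\boldsymbol{\Sigma})\le p/(n\tau)=o_p(1)$ from $\boldsymbol{\Sigma}\preceq(\tau\mathbf{X}^T\mathbf{X})^{-1}$ in the Loewner order. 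Substituting into the formula for $1/\tau$ yields $1/\tau\xrightarrow{P}\sigma_0^2$, completing the argument.
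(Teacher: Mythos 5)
Your proposal is correct and follows the same overall architecture as the paper's proof: law-of-large-numbers facts for $n^{-1}\mathbf{X}^T\mathbf{X}$, $n^{-1}\mathbf{X}^T\boldsymbol{\varepsilon}$, $n^{-1}\|\boldsymbol{\varepsilon}\|^2$; two-sided stochastic boundedness of $\tau$; Slutsky for $\boldsymbol{\mu}$; then the residual expansion plus the trace bound for $1/\tau$. The differences are in how the middle step is packaged. The paper proves the statement for every iterate $t$, sandwiching $\tau^{(t)}\in[\tau_L,\tau_R]$ by deterministic bounds (its Lemmas 5--6, with the induction and the $O_p(1)$ conclusion deferred to Ormerod (2017)), and then passes to the limit via $\boldsymbol{\mu}^{(t)}=[n^{-1}\mathbf{X}^T\mathbf{X}+(n\tau^{(t-1)})^{-1}\mathbf{D}^{(t)}]^{-1}\mathbf{b}_n\to\mathbf{S}^{-1}\mathbf{S}\boldsymbol{\beta}_0$; you instead work at a fixed point, obtain the upper bound from the same least-squares inequality, and replace the paper's matrix-inequality lemma ($\|\mathbf{X}\boldsymbol{\mu}\|^2\le\mathbf{y}^T\mathbf{X}(\mathbf{X}^T\mathbf{X})^{-1}\mathbf{X}^T\mathbf{y}$) by an elementary Cauchy--Schwarz contraction of the normal equations, and your explicit bias/noise decomposition of $\boldsymbol{\mu}-\boldsymbol{\beta}_0$ gives rates ($O_p(n^{-1})+O_p(n^{-1/2})$) where the paper only records consistency. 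Your route is more self-contained; the paper's per-iteration route has the advantage of not presupposing that the algorithm's output is exactly a fixed point. One small fill-in you should make explicit: in the lower bound for $\tau$, $B_1$ also contains $\tfrac12\operatorname{tr}(\mathbf{X}^T\mathbf{X}\boldsymbol{\Sigma})$, and at that stage you cannot yet invoke $\operatorname{tr}(\mathbf{X}^T\mathbf{X}\boldsymbol{\Sigma})\le p/\tau$ without circularity; either bound it by $v_1\operatorname{tr}(\mathbf{X}^T\mathbf{X})=O_p(n)$ using $\boldsymbol{\Sigma}\preceq\mathbf{D}^{-1}\preceq v_1\mathbf{I}_p$, or multiply the fixed-point identity through by $\tau$ and use $\tau\operatorname{tr}(\mathbf{X}^T\mathbf{X}\boldsymbol{\Sigma})\le p$ (this is exactly how the $p$-dependent term in the paper's $\tau_L$ arises); with that one line, $B_1=O_p(n)$ and your argument goes through.
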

    \begin{theorem}
    (Consistency of Variable Selection) Under assumptions 1 - 5, we have 
    $$\operatorname{P}\left(\hat{S}_n = 
	S^\star\right) \to 1 \quad (n \to +\infty).
	$$
    \end{theorem}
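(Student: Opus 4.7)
The plan is to translate the event $\hat{S}_n = S^\star$ into sign conditions on the variational logits $\eta_j$ and then use Theorem~1 together with Assumption~6 to show that each sign condition holds with probability tending to one. Since $w_j = \mathrm{expit}(\eta_j)$ is strictly increasing, $\{w_j > 0.5\} = \{\eta_j > 0\}$ and $\{w_j \le 0.5\} = \{\eta_j \le 0\}$, hence
\[
\{\hat{S}_n = S^\star\} = \bigcap_{j \in S^\star}\{\eta_j > 0\} \cap \bigcap_{j \notin S^\star}\{\eta_j \le 0\}.
\]
A union bound therefore reduces the problem to showing $P(\eta_j \le 0) \to 1$ for null coordinates and $P(\eta_j > 0) \to 1$ for signal coordinates, fast enough to absorb the factor $p = O_p(n^{1/12})$ from Assumption~5.

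I would first collect two uniform rates. Theorem~1 gives consistency of $\boldsymbol{\mu}$ for $\boldsymbol{\beta}_0$; under Assumptions~1--4 I would sharpen this to a coordinate-wise bound $\|\boldsymbol{\mu} - \boldsymbol{\beta}_0\|_\infty = O_p(\sqrt{\log p / n})$, either by a sub-Gaussian maximal inequality applied to the residual noise $\mathbf{X}^T \boldsymbol{\varepsilon}$ or, exploiting the very slow growth in Assumption~5, by a crude Chebyshev-plus-union-bound argument on finite moments. Similarly $\tau = \sigma_0^{-2} + o_p(1)$. From $\boldsymbol{\Sigma} = (\tau \mathbf{X}^T \mathbf{X} + \mathbf{D})^{-1}$, Assumption~3 gives $n^{-1}\mathbf{X}^T \mathbf{X} \to \mathbf{S}$ and $\mathbf{D}$ stays bounded, so $\max_j \Sigma_{jj} = O_p(1/n)$. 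Combining the two yields $\mu_j^2 + \Sigma_{jj} = \beta_{0,j}^2 + o_p(1)$ uniformly in $j$, with error controllable at a rate far faster than $1/p$.

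Plugging this into the fixed-point expression for $\eta_j$ splits into the two expected cases. For $j \notin S^\star$, $\beta_{0,j} = 0$ and $\mu_j^2 + \Sigma_{jj} = o_p(1)$, so
\[
\eta_j = \tfrac{1}{2}\ln(v_0/v_1) + \lambda + o_p(1);
\]
Assumption~6(i), $v_0 < v_1 e^{-2\lambda}$, makes the deterministic part strictly negative, giving $P(\eta_j \le 0) \to 1$. For $j \in S^\star$, $\beta_{0,j}^2 \ge l_0^2$ and $1/v_0 - 1/v_1 > 0$ yield
\[
\eta_j \ge \tfrac{1}{2}\ln(v_0/v_1) + \lambda + \tfrac{1}{2}l_0^2\bigl(1/v_0 - 1/v_1\bigr) + o_p(1),
\]
and Assumption~6(ii) is precisely the algebraic statement that the deterministic part is at least $\delta > 0$; so $P(\eta_j > 0) \to 1$. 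A union bound over the $p = O_p(n^{1/12})$ coordinates then closes the argument.

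The main obstacle is the uniformity in $j$. Pointwise, the arguments above merely replay Theorem~1, but I need to promote the $O_p$ bounds on $\boldsymbol{\mu}$ and on $\mathrm{diag}(\boldsymbol{\Sigma})$ to statements that hold simultaneously across all $p$ indices, which is where the growth restriction in Assumption~5 becomes essential. Once these uniform rates are established, the deterministic margin structure imposed by Assumption~6 -- strict negativity of the null logit and a $\delta$-margin of positivity for the signal logits -- cleanly absorbs the residual $o_p(1)$ fluctuations and delivers the union-bound inequality.
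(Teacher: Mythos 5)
Your proposal follows essentially the same route as the paper's proof: rewrite $\{\hat S_n = S^\star\}$ as sign conditions on the logits $\eta_j$ (since $w_j=\mathrm{expit}(\eta_j)$), feed in Theorem~1 ($\mu_j \stackrel{P}{\to} \beta_{0,j}$) together with $\Sigma_{jj}\stackrel{P}{\to}0$, and use the condition $v_0<v_1 e^{-2\lambda}$ for a strictly negative margin at null coordinates and the $\delta$-margin condition on $v_0$ for the signal coordinates (the paper labels these as Assumption~5 in its proof but they are the Assumption~6 conditions, exactly as you use them). The only real difference is that you insist on uniformity over the $p=O_p(n^{1/12})$ coordinates via uniform rates plus a union bound, where the paper argues purely coordinate-wise; that extra care is reasonable, though since Assumptions~1--4 grant only finite variance of $\varepsilon_i$ the sub-Gaussian maximal inequality you mention is not available and you would have to rely on the Chebyshev-plus-union-bound variant you offer as the alternative.
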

    \section{Collapsed Variational Inference}
    \subsection{Model setup}
    \begin{equation}
		\begin{aligned}
			&\mathrm{\pi}(\mathbf{y} | \boldsymbol{\beta},\sigma^2) \sim \mathrm{N}_n(\mathbf{X}\boldsymbol{\beta},\sigma^2\mathbf{I_n}),\quad
			\mathrm{\pi}(\sigma^2) \sim \mathrm{IG}(A,B)\\
			&\mathrm{\pi}(\boldsymbol{\beta}|\boldsymbol{\gamma},\sigma^2 ) \sim \mathrm{N}_{p}(0,\sigma^2\mathbf{C}_{
			\boldsymbol{\gamma}}),
			\quad \mathrm{\pi}(\boldsymbol{\gamma}) \propto 
			\rho^{\sum_{j = 1}^{p}\gamma_j}\cdot(1 - \rho)^{p-\sum_{j = 1}^{p}\gamma_j}
		\end{aligned}	
	\end{equation}
    Note: The difference between this model and (2) lies in the prior variance of $\boldsymbol{\beta}$.\\
    Using the collapsed variational inference, we have
    \begin{equation}
		\begin{aligned}
			\pi(\mathbf{y}, \boldsymbol{\gamma})\propto & \int_{\boldsymbol{\beta}, \sigma^2}  \mathrm{\pi}\left(\mathbf{y}|\boldsymbol{\beta},\sigma^2\right) \cdot \mathrm{\pi}\left(\boldsymbol{\beta}|\boldsymbol{\gamma},\sigma^2\right) \cdot \mathrm{\pi}\left(\sigma^2\right)
			\cdot \pi\left(\boldsymbol{\gamma}\right)
			d\boldsymbol{\beta} d \sigma^2 \\
			\propto & \pi(\boldsymbol{\gamma})\cdot
			\prod_{j=1}^{p}{\left[\left(1-\gamma_j\right)v_0+\gamma_{j}v_1\right]}^{-\frac{1}{2}}\\
			&\cdot\int_{\sigma^2}\left(\sigma^2\right)^{-(A+(n+p) / 2+1)} \exp \left[-\frac{1}{\sigma^2}
			\left(B+\frac{\|\mathbf{y}\|^2}{2}\right) \right] \\
			& \cdot\left[\int_{\boldsymbol{\beta}} \exp \left(-\frac{1}{2} \boldsymbol{\beta}^{\top}\left[\frac{ \mathbf{X}^{\top} \mathbf{X} }{\sigma^2}+\frac{\mathbf{C}_{
			\boldsymbol{\gamma}}^{-1}}{\sigma^2 } \right] \boldsymbol{\beta}+\frac{\mathbf{y}^{\top} \mathbf{X}}{\sigma^2} \boldsymbol{\beta}\right) d \boldsymbol{\beta}\right] d \sigma^2 \\
			\propto & \pi(\boldsymbol{\gamma}) \cdot
			\prod_{j=1}^{p}{\left[\left(1-\gamma_j\right)v_0+\gamma_{j}v_1\right]}^{-\frac{1}{2}}\\
			&\cdot
			\int_{\sigma^2}\left|\left(\mathbf{X}^{\top}\mathbf{X}+ \mathbf{\mathbf{C}_{
			\boldsymbol{\gamma}}^{-1}}\right)\right|^{-1 / 2}\left(\sigma^2\right)^{-A-n / 2-1} \\
			& \cdot \exp \left[-\frac{1}{\sigma^2}\left(B+\frac{1}{2}\|\mathbf{y}\|^2-\frac{1}{2} \mathbf{y}^{\top} \mathbf{X} \left(\mathbf{X}^{\top} \mathbf{X} +\mathbf{C}_{
			\boldsymbol{\gamma}}^{-1}\right)^{-1}  \mathbf{X}^{\top} \mathbf{y}\right)\right] d \sigma^2 \\
			\propto & \pi(\boldsymbol{\gamma}) \cdot
			\prod_{j=1}^{p}{\left[\left(1-\gamma_j\right)v_0+\gamma_{j}v_1\right]}^{-\frac{1}{2}}
			\cdot \left|\left( \mathbf{X}^{\top} \mathbf{X}+\mathbf{C}_{
			\boldsymbol{\gamma}}^{-1}\right)\right|^{-1 / 2} \\
			& \cdot\left[B+\frac{1}{2}\|\mathbf{y}\|^2-\frac{1}{2} \mathbf{y}^{\top} \mathbf{X} \left( \mathbf{X}^{\top} \mathbf{X} +\mathbf{C}_{
			\boldsymbol{\gamma}}^{-1}\right)^{-1}  \mathbf{X}^{\top} \mathbf{y}\right]^{-A-n / 2} . 
		\end{aligned}
	\end{equation}
    Take $q\left( \boldsymbol{\gamma}\right)=\prod_{j=1}^{p} q\left(\gamma_j\right)$, also use the coordinate descent method, then
    \begin{equation}
		\nonumber
		\begin{aligned}
			\log q\left(\gamma_j\right) 
			=&\mathbb{E}_{-q\left(\gamma_j\right)}[\log \mathrm{\pi}(\mathbf{y}, \boldsymbol{\gamma})] + c \\
			=&\gamma_j \left(\operatorname{logit}(\rho) + \frac{1}{2}\ln\frac{v_0}{v_1}\right)
			-\frac{1}{2} \mathbb{E}_{-\mathrm{q}\left(\gamma_j\right)}\log \left| \mathbf{X}^{\top} \mathbf{X}+\mathbf{C}_{
			\boldsymbol{\gamma}}^{-1}\right| \\
			&-\left(A+\frac{n}{2}\right) \mathbb{E}_{-q\left(\gamma_j\right)} \log \left(B+\frac{1}{2}\|\mathbf{y}\|^2-\frac{1}{2} \mathbf{y}^{\top} \mathbf{X} \left(\mathbf{X}^{\top} \mathbf{X}+\mathbf{C}_{
			\boldsymbol{\gamma}}^{-1}\right)^{-1}  \mathbf{X}^{\top} \mathbf{y}\right)+ c
		\end{aligned}
	\end{equation}
    
	\begin{equation}
		\nonumber
		\begin{aligned}
			&\mathbb{E}_{-q\left(\gamma_j\right)}\log \left| \mathbf{X}^{\top} \mathbf{X}+\mathbf{C}_{
			\boldsymbol{\gamma}}^{-1}\right| \\
			& \approx\mathbb{E}_{-q\left(\gamma_j\right)}\left\{|\boldsymbol{\gamma}| \log (n)-(p-|\gamma|) \log \left(v_0\right)+\log \left|\mathbf{S}_{\gamma,\gamma}+\mathbf{O}_p^m\left(n^{-1 / 2}\right)\right|\right\} \\
			& =\left(\sum_{k \neq j} w_k+\gamma_j\right) \log (n)-\left(p-\sum_{k \neq j} w_k-\gamma_j\right) \log \left(v_0\right)+\left(\sum_{k \neq j} w_k+\gamma_j\right) O_p(1),
		\end{aligned}
	\end{equation}
    Set $k\in\{0,1\}$, then when $\gamma_{j}=k$, there is
$$
	\begin{aligned}
		& \mathbb{E}_{-q\left(\gamma_j\right)} \log \left(B+\frac{1}{2}\|\mathbf{y}\|^2-\frac{1}{2} \mathbf{y}^{\top} \mathbf{X} \left(\mathbf{X}^{\top} \mathbf{X}+\mathbf{C}_{
		\boldsymbol{\gamma}}^{-1}\right)^{-1}  \mathbf{X}^{\top} \mathbf{y}\right) \\
		& \approx \log \left(B+\frac{1}{2}\|\mathbf{y}\|^2-\frac{1}{2} \mathbf{y}^{\top} \mathbf{X} \left(\mathbf{X}^{\top} \mathbf{X}+
		\frac{1}{v_0}\mathbf{I}+\left(\frac{1}{v_1}-\frac{1}{v_0}\right)\mathbf{W}^{(jk)}\right)^{-1}  \mathbf{X}^{\top} \mathbf{y}\right)
	\end{aligned}
	$$
    Where $\mathbf{W}^{(jk)}$ represents the matrix obtained by replacing the $j$th diagonal element of the diagonal matrix $\mathbf{W}$ with $k$.
    So
    \begin{equation}
		\begin{aligned}
			q\left(\gamma_j=k\right) \propto&
			\exp \left[k \operatorname{logit}(\rho)+\frac{k}{2}\log\frac{v_0}{v_1}
			-\frac{k}{2} \log (n)-\frac{k}{2}\log \left(v_0\right)\right. \\
			& \left.+kO_p(1)-\left(A+\frac{n}{2}\right) \log \left(B+\frac{n}{2} \hat{\sigma}_{j k}^2\right)\right] \\
			\propto& \exp \left[-B_{jk}+k \alpha+O_p(1)\right]
		\end{aligned}
	\end{equation}
    where
    $B_{jk}=\left(A+\frac{n}{2}\right) \log \left(B+\frac{n}{2} \hat{\sigma}_{j k}^2\right),\hat{\sigma}_{j k}^2=\frac{1}{n}\left[\|\mathbf{y}\|^2-\mathbf{y}^{\top} \mathbf{X} \boldsymbol{\mu}^{(jk)}\right]
	, \boldsymbol{\mu}^{(jk)}=\left(\mathbf{X}^{\top} \mathbf{X}+\frac{1}{v_0}\mathbf{I}+\left(\frac{1}{v_1}-\frac{1}{v_0}\right)\mathbf{W}^{(jk)}\right)^{-1} \mathbf{X}^{\top} \mathbf{y},$
	$\alpha = \operatorname{logit}(\rho) -\frac{1}{2} \log n - \frac{1}{2}\log v_1, k\in\{0,1\}.$
    Therefore
    $$
	w_j \approx\left[1+\exp \left(- \mathrm{B}_{j 0}+\mathrm{B}_{j 1}-\alpha \right)\right]^{-1} .
	$$
    \begin{algorithm}[H]
		\caption{Collapsed Variational Inference}
		\label{algCVB}
		\begin{algorithmic}[1]
			\Require $(\mathbf{X},\mathbf{y},v_0,v_1,A,B,\rho)$where$
			\mathbf{X}\in 
			\mathbb{R}^{n\times p},\mathbf{y}\in\mathbb{R}^n
			,v_0 > 0,v_1 > v_0,A > 0,B > 0,\rho \in (0,1)
			$ \\
		    \textbf{initial value:}$\mathbf{w}^{(0)}
			=\frac{1}{2}\mathbf{1}_p$\\
			t = 1\quad;\quad $\alpha = \operatorname{logit}(\rho) -\frac{1}{2} \log n - \frac{1}{2}\log v_1$
			\While{The convergence condition is not met}
			\State $\mathbf{W}
			=\mathrm{diag}\left(\mathbf{w}^{(t-1)}\right)$
			\For{$j = 1,2,...,p$}
			\State $\boldsymbol{\mu}^{(j0)}=\left(\mathbf{X}^{\top} \mathbf{X}+\frac{1}{v_0}\mathbf{I}+\left(\frac{1}{v_1}-\frac{1}{v_0}\right)\mathbf{W}^{(j0)}\right)^{-1} \mathbf{X}^{\top} \mathbf{y}$
			\State 
			$T_{j0} = -\left(A+\frac{n}{2}\right) \log \left(B+\frac{1}{2}\|\mathbf{y}\|^2-\frac{1}{2} \mathbf{y}^{\top} \mathbf{X} \boldsymbol{\mu}^{(j 0)}\right)$
			\State $\boldsymbol{\mu}^{(j1)}=\left(\mathbf{X}^{\top} \mathbf{X}+\frac{1}{v_0}\mathbf{I}+\left(\frac{1}{v_1}-\frac{1}{v_0}\right)\mathbf{W}^{(j1)}\right)^{-1} \mathbf{X}^{\top} \mathbf{y}$
			\State 
			$T_{j1} = -\left(A+\frac{n}{2}\right) \log \left(B+\frac{1}{2}\|\mathbf{y}\|^2-\frac{1}{2} \mathbf{y}^{\top} \mathbf{X} \boldsymbol{\mu}^{(j1)}\right) + \alpha$
			\State $w_j^{(t)}=
			1 /\left(1+\exp \left(T_{j 0}-T_{j 1}\right)\right)$
			\EndFor
			\State$t = t + 1$
			\EndWhile 
			\Ensure $\mathbf{w}\in\mathbb{R}^p$ 
		\end{algorithmic}
	\end{algorithm}
    
    \begin{theorem}
    $\left(\text{Variable selection consistency}\right)$ Under the assumption of $1-5$, $v_0 = O(n^{\frac{1}{2}})$, $\mathbf{w}$ obtained by algorithm $2$ satisfies $$w_j= \begin{cases}1 -\left(1+\exp \left|O_p(n^{\frac{1}{2}})\right|\right)^{-1}, & j \in \boldsymbol{\gamma}_0, \\ O_p\left(n^{-1}\right), & j \notin \boldsymbol{\gamma}_0.\end{cases}$$ where $\boldsymbol{\gamma}_0$ is all non-zero regression A collection of coefficient subscripts.
    \end{theorem}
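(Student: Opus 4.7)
The plan is to start from the explicit update derived for Algorithm~2,
\begin{equation*}
w_j = \frac{1}{1+\exp(T_{j0}-T_{j1})},\qquad T_{j0}-T_{j1} = B_{j0}-B_{j1}-\alpha,
\end{equation*}
and analyse the dominant term of the exponent separately in the cases $j\in\boldsymbol{\gamma}_0$ and $j\notin\boldsymbol{\gamma}_0$. Using $B_{jk}=(A+n/2)\log(B+\tfrac{n}{2}\hat\sigma_{jk}^2)$ together with the identity
\begin{equation*}
B_{j0}-B_{j1}=\Bigl(A+\tfrac{n}{2}\Bigr)\log\!\Bigl(1+\tfrac{(n/2)(\hat\sigma_{j0}^2-\hat\sigma_{j1}^2)}{B+(n/2)\hat\sigma_{j1}^2}\Bigr),
\end{equation*}
and the fact that $\hat\sigma_{jk}^2\to\sigma_0^2$ in probability (obtained by applying Theorem~1 to each of the ridge sub-problems that define $\mu^{(jk)}$, under Assumptions~1--4), a first-order Taylor expansion of the logarithm reduces the entire problem to estimating the scalar $\hat\sigma_{j0}^2-\hat\sigma_{j1}^2=-\tfrac{1}{n}\mathbf{y}^\top\mathbf{X}(\mu^{(j0)}-\mu^{(j1)})$.

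Next I would exploit the fact that the two matrices defining $\mu^{(j0)}$ and $\mu^{(j1)}$ differ only by a rank-one correction $(\tfrac{1}{v_1}-\tfrac{1}{v_0})\mathbf{e}_j\mathbf{e}_j^\top$, so the Sherman--Morrison identity gives
\begin{equation*}
\mu^{(j0)}-\mu^{(j1)} = \frac{\tfrac{1}{v_0}-\tfrac{1}{v_1}}{1-(\tfrac{1}{v_0}-\tfrac{1}{v_1})\,\mathbf{e}_j^\top A_0^{-1}\mathbf{e}_j}\,A_0^{-1}\mathbf{e}_j\bigl(\mathbf{e}_j^\top A_0^{-1}\mathbf{X}^\top\mathbf{y}\bigr),
\end{equation*}
where $A_0=\mathbf{X}^\top\mathbf{X}+\tfrac{1}{v_0}\mathbf{I}+(\tfrac{1}{v_1}-\tfrac{1}{v_0})\mathbf{W}^{(j0)}$. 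Under Assumptions~2--4, $A_0/n$ concentrates entrywise on the positive-definite limit $\mathbf{S}$ at an $O_p(n^{-1/2})$ rate, so $\mathbf{e}_j^\top A_0^{-1}\mathbf{X}^\top\mathbf{y}$ can be localised around the $j$-th coordinate of $\mathbf{S}^{-1}\mathbb{E}(\mathbf{x}_i y_i)=\beta_{0,j}$, and $\mathbf{e}_j^\top A_0^{-1}\mathbf{e}_j$ is of order $n^{-1}$.

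The case analysis then closes the proof. For $j\in\boldsymbol{\gamma}_0$, $\beta_{0,j}$ is bounded away from zero, so plugging into the Sherman--Morrison scalar with the scaling $v_0=O(n^{1/2})$ gives $\hat\sigma_{j0}^2-\hat\sigma_{j1}^2$ of exact order $n^{-1/2}$ in probability; feeding this back produces $B_{j0}-B_{j1}$ of exact order $n^{1/2}$ with the correct sign, which dominates the contribution $|\alpha|=O(\log n)$, yielding $T_{j0}-T_{j1}=-|O_p(n^{1/2})|$ and the stated form $w_j=1-(1+\exp|O_p(n^{1/2})|)^{-1}$. For $j\notin\boldsymbol{\gamma}_0$, the same scalar $\mathbf{e}_j^\top A_0^{-1}\mathbf{X}^\top\mathbf{y}$ is only $O_p(n^{-1/2})$ by the central limit theorem applied to $\tfrac{1}{n}\mathbf{X}^\top\boldsymbol{\varepsilon}$, so $B_{j0}-B_{j1}=O_p(1)$, and the remaining $-\alpha=\tfrac{1}{2}\log n+\tfrac{1}{2}\log v_1-\operatorname{logit}(\rho)$ equals $\log n+O(1)$ under the scaling tied to $v_0=O(n^{1/2})$; hence $T_{j0}-T_{j1}=\log n+O_p(1)$, which gives $w_j=O_p(n^{-1})$.

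The principal obstacle will be obtaining uniform-in-$j$ control of the remainders that appear along the way: the Sherman--Morrison residual must be negligible simultaneously across all $p$ coordinates, and the same is true of the $O_p(1)$ term that was collected during the derivation of the update rule from the expansion of $\mathbb{E}_{-q(\gamma_j)}\log|\mathbf{X}^\top\mathbf{X}+\mathbf{C}_{\boldsymbol{\gamma}}^{-1}|$. Losing uniformity would destroy the $\log n$ margin that separates noise from signal in the second case, so care is needed to prove that the Taylor-expansion errors in Step~1 are $o(1)$ rather than merely $O(1)$. This is precisely where Assumption~5 ($p=O_p(n^{1/12})$) is used: it upgrades the entrywise concentration of $\mathbf{X}^\top\mathbf{X}/n$ and of $\mathbf{X}^\top\boldsymbol{\varepsilon}/\sqrt{n}$ to a uniform bound over $j=1,\ldots,p$ with ample room, so that the accumulated approximation errors remain $o(\log n)$ and the dichotomy between the two regimes survives the limit.
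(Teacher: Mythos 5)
Your proposal is correct and follows essentially the same route as the paper: your Sherman--Morrison rank-one computation of $\boldsymbol{\mu}^{(j0)}-\boldsymbol{\mu}^{(j1)}$ is just the paper's Appendix 3 lemma in another guise (there derived by block inversion as $\mathbf{y}^{\top}\mathbf{X}\left(\boldsymbol{\Sigma}^{(j1)}-\boldsymbol{\Sigma}^{(j0)}\right)\mathbf{X}^{\top}\mathbf{y}=\left(\tfrac{1}{v_0}-\tfrac{1}{v_1}\right)\boldsymbol{\mu}_j^{(j0)}\boldsymbol{\mu}_j^{(j1)}$), and the subsequent expansion of the logarithm, the appeal to Theorem 1 to control $\boldsymbol{\mu}^{(jk)}$, and the two-case analysis using $p=O_p(n^{1/12})$ mirror the paper's proof of Theorem 3 (including its looseness about how $\tfrac12\log n+\tfrac12\log v_1$ yields the $O_p(n^{-1})$ rate). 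Note only that your displayed identities carry two sign slips that happen to cancel: with the paper's definitions $T_{j0}-T_{j1}=-\left(B_{j0}-B_{j1}\right)-\alpha$, not $B_{j0}-B_{j1}-\alpha$, and your Sherman--Morrison formula for $\boldsymbol{\mu}^{(j0)}-\boldsymbol{\mu}^{(j1)}$ is missing a leading minus sign, so the stated conclusions of your case analysis agree with the paper even though the intermediate signs do not.
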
   
    The proof is in appendix 3.

	\section{Quantile Regression Model Setup}
	In this section, we will consider the Bayes quantile regression, 
	For a given quantile $\tau$, quantile regression is often represented by the following model
	\begin{equation}
		Q_{y_i}(\tau)=\boldsymbol{x_i}^T \boldsymbol{\beta}, \quad i=1, \ldots, n
	\end{equation}
	Where the observed data $\left(\mathbf{x_i},y_i\right)$ is independent, $\mathbf{x_i} =\left(x_{i1},\ldots,x_{ip}\right)^\mathrm{T } \in \mathbb{R}^p$ is the independent variable, $y_i$ is the dependent variable, $\boldsymbol{\beta} =\left(\beta_1,\ldots,\beta_p\right)^T$ is $p $dimensional parameter.
	The parameter $\boldsymbol{\beta}$ can be obtained by minimizing the following objective function:
	\begin{equation}
		\sum_{i=1}^n \rho_\tau\left(y_i - \boldsymbol{x}_i^{\mathrm{T}} \boldsymbol{\beta}\right)
	\end{equation}
	where $\rho_\tau(y) = y(\tau - I\{y < 0\})$ is the test function, $I\{\cdot\}$ is the indicative function.\\
	In the Bayes problem, the model is usually expressed as $y_i = \boldsymbol{x_i}^{\mathrm{T}} \boldsymbol{\beta} + \epsilon_i$, where
    $\epsilon_i$ obeys the asymmetric Laplace distribution (asymmetric Laplace distribution ALD), the ALD distribution is determined by three parameters, and its probability density function is as follows
    \begin{equation}
		\pi(y|\mu,\sigma,\tau) \propto \exp\{-\frac{1}{\sigma}\rho_\tau(y - \mu)\}
	\end{equation}
	Based on this distribution, the probability density function of $y_i$ is as follows
	\begin{equation}
		\pi(y_i|x_i,b,\beta,\sigma,\tau) \propto \exp\{-\frac{1}{\sigma}\rho_\tau(y_i - x_i^T \beta)\}
	\end{equation}
	Through simple calculations, it can be found that maximizing the likelihood function of $(y_1,...,y_n)$ is equivalent to minimizing the objective function (2).
	When $y \sim\mathrm{ALD}(\boldsymbol{\mu},\sigma,\tau)$, $y$ has the following hierarchical representation
	\begin{equation}
			\begin{aligned}
				&y|e,\mu,\sigma,\tau \sim \mathrm{N}\left(\mu + c_1e,c_2\sigma e\right)\\
				&e | \sigma \sim \mathrm{Exp}\left(\frac{1}{\sigma}
				\right)
			\end{aligned}
	\end{equation}
	where $c_1=\frac{1-2 \tau}{\tau(1-\tau)}, c_2=\frac{2}{\tau(1-\tau)}$.Now consider the Bayes problem under $(12)$ based on the spike-and-slab prior, each prior setting is similar to the model $(2)$, then its Bayes layered model is
	\begin{equation}
		\begin{aligned}
			&\pi(y_i |e_i,\boldsymbol{\beta}  ,\sigma, \tau) \propto  \frac{1}{\sqrt{c_{2} \sigma e_{i}}} \exp \left\{-\frac{\left(y_i - x_i^T\boldsymbol{\beta}-c_{1} e_{i}\right)^2}{2c_{2} \sigma e_{i}}\right\}, \\
			&\pi(e_{i} | \sigma)  \propto \frac{1}{\sigma}\exp\left\{-\frac{e_{i}}{\sigma}\right\}, 
			\quad i =1, 2,\ldots, n,\\
			&\mathrm{\pi}(\boldsymbol{\beta}|\boldsymbol{\gamma},v_0,v_1) \sim \mathrm{N}_{p}(0,\mathbf{C}_{
				\boldsymbol{\gamma}}),
			\quad \pi(\sigma |v,\lambda) \propto \sigma^ {-A-1} \exp\left\{-\frac{B}{\sigma}\right\},\\
			&\pi(\gamma | \rho)	\propto \rho^{\sum_{j = 1}^{p}\gamma_j}
			\left(1 - \rho\right)^{p-\sum_{j = 1}^{p}\gamma_j}.\\		
		\end{aligned}	
	\end{equation}
	where $\mathbf{C}_{\boldsymbol{\gamma}} = \operatorname{diag}(c_1,c_2,\ldots,c_p)$,
	$c_j = (1 - \gamma_j)v_0 + \gamma_j v_1,1\leq j\leq p $,and $\rho \in (0,1),A > 0 ,B > 0,v_1 > v_0 > 0$ are hyperparameters
	.Let $\mathbf{e} = \left(e_1,e_2,\cdots,e_n\right)^T,$
	Henceforth
	\begin{equation}
		\nonumber
		\begin{aligned}
			& \pi(\boldsymbol{\beta},\sigma,\rho,
			\boldsymbol{\gamma},\mathbf{e} , \mathbf{y})\\
			\propto&\prod_{i=1}^{n} \frac{1}{\sqrt{c_{2} \sigma e_{i}}} \exp \{-\frac{(y_i - x_i^T\boldsymbol{\beta}-c_{1} e_{i})^2}{2c_{2} \sigma e_{i}}\}\\
			&\cdot \prod_{j=1}^{p}{\left[\left(1-\gamma_j\right)v_0+\gamma_{j}v_1\right]}^{-\frac{1}{2}}\cdot\exp\left\{-\frac{1}{2} \boldsymbol{\beta}^T\mathbf{C}_{
			\boldsymbol{\gamma}}^{-1}\boldsymbol{\beta}
			\right\}\\
			&\cdot \sigma^ {-A-1} \exp\left\{-\frac{B}{\sigma}\right\}
			\cdot \prod_{i=1}^{n}  \frac{1}{\sigma}\exp\left\{-\frac{e_{i}}{\sigma}\right\}\\
			&\cdot \rho^{\sum_{j = 1}^{p}\gamma_j}(1 - \rho)^{p-\sum_{j = 1}^{p}\gamma_j}.\\
		\end{aligned}	
	\end{equation}
	so 	\begin{equation}
		\nonumber
		\begin{aligned}
			\ln\pi(\boldsymbol{\beta},\sigma,\rho,
			\boldsymbol{\gamma},\mathbf{e} , \mathbf{y})
			=& - 
			\frac{1}{\sigma}\left[
			\sum_{i=1}^{n}\frac{(y_i- x_i^T\boldsymbol{\beta}-c_{1} e_{i})^2}{2c_{2} e_{i}} +\sum_{i=1}^{n}e_{i}
			+B\right]
			\\
			&-\frac{3n+2A+2}{2}\ln\sigma-\frac{1}{2}\sum_{j=1}^{p}\ln\left[(1-\gamma_j)v_0+\gamma_{j}v_1\right]\\
			&-\frac{1}{2} \boldsymbol{\beta}^T\mathbf{C}_{
			\boldsymbol{\gamma}}^{-1}\boldsymbol{\beta}
			+ \sum_{j = 1}^{p}\gamma_{j}\ln\left(\frac{\rho}{1 - \rho}\right) \\
			&- \frac{1}{2}\sum_{i=1}^{n}\ln e_{i}.
		\end{aligned}
	\end{equation}
	We also consider using the variational inference method to solve the problem,and
	the choices for the factorization of
	$q\left(\boldsymbol{\beta}, \sigma^2, \boldsymbol{\gamma}\right)$ is the same 
	as linear regression.\\
	The density $q(\boldsymbol{\beta})$ is given by:
	\begin{equation}
		\begin{aligned}
			q(\boldsymbol{\beta}) & \propto \exp \left\{
			\mathbb{E}_{-q(\boldsymbol{\beta})}\left[
			-\frac{1}{2 c_2\sigma}
			\left(\mathbf{y}-\mathbf{X}  \boldsymbol{\beta} - c_1\mathbf{e} \right)^T\operatorname{diag}\left(\mathbf{e}\right)^{-1}
			\left(\mathbf{y}-\mathbf{X}  \boldsymbol{\beta} - c_1\mathbf{e} \right)
			-\frac{1}{2} \boldsymbol{\beta}^T\mathbf{C}_{
			\boldsymbol{\gamma}}^{-1}\boldsymbol{\beta}\right]\right\} \\
			& \propto \exp \left\{-\frac{1}{2} \boldsymbol{\beta}^T
			\left(
			\mathbf{X}^T\mathbf{E}\mathbf{X}\tau/c_2
			+\frac{1}{v_0}\mathbf{I_p} + \left(\frac{1}{v_1}-\frac{1}{v_0}\right)\mathbf{W}\right) \boldsymbol{\beta}
			+\boldsymbol{\beta}^T\mathbf{X}^T \mathbf{y}_0\tau/c_2 \right\} \\ 
			&=\mathrm{N}(\boldsymbol{\mu}, \boldsymbol{\Sigma})
		\end{aligned}
	\end{equation}
	where $\boldsymbol{\Sigma} = \left(
	\mathbf{X}^T\mathbf{E}\mathbf{X}\tau/c_2+\mathbf{D}\right)^{-1},\mathbf{E} = \mathbb{E}_{q}\left[{\mathrm{diag}(\mathbf{e})}
	^{-1}\right],
	\mathbf{D}=\frac{1}{v_0}\mathbf{I_p} + \left(\frac{1}{v_1}-\frac{1}{v_0}\right)\mathbf{W},
	\\
	\boldsymbol{\mu} = \boldsymbol{\Sigma}\mathbf{X}^T \mathbf{y}_0
	\tau/c_2,
	\mathbf{y}_0 =
	\mathbf{E}\cdot\mathbf{y} - c_1\mathbf{1}_n,
	\tau = \mathbb{E}_{q}\left[\frac{1}{\sigma}\right],
	\mathbf{w} =\mathbb{E}_{q}\boldsymbol{\gamma}, \mathbf{W}= \operatorname{diag}(\mathbf{w})$.\\
	The density $q\left(\sigma\right)$ is given by:
	\begin{equation}
		\begin{aligned}
			q\left(\sigma\right) \propto \exp & {\left\{
				\mathbb { E } _ { - q ( \sigma ) } \left[\frac{3n+2A+2}{2}\ln\sigma-\frac{1}{\sigma}\left(\sum_{i=1}^{n}e_{i}
				+B\right)\right.\right.} \\
			&\left.\left.
			-\frac{1}{\sigma}\sum_{i=1}^{n}\frac{\left(y_i- x_i^T\boldsymbol{\beta}
			-c_{1} e_{i}\right)^2}{2c_{2} e_{i}}
			\right]\right\}
		\end{aligned}
	\end{equation}
	Henceforth $q\left(\sigma\right) = \mathrm{IG}(A_1,B_1)$,
	where $A_1 = A + \frac{3n}{2}$,
		\begin{equation}
	\begin{aligned}
		B_1 =& B + \mathbb { E } _ {  q  }\left[
		\sum_{i=1}^{n}e_{i}+\sum_{i=1}^{n}\frac{\left(y_i- x_i^T\boldsymbol{\beta}
			-c_{1} e_{i}\right)^2}{2c_{2} e_{i}}
		\right]\\
		=& B + \sum_{i=1}^{n}
		\left\{\left(1+\frac{c_1^2}{2c_2}\right)
		\mathbb { E } _ {q  }\left[e_{i}\right]
		-\frac{c_1}{c_2}\Delta_{1,i}\right. \\
		& \left. 
		+ \frac{1}{2c_2}\Delta_{2,i}
		\cdot
		\mathbb { E } _ {q  }
		\left[{e_i}^{-1}
		\right]
		\right\}
	\end{aligned}
	\end{equation}
	and 
	\begin{equation}
	\begin{gathered}
	\Delta_{1,i}=\mathbb { E } _ {q  }\left[
	y_i- x_i^T\boldsymbol{\beta}
	\right] =y_i- x_i^T\boldsymbol{\mu}  \\
	\Delta_{2,i}=\mathbb { E } _ {q  }\left[
	\left(y_i- x_i^T\boldsymbol{\beta}
	\right)^2\right] 
	= \left(y_i- x_i^T\boldsymbol{\mu}
	\right)^2+x_i^T\boldsymbol{\Sigma}x_i.
	\end{gathered}
 	\end{equation}
 	So$$\tau = \mathbb{E}_{q}\left[\frac{1}{\sigma}\right] = \frac{A_1}{B_1}.$$
 	 $\forall 1 \leq i \leq n$, the $q(e_i)$ is given by:
 	\begin{equation}
		\nonumber
		\begin{aligned}
		q(e_i) \propto &
		\exp \left\{
		\mathbb { E } _ { - q ( e_i ) }\left[-\frac{1}{\sigma}\left(e_{i}
		+\frac{\left(y_i- x_i^T\boldsymbol{\beta}-c_{1} e_{i}\right)^2}{2c_{2} e_{i}}\right)
		\right]-\frac{1}{2}\ln e_i\right\}\\
		= &
		\exp \left\{
		\mathbb { E } _ { - q ( e_i ) }\left[-\frac{1}{2\sigma}\left(
		\left(2+\frac{c_1^2}{c_2}\right)e_{i}
		+\frac{\left(y_i- x_i^T\boldsymbol{\beta}\right)^2}{c_{2}}
		\frac{1}{e_{i}}\right)
		\right]-\frac{1}{2}\ln e_i\right\}\\
		= &e_i^{\frac{1}{2} - 1}
		\exp\left\{-\frac{1}{2}\left[ 
		\frac{\tau \Delta_{2,i}}{c_2}\frac{1}{e_i}
		+
		\frac{\tau\left(2c_2+c_1^2\right)}{c_2}
		e_i\right]
		\right\}\\
		=& \mathrm{GIG}
		(\frac{1}{2},\lambda_1,\lambda_2).
		\end{aligned}
	\end{equation}
	where $\Delta_{2,i}$ is as (17),	$\lambda_1 = \frac{\tau \Delta_{2,i}}{c_2},
	\lambda_2 = \frac{\tau\left(2c_2+c_1^2\right)}{c_2},$
	$\mathrm{GIG}\left(\lambda_0,\lambda_1,\lambda_2\right)$ is the Generlized Inverse Distribution.
	If $X \sim \mathrm{GIG}\left(\lambda_0,\lambda_1,\lambda_2\right)$, then its probability density function has the form
	$$\mathrm{p}(x) \propto x^{\lambda_0 - 1} \exp\left\{-\frac{1}{2}\left(\lambda_1 x^{-1}+\lambda_2 x\right)\right\}, (x > 0)
	$$
	and the calculation formula of each order moment is
	\begin{equation}
		E[X^\alpha] = \left(\frac{\lambda_1}{\lambda_2}\right)^{\frac{\alpha}{2}}
		\frac{K_{\lambda_0 + \alpha}
			(\sqrt{\lambda_1\lambda_2}))}{K_{\lambda_0}(\sqrt{\lambda_1\lambda_2})}
	\end{equation}
	where $K_v(\cdot)$ is the third kind of modified Bessel function, satisfying
	\begin{equation}
		\begin{aligned}
			&K_v(z) = K_{-v}(z) \\
			&K_{v+1}(z) - K_{v-1}(z) = \frac{2v}{z}K_{v}(z)
		\end{aligned}
	\end{equation}
	so 
	\begin{equation}
	\frac{K_\frac{3}{2}(z)}{K_\frac{1}{2}(z)}
	= 1 + \frac{1}{z}
	\end{equation}
	Henceforth
	\begin{equation}
	\mathbb { E } _ {q  }\left[e_{i}^{-1}\right]
	=\left(\frac{\lambda_1}{\lambda_2}\right)^{-\frac{1}{2}},
	\mathbb { E } _ {q  }\left[e_{i}\right]
	=\left(\frac{\lambda_1}{\lambda_2}\right)^{\frac{1}{2}}
	\left[1+\left(\lambda_1\lambda_2 \right)^{-\frac{1}{2}}\right]
	\end{equation}
	And $\forall 1 \leq j \leq p$, the probability density of $q(\gamma_{j})$
	is given by
	\begin{equation}
		\begin{aligned}
			q(\gamma_{j}) &\propto \exp \left\{\gamma_j \mathbb{E}_{-q_{(\gamma_j)}}\left[ 
			\frac{1}{2}\ln\frac{v_0}{v_1}+\ln\frac{\rho}{1-\rho}+
			\frac{1}{2}\beta_j^2\left(\frac{1}{v_0}-\frac{1}{v_1}\right)\right]\right\}\\
			&\propto \exp \left\{\gamma_j \left[ 
			\frac{1}{2}\ln\frac{v_0}{v_1}+\ln\frac{\rho}{1-\rho}+
			\frac{1}{2}\left(\mu_j^2+\Sigma_{jj}\right)\left(\frac{1}{v_0}-\frac{1}{v_1}\right)\right]\right\}
		\end{aligned}
	\end{equation}
	So $q(\gamma_{j})=\mathrm{B}(1,w_j)$, where $w_j=
	\mathrm{expit}(\eta_j),$ and 
	$$
	\eta_j = 
	\frac{1}{2}\ln\frac{v_0}{v_1}+\ln\frac{\rho}{1-\rho}+
	\frac{1}{2}\left(\mu_j^2+\Sigma_{jj}\right)\left(\frac{1}{v_0}-\frac{1}{v_1}\right)
	$$
	Similarly,using the coordinate descent method, the specific algorithm of the  Bayes quantile model can be seen in Algorithm 2.
	\begin{algorithm}[H]
		\caption{quantile regression Bayes estimation}
		\label{alg2}
		\begin{algorithmic}[1]
			\Require $(\mathbf{X},\mathbf{y},\tau,v_0,v_1,A,B,\rho)$ where $
			\mathbf{X}\in 
			\mathbb{R}^{n\times p},\mathbf{y}\in\mathbb{R}^n,
			\tau \in (0,1),v_0 > 0,v_1 > v_0,A > 0,B > 0,\rho \in (0,1)
			$ \\
			\textbf{initial parameter value :}$\mathbf{w}^{(0)}
			=\frac{1}{2}\mathbf{1}_p,
			\tau^{(0)} = 1,\mathbf{E}_1^{(0)}= \mathbf{1}_p,\mathbf{E}_2^{(0)}= \mathbf{1}_p$\\
			t = 1, $\lambda = \ln\frac{\rho}{1-\rho},
			c_1=\frac{1-2 \tau}{\tau(1-\tau)}
			, c_2=\frac{2}{\tau(1-\tau)}$
			\While{not converge}
			\State $\mathbf{W}^{(t)}
			=\mathrm{diag}\left(\mathbf{w}^{(t-1)}\right)\quad;\quad
			\mathbf{E}^{(t)} = \mathrm{diag}
			\left(\mathbf{E}_1^{(t-1)}\right)
			$
			\State $
			\mathbf{D}^{(t)}=\frac{1}{v_0}\mathbf{I_p} + \left(\frac{1}{v_1}-\frac{1}{v_0}\right)\mathbf{W}^{(t)} \quad ;\quad
			\mathbf{y}_0 =
			\mathbf{E}^{(t)}\mathbf{y} - c_1\mathbf{1}_n$
			\State$
			\boldsymbol{\Sigma}^{(t)} = \left(
			\mathbf{X}^T\mathbf{E}^{(t)}\mathbf{X}\tau^{(t-1)}/c_2+\mathbf{D}^{(t)}\right)^{-1}\quad;\quad
			\boldsymbol{\mu}^{(t)} = \boldsymbol{\Sigma}^{(t)}\mathbf{X}^T \mathbf{y}_0\tau^{(t-1)}/c_2$
			\For{$i = 1,2,...,n$}
			\State 
			$\Delta_{1,i}=y_i- x_i^T\boldsymbol{\mu}^{(t)}$
			\State 
			$\Delta_{2,i} = \left(y_i- x_i^T\boldsymbol{\mu}
			^{(t)}\right)^2+
			x_i^T\boldsymbol{\Sigma}^{(t)}x_i$
			\EndFor
			\State $B_1
			=B + \sum_{i=1}^{n}
			\left\{\left(1+\frac{c_1^2}{2c_2}\right)
			\mathbf{E}_{2,i}^{(t-1)}
			-\frac{c_1}{c_2}\Delta_{1,i}
			+ \frac{1}{2c_2}\Delta_{2,i}
			\cdot
			\mathbf{E}_{1,i}^{(t-1)}
			\right\}
			$
			\State $\tau^{(t)} = (A + n/2)/B_{1}$
			\For{$i = 1,2,...,n$}
			\State $\lambda_1 = \frac{\tau^{(t)} \Delta_{2,i}}{c_2}\quad;\quad
			\lambda_2 = \frac{\tau^{(t)}\left(2c_2+c_1^2\right)}{c_2}$
			\State
			$
			\mathbf{E}_{1,i}^{(t)}
			=\left(\frac{\lambda_1}{\lambda_2}\right)^{-\frac{1}{2}}\quad;\quad
			\mathbf{E}_{2,i}^{(t)}
			=\left(\frac{\lambda_1}{\lambda_2}\right)^{\frac{1}{2}}
			\left[1+\left(\lambda_1\lambda_2 \right)^{-\frac{1}{2}}\right]
			$
			\EndFor
			
			\For{$j = 1,2,...,p$}
			\State $\eta_j^{(t)}=
			\lambda + \frac{1}{2}\ln\frac{v_0}{v_1}+
			\frac{1}{2}\left[\left(\mu_j^{(t)}\right)^2+\Sigma_{jj}^{(t)}\right]\left(\frac{1}{v_0}-\frac{1}{v_1}\right)$
			\State $w_j^{(t)}=
			\mathrm{expit}\left(\eta_j^{(t)}\right)$
			\EndFor
			\State$t = t + 1$
			\EndWhile 
			\Ensure $(\boldsymbol{\mu},\mathbf{w},\tau)$ where  $\boldsymbol{\mu}\in\mathbb{R}^p,
			\mathbf{w}\in\mathbb{R}^p,\tau > 0$ 
		\end{algorithmic}
	\end{algorithm}
	\section{Logistic regression}
	In this section,we will consider the 
	logistic regression.
	Consider a logistic regression model, assuming there are $n$ groups of independent observation samples
    $\left(\mathbf{x_i},y_i\right),\quad i=1, \ldots, n$, where $\mathbf{x_i} = \left(x_{i1},\ldots,x_{ip} \right)^\mathrm{T} \in \mathbb{R}^p$ is the independent variable, $y_i \in \{0,1\}$ is the dependent variable, $\boldsymbol{\beta} = \left( \beta_1,\ldots,\beta_p\right)^T$ is $p$ dimension parameter,
    Under the logistic regression model, the likelihood function is
	\begin{equation}
	p(\mathbf{y} \mid \mathbf{X},  \boldsymbol{\beta}) \propto \prod_{i=1}^n \frac{
	\exp \left(y_i\mathbf{x}_i^T \boldsymbol{\beta}\right)}{1+
	\exp \left(\mathbf{x}_i^T \boldsymbol{\beta}\right)}
	\end{equation}
	Also, consider the spike-and-slab of $\boldsymbol{\beta}$ and the prior of the prior hidden variable $\boldsymbol{\gamma}$,
	Under the Bayesian framework, the likelihood function of $(23)$ has no prior distribution. We consider the data enhancement method\cite{PG} in the literature and introduce hidden variables $\mathbf{v}$ to make the full log-likelihood of $\boldsymbol{\beta}$ be a normal distribution.
	$$
	p(\mathbf{y}, \mathbf{v} \mid \mathbf{X},   \boldsymbol{\beta})=\prod_{i=1}^n p\left(y_i \mid v_i, \mathbf{x}_i,v_i,  \boldsymbol{\beta}\right) g\left(v_i \mid m_i, 0\right)
	$$
	where
	$$
	g\left(v_i \mid m_i, 0\right) \propto
	\exp \left\{\left(y_i-\frac{1}{2}\right)\left(\mathbf{x}_i^T \boldsymbol{\beta}\right)-\frac{1}{2}\left(\mathbf{x}_i^T \boldsymbol{\beta}\right)^2\right\}
	$$
	and $g\left(v \mid b,c\right)$ is the probability density function of the Pólya-Gamma distribution.
	According to \cite{PG}, the Pólya-Gamma distribution has the following properties.\\
	if $V \sim P G(b, c)$, then its probability density function satisfies:
	$$
	g(w \mid b, c) \propto \exp \left\{-\frac{c^2}{2} w\right\} g(v \mid b, 0)
	$$
	where $g(v \mid b, 0)$ is the probability density function of $P G(b, 0)$ distribution;
	and its expectation is
	$$
	\mathbb{E}(V)=\frac{b}{2 c} \tanh \left(\frac{c}{2}\right) .
	$$
	and 
	$$
	\frac{\left(e^\psi\right)^a}{1+e^\psi}=2^{-b} e^{\kappa \psi} \int_0^{\infty} e^{-v \psi^2 / 2} g(v \mid 1, 0) \mathrm{d}v
	$$
	where $a\in \mathbb{R} ,v \sim PG(1,0),k = a - \frac{1}{2}$
	So the complete log-likelihood function of logistic regression model has the form
	\begin{equation}
		p(\mathbf{y}, \mathbf{v} \mid  \boldsymbol{\beta}) \propto \prod_{i=1}^n \exp \left\{\left(y_i-\frac{1}{2}\right)\left(\mathbf{x}_i^T \boldsymbol{\beta}\right)-\frac{v_i}{2}\left(\mathbf{x}_i^T \boldsymbol{\beta}\right)^2\right\} g\left(v_i \mid 1, 0\right)
	\end{equation}
	Therefore, $\boldsymbol{\beta}$ has the form of normal distribution, and the spike-and-slab prior is its conjugate prior. According to the formula(24), the Bayesian hierarchical model of logistic regression is
	\begin{equation}
		\begin{aligned}
		&p(\mathbf{y} \mid \mathbf{v}, \boldsymbol{\beta}) \propto \prod_{i=1}^n \exp \left\{\left(y_i-\frac{1}{2}\right)\left(\mathbf{x}_i^T \boldsymbol{\beta}\right)-\frac{v_i}{2}
		\left(\mathbf{x}_i^T \boldsymbol{\beta}\right)^2\right\}\\
		&\pi(\mathbf{v})  \propto 
		\prod_{i = 1}^{n}g(v_i\mid1,0),\quad
		\pi\left(\boldsymbol{\beta} \mid \gamma, v_0, v_1\right) \sim \mathrm{N}_p\left(0, \mathbf{C}_\gamma\right)\\
		&\pi(\boldsymbol{\gamma} \mid \rho) \propto \rho^{\sum_{j=1}^p \gamma_j} \cdot(1-\rho)^{p-\sum_{j=1}^p \gamma_j}		
		\end{aligned}
	\end{equation}
	The definitions of the coefficients are the same as in the linear regression, so
	\begin{equation}
		\nonumber
		\begin{aligned}
			& \pi(\boldsymbol{\beta},\rho,
			\boldsymbol{\gamma},\mathbf{v} , \mathbf{y})\\
			\propto&\prod_{i=1}^n \exp \left\{\left(y_i-\frac{1}{2}\right)\left(\mathbf{x}_i^T \boldsymbol{\beta}\right)-\frac{v_i}{2}
			\left(\mathbf{x}_i^T \boldsymbol{\beta}\right)^2\right\}\\
			&\cdot \prod_{j=1}^{p}{\left[\left(1-\gamma_j\right)v_0+\gamma_{j}v_1\right]}^{-\frac{1}{2}}\cdot\exp\left\{-\frac{1}{2} \boldsymbol{\beta}^T\mathbf{C}_{
			\boldsymbol{\gamma}}^{-1}\boldsymbol{\beta}
			\right\}\\
			&\cdot\prod_{i = 1}^{n}g(v_i\mid1,0)\cdot \rho^{\sum_{j = 1}^{p}\gamma_j}(1 - \rho)^{p-\sum_{j = 1}^{p}\gamma_j}.\\
		\end{aligned}	
	\end{equation}
	so
	\begin{equation}
		\nonumber
		\begin{aligned}
			\ln\pi(\boldsymbol{\beta},\rho,
			\boldsymbol{\gamma},\mathbf{v} , \mathbf{y})
			=& 
			-\frac{1}{2}\sum_{i=1}^{n}v_i\boldsymbol{\beta}^T
			\mathbf{x_i}\mathbf{x_i}^T\boldsymbol{\beta}
			+\sum_{i=1}^{n}(y_i-\frac{1}{2})
			\mathbf{x_i}^T\boldsymbol{\beta}
			\\
			&-\frac{1}{2}\sum_{j=1}^{p}\ln\left[(1-\gamma_j)v_0+\gamma_{j}v_1\right] -\frac{1}{2} \boldsymbol{\beta}^T\mathbf{C}_{
			\boldsymbol{\gamma}}^{-1}\boldsymbol{\beta}\\
			&- \sum_{i=1}^{n}\ln g(v_i|1,0)+ \sum_{j = 1}^{p}\gamma_{j}\ln\left(\frac{\rho}{1 - \rho}\right).
		\end{aligned}
	\end{equation}
	Consider solving this problem with variational inference,we can get 
	$$
	q(\boldsymbol{\beta}) \propto \mathrm{N}(\boldsymbol{\mu}, \boldsymbol{\Sigma})
	$$
	where $\boldsymbol{\Sigma} = \left(
	\mathbf{X}^T\mathbf{E}\mathbf{X}+\mathbf{D}\right)^{-1},
	\mathbf{E} = \mathbb{E}_{q}\left[{\mathrm{diag}(\mathbf{v})}
	\right],
	\mathbf{D}=\frac{1}{v_0}\mathbf{I_p} + \left(\frac{1}{v_1}-\frac{1}{v_0}\right)\mathbf{W},
	\\
	\boldsymbol{\mu} = \boldsymbol{\Sigma}\mathbf{X}^T \mathbf{y}_0
	,
	\mathbf{y}_0 =
	\mathbf{y} - \frac{1}{2}\mathbf{1}_n,
	\tau = \mathbb{E}_{q}\left[\frac{1}{\sigma}\right],
	\mathbf{w} =\mathbb{E}_{q}\boldsymbol{\gamma}, \mathbf{W}= \operatorname{diag}(\mathbf{w})$.\\
	And 
	$$ q\left(v_i\right) = PG(1,c_i)\quad
	 \forall 1 \leq i \leq n$$
	where $c_i = \mathbb{E}_{q}\left[\boldsymbol{\beta}^T\mathbf{x_i}\mathbf{x_i}^T\boldsymbol{\beta}
	\right] = \mathbf{x}_i^T\boldsymbol{\Sigma}\mathbf{x}_i
	+\boldsymbol{\mu}^T\mathbf{x_i}\mathbf{x_i}^T\boldsymbol{\mu}
	$, so $\mathbb{E}\left[v_i\right] = \frac{1}{2 c_i} \tanh \left(\frac{c_i}{2}\right)$.\\
	Similarly,$q(\gamma_{j})=\mathrm{B}(1,w_j)$,where $w_j=
	\mathrm{expit}(\eta_j),$ and 
	$$
	\eta_j = 
	\frac{1}{2}\ln\frac{v_0}{v_1}+\ln\frac{\rho}{1-\rho}+
	\frac{1}{2}\left(\mu_j^2+\Sigma_{jj}\right)\left(\frac{1}{v_0}-\frac{1}{v_1}\right)
	$$
	Therefore, the algorithm for solving the logistic model is  \textbf{algorithm 3}.
	\begin{algorithm}[H]
		\caption{logistic regression Bayes estimation}
		\begin{algorithmic}[1]
			\Require $(\mathbf{X},\mathbf{y},v_0,v_1,\rho)$ where $
			\mathbf{X}\in 
			\mathbb{R}^{n\times p},\mathbf{y}\in\mathbb{R}^n
			,v_0 > 0,v_1 > v_0,\rho \in (0,1)$ \\
			\textbf{initial parameter :}$\mathbf{w}^{(0)}
			=\frac{1}{2}\mathbf{1}_p,
			\mathbf{v}^{(0)} = \mathbf{1}_n $\\
			t = 1\quad;\quad $\lambda = \ln\frac{\rho}{1-\rho}$
			\quad;\quad $\mathbf{y}_0 =
			\mathbf{y} - \frac{1}{2}\mathbf{1}_n$
			\While{not converge}
			\State $\mathbf{W}^{(t)}
			=\mathrm{diag}\left(\mathbf{w}^{(t-1)}\right)\quad;
			\quad
			\mathbf{E}^{(t)} = \mathrm{diag}\left(\mathbf{v}^{(t-1)}\right)$
			\State $\mathbf{D}^{(t)}=\frac{1}{v_0}\mathbf{I_p} + \left(\frac{1}{v_1}-\frac{1}{v_0}\right)\mathbf{W}^{(t)}$
			\State$
			\boldsymbol{\Sigma}^{(t)} = \left[
			\mathbf{X}^T\mathbf{E}^{(t)}\mathbf{X}+\mathbf{D}^{(t)}\right]^{-1}\quad;\quad
			\boldsymbol{\mu}^{(t)} = \boldsymbol{\Sigma}^{(t)}\mathbf{X}^T \mathbf{y}_0$
			\For{$i = 1,2,...,n$}
				\State $c_i = \mathbf{x}_i^T\boldsymbol{\Sigma}^{(t)}\mathbf{x}_i
				+{\boldsymbol{\mu}^{(t)}}^T\mathbf{x_i}\mathbf{x_i}^T\boldsymbol{\mu}^{(t)}$
				\State $\mathbf{v}_i^{(t)} = \frac{1}{2 c_i} \tanh \left(\frac{c_i}{2}\right)$
			\EndFor
			\For{$j = 1,2,...,p$}
			\State $\eta_j^{(t)}=
			\lambda + \frac{1}{2}\ln\frac{v_0}{v_1}+
			\frac{1}{2}\left[\left(\mu_j^{(t)}\right)^2+\Sigma_{jj}^{(t)}\right]\left(\frac{1}{v_0}-\frac{1}{v_1}\right)$
			\State $w_j^{(t)}=
			\mathrm{expit}\left(\eta_j^{(t)}\right)$
			\EndFor
			\State$t = t + 1$
			\EndWhile 
			\Ensure $(\boldsymbol{\mu},\mathbf{w},\tau)$ where $\boldsymbol{\mu}\in\mathbb{R}^p,
			\mathbf{w}\in\mathbb{R}^p,\tau > 0$ 
		\end{algorithmic}
	\end{algorithm}
	\section{Conclusion}
	In this paper, we investigate a scalable and interpretable mean-field variational approximation of the spike-and-slab with mixture normal distribution in linear regression, and we promote it to the quantile regression and logistic regression. We derive some theoretical
	results in our approach, (1) explaining
	how to achieve the sparsity estimator;
	(2) proving the estimator's consistency;
	(3) proving the variable selection's consistency.\\
	The method in this paper can be promoted to some more complex models such as
	composite model, mixed model, a model with censored data . And also we can consider some other priors such as 
	horseshoe\cite{horseshoe}, negative-exponential-gamma
	\cite{disscusion1},g priors\cite{gpriors}.\\
	In addition, for the rapid selection of hyperparameters $v_0$ and $v_1$, most of the current research is to use empirical methods\cite{aos2014} to give a specific selection formula or use the grid search method, but they have not given some corresponding theoretical guarantees. How to quickly select a hyperparameter with the theoretically guaranteed $v_0$ and $v_1$ is also a research direction in the future.

	\section*{Appendix 1}
	\subsection*{Appendix 1.1}
	\begin{proof}[\textbf{Lemma $1$ proof}]
		See \cite{Ormerod2017}Appendix B.
	\end{proof}
	\subsection*{Appendix 1.2}
	Before proving the proposition $1,2$, some lemmas are given first. First, some basic conclusions of matrices.
	\begin{lemma}
	If $A, B$ are positive definite matrices of order $p$, then $\operatorname{tr}(AB)>0$.
    $A \leq B \iff B - A$ positive semi-definite,
		but
    When $A \leq B $, there is $B^{-1} \leq A^{-1}$;
    And the diagonal elements satisfy $A_{j,j} \leq B_{j,j},\quad
    1 \leq j \leq p$.
	\end{lemma}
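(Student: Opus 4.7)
The lemma is a compilation of four elementary facts about positive definite matrices, so the plan is to dispatch each by a one-line argument drawn from the spectral theory of symmetric matrices. First, for $\operatorname{tr}(AB)>0$, I would use the unique symmetric positive definite square root $A^{1/2}$ of $A$, apply the cyclic property of trace to write $\operatorname{tr}(AB)=\operatorname{tr}(A^{1/2}BA^{1/2})$, and note that $A^{1/2}BA^{1/2}$ is positive definite as a congruence of the positive definite matrix $B$; its trace is then a sum of strictly positive eigenvalues and hence positive.

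The equivalence $A\leq B\iff B-A$ positive semi-definite is simply the definition of the Loewner order, so I would state it as such. For the claim $A\leq B\Rightarrow B^{-1}\leq A^{-1}$, under the standing assumption that $A$ and $B$ are positive definite, I would conjugate by $B^{-1/2}$ to reduce the hypothesis to $B^{-1/2}AB^{-1/2}\leq I$, observe that every eigenvalue of this symmetric matrix thus lies in $(0,1]$, invert to obtain $B^{1/2}A^{-1}B^{1/2}\geq I$, and then conjugate back by $B^{-1/2}$ on both sides to conclude $A^{-1}\geq B^{-1}$. Finally, the diagonal comparison $A_{jj}\leq B_{jj}$ follows immediately by testing the quadratic form of $B-A$ against the standard basis vector $e_j$, giving $B_{jj}-A_{jj}=e_j^{T}(B-A)e_j\geq 0$.

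No step here constitutes an obstacle; the only technical care needed is to invoke the existence and uniqueness of the symmetric positive definite square root (from the spectral theorem) and the fact that simultaneous congruence by an invertible symmetric matrix both preserves the Loewner order and commutes with inversion on positive definite matrices. Once those ingredients are in place, each of the four assertions reduces to one symbolic manipulation.
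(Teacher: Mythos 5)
Your four arguments are all correct: the square-root-plus-cyclicity trick for $\operatorname{tr}(AB)=\operatorname{tr}(A^{1/2}BA^{1/2})>0$, the Loewner order as a definition, the congruence by $B^{-1/2}$ followed by eigenvalue inversion for $A\leq B\Rightarrow B^{-1}\leq A^{-1}$, and testing $B-A$ against $e_j$ for the diagonal comparison are exactly the standard proofs. The paper itself states this lemma without proof, as a collection of basic matrix facts, so your write-up simply supplies the routine verification that the paper omits; there is no gap, and no divergence of method to speak of. The only point worth polishing is the phrase about congruence by an \emph{invertible symmetric} matrix: congruence $X\mapsto C^{T}XC$ by any invertible $C$ preserves the semidefinite order, and the symmetry of $B^{-1/2}$ is incidental rather than essential to the argument.
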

		\begin{lemma}
		$$
		\begin{aligned}
			{\left[\begin{array}{cc}
					\mathbf{A} & \mathbf{B} \\
					\mathbf{B}^T & \mathbf{C}
				\end{array}\right]^{-1} } &=\left[\begin{array}{cc}
				\mathbf{I} & \mathbf{0} \\
				-\mathbf{C}^{-1} \mathbf{B}^T & \mathbf{I}
			\end{array}\right]\left[\begin{array}{cc}
				\tilde{\mathbf{A}} & \mathbf{0} \\
				\mathbf{0} & \mathbf{C}^{-1}
			\end{array}\right]\left[\begin{array}{cc}
				\mathbf{I} & -\mathbf{B C}^{-1} \\
				\mathbf{0} & \mathbf{I}
			\end{array}\right] \\
			&=\left[\begin{array}{cc}
				\tilde{\mathbf{A}} & -\widetilde{\mathbf{A}} \mathbf{B C}^{-1} \\
				-\mathbf{C}^{-1} \mathbf{B}^T \tilde{\mathbf{A}} & \mathbf{C}^{-1}+\mathbf{C}^{-1} \mathbf{B}^T \widetilde{\mathbf{A}} \mathbf{B C}^{-1}
			\end{array}\right]
		\end{aligned}
		$$
		where $\widetilde{\mathbf{A}}=\left(\mathbf{A}-\mathbf{B C}^{-1} \mathbf{B}^T\right)^{-1}.$ 
	\end{lemma}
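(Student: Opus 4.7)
The plan is to verify the displayed identity for $M^{-1}$, with $M=\begin{pmatrix}\mathbf{A} & \mathbf{B}\\ \mathbf{B}^T & \mathbf{C}\end{pmatrix}$, by first producing an explicit block LDU (Schur-complement) factorization of $M$ itself and then inverting factor by factor. Since the statement is phrased as an equality of matrix inverses, I would begin by recording the implicit invertibility hypotheses: $\mathbf{C}$ must be invertible (so that $\mathbf{B}\mathbf{C}^{-1}$ and the Schur complement $\mathbf{A}-\mathbf{B}\mathbf{C}^{-1}\mathbf{B}^T$ are defined) and the Schur complement must be invertible as well, so that $\widetilde{\mathbf{A}}$ exists.

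Step one is a direct block-multiplication verification of
\begin{equation*}
M = \begin{pmatrix}\mathbf{I} & \mathbf{B}\mathbf{C}^{-1}\\ \mathbf{0} & \mathbf{I}\end{pmatrix}\begin{pmatrix}\widetilde{\mathbf{A}}^{-1} & \mathbf{0}\\ \mathbf{0} & \mathbf{C}\end{pmatrix}\begin{pmatrix}\mathbf{I} & \mathbf{0}\\ \mathbf{C}^{-1}\mathbf{B}^T & \mathbf{I}\end{pmatrix},
\end{equation*}
which one checks by multiplying the first two factors and then that product against the third. Step two is to invert this factorization using $(XYZ)^{-1}=Z^{-1}Y^{-1}X^{-1}$ together with the elementary observation that each unipotent block-triangular factor has its off-diagonal block simply negated upon inversion, while the block-diagonal middle factor inverts entry-wise to $\mathrm{diag}(\widetilde{\mathbf{A}},\mathbf{C}^{-1})$. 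This immediately yields the first displayed form in the lemma.

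Step three is to multiply the three resulting factors out block by block to obtain the explicit second form. The $(1,1)$, $(1,2)$, and $(2,1)$ entries come out directly as $\widetilde{\mathbf{A}}$, $-\widetilde{\mathbf{A}}\mathbf{B}\mathbf{C}^{-1}$, and $-\mathbf{C}^{-1}\mathbf{B}^T\widetilde{\mathbf{A}}$ respectively; the only entry that requires care is the $(2,2)$ block, where the bare $\mathbf{C}^{-1}$ contributed by the middle factor is augmented by the cross term $\mathbf{C}^{-1}\mathbf{B}^T\widetilde{\mathbf{A}}\mathbf{B}\mathbf{C}^{-1}$ produced when $\widetilde{\mathbf{A}}$ is sandwiched between the two triangular factors.

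The argument is essentially an algebraic verification by block multiplication, so there is no conceptual obstacle. The only point that demands attention is bookkeeping: tracking the signs when inverting the unipotent triangular blocks, and respecting the noncommutativity of matrix multiplication when collecting the cross terms in the final $(2,2)$ block.
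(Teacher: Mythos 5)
Your proposal is correct: the block LDU (Schur--complement) factorization of $\left[\begin{smallmatrix}\mathbf{A} & \mathbf{B}\\ \mathbf{B}^T & \mathbf{C}\end{smallmatrix}\right]$, inverted factor by factor and then multiplied out, yields exactly both displayed forms, and the paper itself states this lemma without proof as a standard identity, so your argument supplies precisely the routine verification that is being taken for granted. Your explicit statement of the invertibility hypotheses (that $\mathbf{C}$ and the Schur complement $\mathbf{A}-\mathbf{B}\mathbf{C}^{-1}\mathbf{B}^T$ must be invertible), which the paper leaves implicit, is a sensible addition and causes no conflict with how the lemma is used, since there it is applied to positive definite matrices.
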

	\begin{lemma}
		Let $\boldsymbol{\Sigma}$ be the definition of $(5)$, and $\mathbf{D}_0$ be a diagonal matrix whose diagonal elements are greater than $0$, then we have$$ 0<\operatorname{tr}\left(\mathbf{X}^T \mathbf{X} \boldsymbol{\Sigma}\right),0 < \operatorname{tr}\left[
		\mathbf{X}^T\mathbf{X}\left(\mathbf{X}^T\mathbf{X}+\mathbf{D}_0 \right)^{-1} \right] < p.$$
	\end{lemma}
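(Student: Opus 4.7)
The plan is to reduce both claimed inequalities to the positive-definiteness facts already packaged in Lemma 2, together with one elementary algebraic identity for the upper bound.

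First, I would verify positive-definiteness of the relevant matrices. Since the assumptions of Propositions 1 and 2 (under which this lemma is used) include $p \leq n$ and $\mathbf{X}$ of full column rank, $\mathbf{X}^T\mathbf{X}$ is positive definite. The matrix $\mathbf{D}$ appearing in the definition of $\boldsymbol{\Sigma}$ in (5) is diagonal with entries $1/v_0$ or a convex combination of $1/v_0$ and $1/v_1$, hence strictly positive; so $\tau\mathbf{X}^T\mathbf{X}+\mathbf{D}$ is positive definite and $\boldsymbol{\Sigma}$ is positive definite. Similarly $\mathbf{X}^T\mathbf{X}+\mathbf{D}_0$ is positive definite, and therefore so is its inverse.

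For the first inequality and the lower bound of the second, I would apply Lemma 2 directly. Taking $A=\mathbf{X}^T\mathbf{X}$ and $B=\boldsymbol{\Sigma}$ (respectively $B=(\mathbf{X}^T\mathbf{X}+\mathbf{D}_0)^{-1}$) gives $\operatorname{tr}(\mathbf{X}^T\mathbf{X}\boldsymbol{\Sigma})>0$ and $\operatorname{tr}[\mathbf{X}^T\mathbf{X}(\mathbf{X}^T\mathbf{X}+\mathbf{D}_0)^{-1}]>0$, since both products of positive-definite matrices have strictly positive trace.

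For the upper bound $\operatorname{tr}[\mathbf{X}^T\mathbf{X}(\mathbf{X}^T\mathbf{X}+\mathbf{D}_0)^{-1}]<p$, the key step is the algebraic identity
$$
\mathbf{X}^T\mathbf{X}\,(\mathbf{X}^T\mathbf{X}+\mathbf{D}_0)^{-1} \;=\; \mathbf{I}_p \;-\; \mathbf{D}_0\,(\mathbf{X}^T\mathbf{X}+\mathbf{D}_0)^{-1},
$$
obtained by writing $\mathbf{X}^T\mathbf{X}=(\mathbf{X}^T\mathbf{X}+\mathbf{D}_0)-\mathbf{D}_0$ and multiplying on the right by $(\mathbf{X}^T\mathbf{X}+\mathbf{D}_0)^{-1}$. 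Taking the trace yields
$$
\operatorname{tr}\!\left[\mathbf{X}^T\mathbf{X}(\mathbf{X}^T\mathbf{X}+\mathbf{D}_0)^{-1}\right] \;=\; p \;-\; \operatorname{tr}\!\left[\mathbf{D}_0(\mathbf{X}^T\mathbf{X}+\mathbf{D}_0)^{-1}\right].
$$
Since $\mathbf{D}_0$ is diagonal with strictly positive entries (hence positive definite) and $(\mathbf{X}^T\mathbf{X}+\mathbf{D}_0)^{-1}$ is positive definite, Lemma 2 gives that the subtracted trace is strictly positive, and the desired strict inequality $<p$ follows.

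I do not anticipate a real obstacle here: the whole argument is a one-line identity plus two direct invocations of Lemma 2. The only point that deserves care is making explicit that $\mathbf{X}^T\mathbf{X}$ is positive definite (not merely semi-definite), which is guaranteed by the $p\leq n$ and full-rank hypotheses inherited from the propositions in which this lemma is deployed.
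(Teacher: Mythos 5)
Your proposal is correct and follows essentially the same route as the paper: positive-definiteness of $\mathbf{X}^T\mathbf{X}$, $\boldsymbol{\Sigma}$ and $\mathbf{D}_0$ combined with Lemma 2 for the strict positivity of the traces, and the identity $\mathbf{X}^T\mathbf{X}\left(\mathbf{X}^T\mathbf{X}+\mathbf{D}_0\right)^{-1}=\mathbf{I}_p-\mathbf{D}_0\left(\mathbf{X}^T\mathbf{X}+\mathbf{D}_0\right)^{-1}$ followed by taking traces for the upper bound $<p$. No gaps; your explicit remark that $p\leq n$ and full rank are needed for strict positive-definiteness of $\mathbf{X}^T\mathbf{X}$ matches the paper's own justification.
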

	\begin{proof}[Proof]
		By definition $\tau > 0$ and $\boldsymbol{\Sigma},\mathbf{D}_0 $ is positive definite, and $n \geq p$ and $\mathbf{X}$ is full rank, so also
    $\mathbf{X}^T \mathbf{X}$ is positive definite, then by Lemma $2$
    $\operatorname{tr}\left(\mathbf{X}^T \mathbf{X} \boldsymbol{\Sigma}\right) > 0$.
    Similarly, $$\operatorname{tr}\left[
    \mathbf{X}^T\mathbf{X}\left(\mathbf{X}^T\mathbf{X}+\mathbf{D}_0 \right)^{-1} \right] > 0,\operatorname {tr}\left[
    \mathbf{D}_0
    \left(\mathbf{X}^T\mathbf{X}+
    \mathbf{D}_0 \right)^{-1}
    \right] > 0.$$
    Therefore
		$$
		\begin{aligned}
			\operatorname{tr}\left[
			\mathbf{X}^T\mathbf{X}\left(\mathbf{X}^T\mathbf{X}+\mathbf{D}_0  \right)^{-1} \right]
			&=\operatorname{tr}\left[ \mathbf{I}_p - \mathbf{D}_0 
			\left(\mathbf{X}^T\mathbf{X}+
			\mathbf{D}_0  \right)^{-1}\right]\\
			&=p-\operatorname{tr}\left[
			\mathbf{D}_0 
			\left(\mathbf{X}^T\mathbf{X}+
			\mathbf{D}_0  \right)^{-1}
			\right]\\
			& < p
		\end{aligned}
		$$
	\end{proof}
	Next, we will give the conclusion of the parameter changes in the iterative process, because the least squares estimation is needed
    properties, so the following lemmas assume that the observation matrix $\mathbf{X}$ is full rank, and $n \geq p$.
    \begin{lemma}
		Take $\boldsymbol{\mu}$ as $(5)$,
		then $\left\|\mathbf{X} \boldsymbol{\mu}\right\|^2 \leq 
		\mathbf{y}^T \mathbf{X}\left(\mathbf{X}^T \mathbf{X}\right)^{-1} \mathbf{X}^T \mathbf{y}.$
	\end{lemma}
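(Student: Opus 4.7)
The plan is to reduce the claimed norm inequality to a Loewner-order inequality on $p\times p$ matrices and then verify it by a direct expansion. Intuitively, $\mathbf{X}\boldsymbol{\mu}$ is a ridge-type shrinkage of the OLS fit $\mathbf{X}(\mathbf{X}^T\mathbf{X})^{-1}\mathbf{X}^T\mathbf{y}$, so the inequality should be purely linear-algebraic and independent of $\mathbf{y}$.

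First I would substitute the definition $\boldsymbol{\mu}=\tau\boldsymbol{\Sigma}\mathbf{X}^T\mathbf{y}$ from (5) into the left-hand side, obtaining
\[\|\mathbf{X}\boldsymbol{\mu}\|^2=\tau^{2}\,\mathbf{y}^T\mathbf{X}\boldsymbol{\Sigma}(\mathbf{X}^T\mathbf{X})\boldsymbol{\Sigma}\mathbf{X}^T\mathbf{y}.\]
Since both sides of the target inequality are quadratic forms in $\mathbf{z}\equiv\mathbf{X}^T\mathbf{y}\in\mathbb{R}^p$, and $\mathbf{z}$ ranges over all of $\mathbb{R}^p$ as $\mathbf{y}$ varies (the surjectivity of $\mathbf{X}^T$ follows from $n\ge p$ together with $\operatorname{rank}(\mathbf{X})=p$), it suffices to prove the $p\times p$ matrix inequality
\[\tau^{2}\,\boldsymbol{\Sigma}(\mathbf{X}^T\mathbf{X})\boldsymbol{\Sigma}\ \preceq\ (\mathbf{X}^T\mathbf{X})^{-1}.\]

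Setting $A=\mathbf{X}^T\mathbf{X}$ and recalling $\boldsymbol{\Sigma}^{-1}=\tau A+\mathbf{D}$, I would conjugate both sides of this inequality by the symmetric positive-definite matrix $\boldsymbol{\Sigma}^{-1}$; conjugation by such a matrix preserves the Loewner order. The left side becomes $\tau^{2} A$, while the right side expands as
\[(\tau A+\mathbf{D})\,A^{-1}\,(\tau A+\mathbf{D})=\tau^{2} A+2\tau\mathbf{D}+\mathbf{D}A^{-1}\mathbf{D},\]
so after cancelling $\tau^{2} A$ the target collapses to $0\preceq 2\tau\mathbf{D}+\mathbf{D}A^{-1}\mathbf{D}$. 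Under the algorithm's assumptions $\tau>0$, $\mathbf{D}$ is diagonal with entries lying in $[1/v_1,1/v_0]$ and is therefore positive definite, and $A^{-1}$ is positive definite by the full-rank hypothesis on $\mathbf{X}$; hence both summands on the right are positive semi-definite, and the inequality holds.

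I do not foresee any real obstacle: the argument is essentially bookkeeping. The only step requiring a line of justification is that conjugation by a nonsingular symmetric matrix preserves $\preceq$, which is a standard property of the Loewner order; once that is in place, everything else is the direct expansion above and sign-checking of $\tau$, $\mathbf{D}$, and $A^{-1}$.
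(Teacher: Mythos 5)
Your proof is correct. It shares the paper's overall strategy --- substitute $\boldsymbol{\mu}=\tau\boldsymbol{\Sigma}\mathbf{X}^T\mathbf{y}$ and reduce the claim to a Loewner-order inequality between $p\times p$ matrices evaluated at $\mathbf{z}=\mathbf{X}^T\mathbf{y}$ --- but verifies that matrix inequality by a genuinely different mechanism. The paper writes $\boldsymbol{\mu}=(\mathbf{X}^T\mathbf{X}+\mathbf{D}/\tau)^{-1}\mathbf{X}^T\mathbf{y}$ and chains two bounds: using the identity $\mathbf{X}^T\mathbf{X}\,(\mathbf{X}^T\mathbf{X}+\mathbf{D}/\tau)^{-1}=\mathbf{I}_p-(\mathbf{D}/\tau)(\mathbf{X}^T\mathbf{X}+\mathbf{D}/\tau)^{-1}$ it shows $(\mathbf{X}^T\mathbf{X}+\mathbf{D}/\tau)^{-1}\mathbf{X}^T\mathbf{X}(\mathbf{X}^T\mathbf{X}+\mathbf{D}/\tau)^{-1}\preceq(\mathbf{X}^T\mathbf{X}+\mathbf{D}/\tau)^{-1}$, and then invokes the antitonicity of matrix inversion (its Lemma 2) to get $(\mathbf{X}^T\mathbf{X}+\mathbf{D}/\tau)^{-1}\preceq(\mathbf{X}^T\mathbf{X})^{-1}$. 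You instead conjugate the target $\tau^{2}\boldsymbol{\Sigma}\mathbf{X}^T\mathbf{X}\boldsymbol{\Sigma}\preceq(\mathbf{X}^T\mathbf{X})^{-1}$ by $\boldsymbol{\Sigma}^{-1}=\tau\mathbf{X}^T\mathbf{X}+\mathbf{D}$ and expand, reducing everything to $0\preceq 2\tau\mathbf{D}+\mathbf{D}(\mathbf{X}^T\mathbf{X})^{-1}\mathbf{D}$, which is immediate because $\tau>0$, the diagonal entries of $\mathbf{D}$ lie in $[1/v_1,1/v_0]$, and $\mathbf{X}^T\mathbf{X}$ is positive definite under the standing $n\ge p$, full-rank assumption. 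Your route is slightly more self-contained (it needs only that congruence by an invertible matrix preserves $\preceq$ and that sums of positive semi-definite matrices are positive semi-definite, bypassing the inverse-monotonicity lemma), whereas the paper's route yields the intermediate bound by $(\mathbf{X}^T\mathbf{X}+\mathbf{D}/\tau)^{-1}$, which fits the toolkit it uses repeatedly in the appendix. One cosmetic point: the surjectivity aside about $\mathbf{z}$ ranging over all of $\mathbb{R}^p$ is unnecessary, since to deduce the scalar inequality you only need the matrix inequality applied to the single vector $\mathbf{z}=\mathbf{X}^T\mathbf{y}$.
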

	\begin{proof}[Proof]
	Because $ \boldsymbol{\mu} = \tau\boldsymbol{\Sigma}\mathbf{X}^T \mathbf{y} = 
	\left(\mathbf{X}^T\mathbf{X}+\mathbf{D}/\tau \right)^{-1}\mathbf{X}^T \mathbf{y}$,
	from lemma 2 ,$$
	\begin{aligned}
		&\left(\mathbf{X}^T\mathbf{X}+\mathbf{D}/\tau \right)^{-1}\mathbf{X}^T\mathbf{X}\left(\mathbf{X}^T\mathbf{X}+\mathbf{D}/\tau \right)^{-1}\\
		=&\left(\mathbf{X}^T\mathbf{X}+\mathbf{D}/\tau \right)^{-1}\left[ \mathbf{I}_p - \left(\mathbf{D}/\tau\right)
		\left(\mathbf{X}^T\mathbf{X}+
		\mathbf{D}/\tau \right)^{-1}\right] \\
		=&\left(\mathbf{X}^T\mathbf{X}+\mathbf{D}/\tau \right)^{-1}
		-\left(\mathbf{X}^T\mathbf{X}+\mathbf{D}/\tau \right)^{-1}
		\left(\mathbf{D}/\tau\right)
		\left(\mathbf{X}^T\mathbf{X}+
		\mathbf{D}/\tau \right)^{-1}\\
		\leq & \left(\mathbf{X}^T\mathbf{X}+\mathbf{D}/\tau \right)^{-1}\\
		\leq & \left(\mathbf{X}^T\mathbf{X}\right)^{-1}
	\end{aligned}
	$$
	So
	$$
	\begin{aligned}
		\left\|\mathbf{X} \boldsymbol{\mu}\right\|^2
		&=\mathbf{y}^T\mathbf{X}
		\left(\mathbf{X}^T\mathbf{X}+\mathbf{D}/\tau \right)^{-1}\mathbf{X}^T\mathbf{X}\left(\mathbf{X}^T\mathbf{X}+\mathbf{D}/\tau \right)^{-1}
		\mathbf{X}^T \mathbf{y}\\
		&\leq \mathbf{y}^T \mathbf{X}\left(\mathbf{X}^T \mathbf{X}\right)^{-1} \mathbf{X}^T \mathbf{y}.
	\end{aligned}
	$$
	\end{proof}
	\begin{lemma}
		There is constant $\tau_R > \tau_L > 0$, that in the iterative process of the algorithm $1$, $\forall t \geq 1,\tau_L \leq \tau^{(t)} \leq \tau_R.$ where
		\begin{equation}
			\begin{gathered}
				\tau_L=\frac{2 A+n-p}{2 B+2\|\mathbf{y}\|^2+2 \mathbf{y}^T \mathbf{X}\left(\mathbf{X}^T \mathbf{X}\right)^{-1} \mathbf{X}^T \mathbf{y}+p \frac{2 A+n-p}{(2 A+n) \tau^{(0)}}} \\
				\tau_R=\frac{2 A+n}{2 B+\left\|\mathbf{y}-\mathbf{X}\left(\mathbf{X}^T \mathbf{X}\right)^{-1} \mathbf{X}^T \mathbf{y}\right\|^2}
			\end{gathered}
		\end{equation}
	\end{lemma}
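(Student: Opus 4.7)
The plan is to treat the upper and lower bounds on $\tau^{(t)}$ separately, starting from the update $\tau^{(t)} = (A+n/2)/B_1^{(t)}$ with $B_1^{(t)} = B + \tfrac{1}{2}\|\mathbf{y}-\mathbf{X}\boldsymbol{\mu}^{(t)}\|^2 + \tfrac{1}{2}\operatorname{tr}(\mathbf{X}^T\mathbf{X}\boldsymbol{\Sigma}^{(t)})$. The upper bound is the easy direction: drop the non-negative trace term via Lemma 3, and bound $\|\mathbf{y}-\mathbf{X}\boldsymbol{\mu}^{(t)}\|^2$ from below by the OLS residual $\|\mathbf{y}-\mathbf{X}(\mathbf{X}^T\mathbf{X})^{-1}\mathbf{X}^T\mathbf{y}\|^2$, since OLS minimizes the squared residual. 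This yields $\tau^{(t)} \leq \tau_R$ uniformly in $t$, with no dependence on previous iterates.

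For the lower bound, I would upper bound both pieces of $B_1^{(t)}$. The inequality $\|\mathbf{y}-\mathbf{X}\boldsymbol{\mu}^{(t)}\|^2 \leq 2\|\mathbf{y}\|^2 + 2\|\mathbf{X}\boldsymbol{\mu}^{(t)}\|^2$ combined with Lemma 4 controls the first piece by $2\|\mathbf{y}\|^2 + 2\mathbf{y}^T\mathbf{X}(\mathbf{X}^T\mathbf{X})^{-1}\mathbf{X}^T\mathbf{y}$. For the trace, the key trick is to pull $\tau^{(t-1)}$ out: write $\boldsymbol{\Sigma}^{(t)} = (1/\tau^{(t-1)})(\mathbf{X}^T\mathbf{X} + \mathbf{D}^{(t)}/\tau^{(t-1)})^{-1}$ and apply Lemma 3 with $\mathbf{D}_0 = \mathbf{D}^{(t)}/\tau^{(t-1)}$ to obtain $\operatorname{tr}(\mathbf{X}^T\mathbf{X}\boldsymbol{\Sigma}^{(t)}) < p/\tau^{(t-1)}$. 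Combining these gives the scalar recursion $u_t \leq \alpha u_{t-1} + \beta$ for $u_t := 1/\tau^{(t)}$, with $\alpha = p/(2A+n) < 1$ and $\beta = c_0/(2A+n)$, where $c_0 := 2B + 2\|\mathbf{y}\|^2 + 2\mathbf{y}^T\mathbf{X}(\mathbf{X}^T\mathbf{X})^{-1}\mathbf{X}^T\mathbf{y}$.

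The remaining step is to convert this contraction into the explicit bound claimed. I would prove by induction on $t \geq 1$ that $u_t \leq \beta/(1-\alpha) + \alpha u_0$. The base case follows from $\beta \leq \beta/(1-\alpha)$; the inductive step uses $u_t \leq \beta + \alpha u_{t-1} \leq \beta/(1-\alpha) + \alpha^2 u_0 \leq \beta/(1-\alpha) + \alpha u_0$, since $\alpha^2 \leq \alpha$. Reciprocating this bound and substituting $\alpha$, $\beta$, and $u_0 = 1/\tau^{(0)}$ recovers exactly the $\tau_L$ written in the statement (the $\beta/(1-\alpha)$ term produces the $c_0$ part of the denominator, while the $\alpha u_0$ term produces the $p(2A+n-p)/((2A+n)\tau^{(0)})$ contribution).

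The main subtlety, and really the only one, is decoupling the iteration dependence hidden inside $\boldsymbol{\Sigma}^{(t)}$ through the factorization trick above, so that Lemma 3 yields the clean $p/\tau^{(t-1)}$ bound on the trace. Once the affine recursion $u_t \leq \alpha u_{t-1} + \beta$ is in hand, the contraction factor $\alpha < 1$ is automatic from $p < 2A+n$, the fixed point is well defined, and a one-step induction forces $u_t$ below the two-term envelope $\beta/(1-\alpha) + \alpha u_0$ that matches $\tau_L^{-1}$; everything else is bookkeeping.
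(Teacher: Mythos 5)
Your proposal is correct and takes essentially the same route as the paper: the upper bound is obtained exactly as in the paper's proof (drop the positive trace term, then use OLS optimality of the residual), and your lower-bound argument --- the splitting $\|\mathbf{y}-\mathbf{X}\boldsymbol{\mu}^{(t)}\|^2 \le 2\|\mathbf{y}\|^2 + 2\|\mathbf{X}\boldsymbol{\mu}^{(t)}\|^2$ together with $\|\mathbf{X}\boldsymbol{\mu}^{(t)}\|^2 \le \mathbf{y}^T\mathbf{X}(\mathbf{X}^T\mathbf{X})^{-1}\mathbf{X}^T\mathbf{y}$, the trace bound $\operatorname{tr}(\mathbf{X}^T\mathbf{X}\boldsymbol{\Sigma}^{(t)}) < p/\tau^{(t-1)}$ from factoring out $\tau^{(t-1)}$, and the resulting affine recursion for $1/\tau^{(t)}$ resolved by induction --- is precisely the portion the paper defers to Ormerod and You (2017, Appendix B, Result 3), and your explicit solution reproduces the stated $\tau_L$. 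The only cosmetic issue is that your references to ``Lemma 3'' and ``Lemma 4'' correspond to the paper's Lemma 4 and Lemma 5, respectively.
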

	\begin{proof}[Proof]
		From algorithm $1$, $$\tau^{(t)} = 
		\frac{2A+n}
		{2B+\left\|\mathbf{y}-\mathbf{X} \boldsymbol{\mu}^{(t)}\right\|^2+\operatorname{tr}\left(\mathbf{X}^T \mathbf{X} \boldsymbol{\Sigma}^{(t)}\right)}$$
		From the properties of least squares estimation
		$$\left\|\mathbf{y}-\mathbf{X} \boldsymbol{\mu}^{(t)}\right\|^2 \geq
		\left\|\mathbf{y}-\mathbf{X}\left(\mathbf{X}^T \mathbf{X}\right)^{-1} \mathbf{X}^T \mathbf{y}\right\|^2
		$$,from lemma $2$,we have $\operatorname{tr}\left(\mathbf{X}^T \mathbf{X} \boldsymbol{\Sigma}^{(t)}\right) > 0$,
		therefore 
		$$
		\tau^{(t)} \leq \frac{2 A+n}{2 B+\left\|\mathbf{y}-\mathbf{X}\left(\mathbf{X}^T \mathbf{X}\right)^{-1} \mathbf{X}^T \mathbf{y}\right\|^2}
		 = \tau_R.
		$$
		From lemma $5$,
		$$
		\left\|\mathbf{X} \boldsymbol{\mu}^{(t)}\right\|^2 
		\leq 
		\mathbf{y}^T \mathbf{X}\left(\mathbf{X}^T \mathbf{X}\right)^{-1} \mathbf{X}^T \mathbf{y}.
		$$
		then the rest proof is the same as \cite{Ormerod2017} Appendix B result 3's proof.
	\end{proof}
	To simplify the proof, some notations are given first, let the index vector
    $\boldsymbol{\gamma} \in \{0,1\}^p$, 
    the related symbol's definition is the same as \cite{Ormerod2017}.
    Now given the indicator vector $\boldsymbol{\gamma}$, consider the matrix
    $\boldsymbol{\Sigma}^{(t)}$ is rewritten as
    \begin{equation}
	{\left[\begin{array}{cc}
			\boldsymbol{\Sigma}_{\boldsymbol{\gamma}, \boldsymbol{\gamma}}^{(t)} & \boldsymbol{\Sigma}_{\boldsymbol{\gamma},-\boldsymbol{\gamma}}^{(t)} \\
			\boldsymbol{\Sigma}_{-\boldsymbol{\gamma}, \boldsymbol{\gamma}}^{(t)} & \boldsymbol{\Sigma}_{-\boldsymbol{\gamma},-\boldsymbol{\gamma}}^{(t)}
		\end{array}\right]} 
		=\left[\begin{array}{cc}
		\tau^{(t)} \mathbf{X}_\gamma^T \mathbf{X} +\mathbf{D}_{\boldsymbol{\gamma}}^{(t)} & \tau^{(t)} \mathbf{X}_{\boldsymbol{\gamma}} \mathbf{X}_{-\boldsymbol{\gamma}} \\
		\tau^{(t)} \mathbf{X}_{-\boldsymbol{\gamma}}^T \mathbf{X}_{\boldsymbol{\gamma}} 
		& \tau^{(t)} \mathbf{X}_{-\boldsymbol{\gamma}}^T \mathbf{X}_{-\boldsymbol{\gamma}}+\mathbf{D}_{-\boldsymbol{\gamma}}^{(t)}
	\end{array}\right]^{-1}
	\end{equation}
	Because $\mathbf{D}^{(t)}$ is a diagonal matrix whose diagonal elements are greater than $0$, and $0 < \tau_L \leq \tau^{(t)} \leq \tau_R$,
    Therefore, by the lemma $2$,
    \begin{equation}
	\Sigma^{(t)}
	\leq \left[\begin{array}{cc}
		\tau_L \mathbf{X}_{\boldsymbol{\gamma}}^T \mathbf{X}_{\boldsymbol{\gamma}} +\mathbf{D}_{\boldsymbol{\gamma}}^{(t)} & \tau_L \mathbf{X}_{\boldsymbol{\gamma}}^T \mathbf{X}_{-\boldsymbol{\gamma}} \\
		\tau_L \mathbf{X}_{-\boldsymbol{\gamma}}^T \mathbf{X}_{\boldsymbol{\gamma}}
		& \tau_L \mathbf{X}_{-\boldsymbol{\gamma}}^T \mathbf{X}_{-\boldsymbol{\gamma}}
	\end{array}\right]^{-1} \stackrel{\mathrm{def}}{=} 
	\boldsymbol{\Omega}^{(t)}
	\end{equation}
	\begin{lemma}
		Take $\boldsymbol{\Sigma}^{(t)}$ as $(14)$,
        let $$s_j =\tau_L \mathbf{X}_{j}^T \mathbf{X}_{j}-\tau_L \mathbf{X}_{j}^T \mathbf{X}_{-j}
        \left(\mathbf{X}_{-j}^T \mathbf{X}_{-j}
        \right)^{-1}\mathbf{X}_{-j}^T \mathbf{X}_{j}$$,
		then $s_j > 0$ and
		\begin{equation}
			\Sigma_{j,j}^{(t)} \leq
			\left[
			\frac{1}{v_0} + \left(\frac{1}{v_1}-\frac{1}{v_0}\right)w_{j}^{(t-1)}
			+s_j\right]^{-1} 
		\end{equation}
	\end{lemma}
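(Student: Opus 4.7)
The plan is to derive the bound by combining the operator inequality $\boldsymbol{\Sigma}^{(t)}\le\boldsymbol{\Omega}^{(t)}$ already established in (15) with the block inversion formula of Lemma 3, specialized to the indicator vector $\boldsymbol{\gamma}=\mathbf{e}_j$ that selects only coordinate $j$. With this choice, the ``$\boldsymbol{\gamma}$-block'' of $\boldsymbol{\Omega}^{(t)}$ is the scalar
$\tau_L\mathbf{X}_j^T\mathbf{X}_j+D_{jj}^{(t)}$, the ``cross-block'' is $\tau_L\mathbf{X}_j^T\mathbf{X}_{-j}$, and the ``$-\boldsymbol{\gamma}$-block'' is $\tau_L\mathbf{X}_{-j}^T\mathbf{X}_{-j}$.

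First I would dispatch the positivity claim $s_j>0$. Writing $P_{-j}=\mathbf{X}_{-j}(\mathbf{X}_{-j}^T\mathbf{X}_{-j})^{-1}\mathbf{X}_{-j}^T$ for the orthogonal projector onto the column span of $\mathbf{X}_{-j}$, one has $s_j=\tau_L\,\mathbf{X}_j^T(\mathbf{I}_n-P_{-j})\mathbf{X}_j=\tau_L\,\|(\mathbf{I}_n-P_{-j})\mathbf{X}_j\|^2$. Since $\mathbf{X}$ is full rank and $n\ge p$, $\mathbf{X}_j$ does not lie in the column span of $\mathbf{X}_{-j}$, so this squared residual norm is strictly positive, and $\tau_L>0$ by Lemma 5.

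Second, I would apply Lemma 3 to compute the $(1,1)$ scalar block of $\boldsymbol{\Omega}^{(t)}$: it equals $\widetilde{A}=(A-BC^{-1}B^T)^{-1}$ with $A=\tau_L\mathbf{X}_j^T\mathbf{X}_j+D_{jj}^{(t)}$, $B=\tau_L\mathbf{X}_j^T\mathbf{X}_{-j}$, $C=\tau_L\mathbf{X}_{-j}^T\mathbf{X}_{-j}$. The two $\tau_L$ factors in $B C^{-1}B^T$ collapse to a single $\tau_L$, and what is left of $A-BC^{-1}B^T$ beyond $D_{jj}^{(t)}$ is precisely $s_j$. Substituting $D_{jj}^{(t)}=\tfrac{1}{v_0}+\bigl(\tfrac{1}{v_1}-\tfrac{1}{v_0}\bigr)w_j^{(t-1)}$ then yields
\[
\Omega_{jj}^{(t)}=\Bigl[\tfrac{1}{v_0}+\bigl(\tfrac{1}{v_1}-\tfrac{1}{v_0}\bigr)w_j^{(t-1)}+s_j\Bigr]^{-1}.
\]

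Finally, I would invoke the diagonal-element monotonicity stated in Lemma 2: since $\boldsymbol{\Sigma}^{(t)}\le\boldsymbol{\Omega}^{(t)}$ (both positive definite), the $(j,j)$ diagonal entries satisfy $\Sigma_{jj}^{(t)}\le\Omega_{jj}^{(t)}$, and after relabeling so that coordinate $j$ is the first row/column (which is legitimate because a simultaneous row/column permutation preserves symmetry, positivity, and diagonal entries) one obtains exactly the desired bound. The only mildly delicate point is checking that the reordering so that the $\boldsymbol{\gamma}=\mathbf{e}_j$ block is the upper-left block does not alter the values $\Sigma_{jj}^{(t)}$ and $\Omega_{jj}^{(t)}$ involved in the comparison; this is immediate, but is the step that most warrants care. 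All remaining manipulations are the routine algebra of the Schur complement, so I anticipate no further obstacles.
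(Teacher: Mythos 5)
Your proposal is correct and follows essentially the same route as the paper: the bound $\boldsymbol{\Sigma}^{(t)}\le\boldsymbol{\Omega}^{(t)}$ from (15), the Schur-complement block of Lemma 3 specialized to the indicator vector with only the $j$th entry nonzero, and the diagonal monotonicity of Lemma 2. Your positivity argument for $s_j$ via the residual norm $\tau_L\|(\mathbf{I}_n-P_{-j})\mathbf{X}_j\|^2$ under $\operatorname{rank}(\mathbf{X})=p$ is in fact a cleaner rendering of the paper's somewhat terse remark (note only that $\tau_L>0$ comes from the lemma establishing the bounds $\tau_L\le\tau^{(t)}\le\tau_R$, not the one you cited).
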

		\begin{proof}
		From the formula $(15)$ and the lemma $2$, it is only necessary to prove that $
        \boldsymbol{\Omega}_{j,j}^{(t)}$ is less than or equal to the right side of $(16)$.
		$$
		\begin{aligned}
			\boldsymbol{\Omega}_{\boldsymbol{\gamma},\boldsymbol{\gamma}}^{(t)}
			=& \left[\tau_L \mathbf{X}_{\boldsymbol{\gamma}}^T \mathbf{X}_{\boldsymbol{\gamma}} +\mathbf{D}_{\boldsymbol{\gamma}}^{(t)}
			-\tau_L \mathbf{X}_{\boldsymbol{\gamma}}^T \mathbf{X}_{-\boldsymbol{\gamma}}
			\left(\tau_L \mathbf{X}_{-\boldsymbol{\gamma}}^T \mathbf{X}_{-\boldsymbol{\gamma}}
			\right)^{-1} \tau_L \mathbf{X}_{-\boldsymbol{\gamma}}^T 
			\mathbf{X}_{\boldsymbol{\gamma}}\right]^{-1}\\ 
			=& \left[\tau_L \mathbf{X}_{\boldsymbol{\gamma}}^T \mathbf{X}_{\boldsymbol{\gamma}} +
			\frac{1}{v_0}\mathbf{I_{\boldsymbol{\gamma}}} + \left(\frac{1}{v_1}-\frac{1}{v_0}\right)\mathbf{W}_{\boldsymbol{\gamma}}^{(t-1)}\right.\\
			&\left.-
			\tau_L \mathbf{X}_{\boldsymbol{\gamma}}^T \mathbf{X}_{-\boldsymbol{\gamma}}
			\left(\mathbf{X}_{-\boldsymbol{\gamma}}^T \mathbf{X}_{-\boldsymbol{\gamma}}
			\right)^{-1}  \mathbf{X}_{-\boldsymbol{\gamma}}^T \mathbf{X}_{\boldsymbol{\gamma}}\right]^{-1} 
		\end{aligned}
		$$
		When the index vector $\boldsymbol{\gamma}$ only has the $j$th element non-zero, we get
		\begin{equation}
		\begin{aligned}
			\Omega_{j,j}^{(t)}
			=&
			\left[\tau_L \mathbf{X}_{j}^T \mathbf{X}_{j} +
			\frac{1}{v_0} + \left(\frac{1}{v_1}-\frac{1}{v_0}\right)w_{j}^{(t-1)}
			\right.\\
			&\left.-
			\tau_L \mathbf{X}_{j}^T \mathbf{X}_{-j}
			\left(\mathbf{X}_{-j}^T \mathbf{X}_{-j}
			\right)^{-1}  \mathbf{X}_{-j}^T \mathbf{X}_{j}\right]^{-1} \\
			=&\left[
			\frac{1}{v_0} + \left(\frac{1}{v_1}-\frac{1}{v_0}\right)w_{j}^{(t-1)}
			+s_j\right]^{-1} 
		\end{aligned}
		\end{equation}
		Therefore
		$$
		\Sigma_{j,j}^{(t)} \leq
		\left[
		\frac{1}{v_0} + \left(\frac{1}{v_1}-\frac{1}{v_0}\right)w_{j}^{(t-1)}
		+s_j\right]^{-1} .
		$$
		Similarly, from Lemma $3$, we can see that
        $s_j$ is the diagonal element of the matrix $(\tau_L\mathbf{X}^T \mathbf{X})^{-1}$ according to $(14)$, and $s_j > 0$ can be known from the properties of positive definite matrix
	\end{proof}
	\begin{lemma}
		For $\boldsymbol{\mu}^{(t)}$ in the algorithm $1$, $\forall 1 \leq j \leq p$, there is a constant $c_j > 0$ such that
        $$
        \mu_j^{(t)} \leq c_j\Sigma_{jj}^{(t)}.
        $$
	\end{lemma}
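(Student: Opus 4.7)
The plan is to produce $\Sigma_{jj}^{(t)}$ as an explicit prefactor of $\mu_j^{(t)}$ by applying the block matrix inversion identity (Lemma 3) with the $1\times 1$ block corresponding to coordinate $j$, and then to bound the remaining factor uniformly in $t$ by a data-dependent constant.

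First I would reorder coordinates so that index $j$ is listed first and partition $\mathbf{A}^{(t)}:=\boldsymbol{\Sigma}^{(t),-1}=\tau^{(t-1)}\mathbf{X}^{\top}\mathbf{X}+\mathbf{D}^{(t)}$ into a $1\times 1$ and $(p-1)\times(p-1)$ block. Lemma 3 then gives $\Sigma_{jj}^{(t)}=\bigl(A_{jj}^{(t)}-\mathbf{A}_{j,-j}^{(t)}(\mathbf{A}_{-j,-j}^{(t)})^{-1}\mathbf{A}_{-j,j}^{(t)}\bigr)^{-1}$ and $(\boldsymbol{\Sigma}^{(t)})_{j,-j}=-\Sigma_{jj}^{(t)}\mathbf{A}_{j,-j}^{(t)}(\mathbf{A}_{-j,-j}^{(t)})^{-1}$. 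Writing $\boldsymbol{\mu}^{(t)}=\tau^{(t-1)}\boldsymbol{\Sigma}^{(t)}\mathbf{X}^{\top}\mathbf{y}$ and reading off coordinate $j$ produces the key identity
\[
\mu_j^{(t)}=\tau^{(t-1)}\,\Sigma_{jj}^{(t)}\cdot Q_j^{(t)},\qquad Q_j^{(t)}:=\mathbf{X}_j^{\top}\bigl(\mathbf{I}_n-\mathbf{H}_{-j}^{(t)}\bigr)\mathbf{y},
\]
where $\mathbf{H}_{-j}^{(t)}:=\tau^{(t-1)}\mathbf{X}_{-j}\bigl(\tau^{(t-1)}\mathbf{X}_{-j}^{\top}\mathbf{X}_{-j}+\mathbf{D}_{-j}^{(t)}\bigr)^{-1}\mathbf{X}_{-j}^{\top}$. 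This already isolates $\Sigma_{jj}^{(t)}$ as a linear factor.

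Next I would bound $|Q_j^{(t)}|$ uniformly in $t$. Since $\mathbf{D}_{-j}^{(t)}\succeq\mathbf{0}$, Lemma 2 yields $\bigl(\tau^{(t-1)}\mathbf{X}_{-j}^{\top}\mathbf{X}_{-j}+\mathbf{D}_{-j}^{(t)}\bigr)^{-1}\preceq\bigl(\tau^{(t-1)}\mathbf{X}_{-j}^{\top}\mathbf{X}_{-j}\bigr)^{-1}$, whence
\[
\mathbf{0}\preceq\mathbf{H}_{-j}^{(t)}\preceq\mathbf{X}_{-j}(\mathbf{X}_{-j}^{\top}\mathbf{X}_{-j})^{-1}\mathbf{X}_{-j}^{\top}\preceq\mathbf{I}_n,
\]
so $\mathbf{I}_n-\mathbf{H}_{-j}^{(t)}$ is PSD with operator norm at most $1$. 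Cauchy--Schwarz then gives $|Q_j^{(t)}|\leq\|\mathbf{X}_j\|\,\|\mathbf{y}\|$, a bound independent of $t$. Combining with $\tau^{(t-1)}\leq\tau_R$ from Lemma 6 yields $|\mu_j^{(t)}|\leq \tau_R\|\mathbf{X}_j\|\,\|\mathbf{y}\|\cdot\Sigma_{jj}^{(t)}$, so the constant $c_j:=\tau_R\|\mathbf{X}_j\|\,\|\mathbf{y}\|$ works.

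The main obstacle is the first step. A naive Cauchy--Schwarz applied directly to $\mu_j^{(t)}=\tau^{(t-1)}\mathbf{e}_j^{\top}\boldsymbol{\Sigma}^{(t)}\mathbf{X}^{\top}\mathbf{y}$ only produces a factor of $\sqrt{\Sigma_{jj}^{(t)}}$, which is too weak in exactly the regime of interest (Proposition 1, where $\Sigma_{jj}^{(t)}\to 0$ as $v_0\to 0$). The block inversion identity is what converts this square-root dependence into the required linear dependence on $\Sigma_{jj}^{(t)}$, and all subsequent operator-norm manipulations are routine once that factorization is in place.
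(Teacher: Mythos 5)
Your proposal is correct and takes essentially the same route as the paper: the paper's proof also applies the block-inversion formula (Lemma 3) with coordinate $j$ split off to obtain $\mu_j^{(t)}=\tau\,\Sigma_{jj}^{(t)}\bigl[\mathbf{X}_j^{\top}\mathbf{y}-\mathbf{X}_j^{\top}\mathbf{X}_{-j}\bigl(\mathbf{X}_{-j}^{\top}\mathbf{X}_{-j}+\mathbf{D}_{-j}/\tau\bigr)^{-1}\mathbf{X}_{-j}^{\top}\mathbf{y}\bigr]$ and then bounds the bracket by a $t$-free quantity via Cauchy--Schwarz together with $\tau^{(t)}\leq\tau_R$ from Lemma 6. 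The only difference is cosmetic: you bound the bracket by $\|\mathbf{X}_j\|\,\|\mathbf{y}\|$ using the operator norm of $\mathbf{I}_n-\mathbf{H}_{-j}^{(t)}$, whereas the paper splits it into $\mathbf{X}_j^{\top}\mathbf{y}$ plus a cross term bounded by Cauchy--Schwarz, yielding the constant $c_j=\tau_R\bigl(\mathbf{X}_j^{\top}\mathbf{y}+c_0\bigr)$.
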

	\begin{proof}
		From $(14)$, using the formula of Lemma $3$, we can know
		$$\boldsymbol{\Sigma}_{\boldsymbol{\gamma},-\boldsymbol{\gamma}}^{(t)}=
		-\boldsymbol{\Sigma}_{\boldsymbol{\gamma}, \boldsymbol{\gamma}}^{(t)}  \mathbf{X}_{\boldsymbol{\gamma}}^T \mathbf{X}_{-\boldsymbol{\gamma}} \left( \mathbf{X}_{-\boldsymbol{\gamma}}^T \mathbf{X}_{-\boldsymbol{\gamma}} + \mathbf{D}_{-\boldsymbol{\gamma}}^{(t)}/\tau^{(t)}
		\right)^{-1}$$
		and
		$$
		\begin{aligned}
		\boldsymbol{\mu}_{\boldsymbol{\gamma}}^{(t)}
		=&\left[\begin{array}{ll}
		\boldsymbol{\Sigma}_{\boldsymbol{\gamma}, \boldsymbol{\gamma}}^{(t)} & \boldsymbol{\Sigma}_{\boldsymbol{\gamma},
		-\boldsymbol{\gamma}}^{(t)}
		\end{array}\right]\left[\begin{array}{c}
		\tau^{(t)}  \mathbf{X}_{\boldsymbol{\gamma}}^T \mathbf{y} \\
		\tau^{(t)}  \mathbf{X}_{-\boldsymbol{\gamma}}^T \mathbf{y}
		\end{array}\right]\\
		=& \tau^{(t)}  \boldsymbol{\Sigma}_{\boldsymbol{\gamma}, \boldsymbol{\gamma}}^{(t)}\left[
		\mathbf{X}_{\boldsymbol{\gamma}}^T \mathbf{y}
		-\mathbf{X}_{\boldsymbol{\gamma}}^T \mathbf{X}_{-\boldsymbol{\gamma}} \left( \mathbf{X}_{-\boldsymbol{\gamma}}^T \mathbf{X}_{-\boldsymbol{\gamma}} +\mathbf{D}_{-\boldsymbol{\gamma}}^{(t)}
				/\tau^{(t)}
				\right)^{-1}
				\mathbf{X}_{-\boldsymbol{\gamma}}^T \mathbf{y}
				\right]
			\end{aligned}
			$$
			Similarly, when the index vector $\boldsymbol{\gamma}$ only has the $j$th element non-zero
			$$
			\mu_{j}^{(t)}=
			\tau^{(t)}  \Sigma_{jj}^{(t)}\left[
			\mathbf{X}_{j}^T \mathbf{y}
			-\mathbf{X}_{j}^T \mathbf{X}_{-j} \left( \mathbf{X}_{-j}^T \mathbf{X}_{-j} +\mathbf{D}_{-j}^{(t)}
			/\tau^{(t)}
			\right)^{-1}
			\mathbf{X}_{-j}^T \mathbf{y}
			\right]
			$$
			By the Cauchy-Schwartz inequality
			$$
			\begin{aligned}
				&\mathbf{X}_{j}^T \mathbf{X}_{-j} \left( \mathbf{X}_{-j}^T \mathbf{X}_{-j} +\mathbf{D}_{-j}^{(t)}
				/\tau^{(t)}
				\right)^{-1}
				\mathbf{X}_{-j}^T \mathbf{y}\\
				\leq& 
				\left[
				\mathbf{X}_{j}^T \mathbf{X}_{-j} \left( \mathbf{X}_{-j}^T \mathbf{X}_{-j} +\mathbf{D}_{-j}^{(t)}
				/\tau^{(t)}
				\right)^{-1}\mathbf{X}_{-j}^T \mathbf{X}_{j}
				\right]^{\frac{1}{2}}\\
				&\times
				\left[
				\mathbf{y}^T\mathbf{X}_{-j} \left( \mathbf{X}_{-j}^T \mathbf{X}_{-j} +\mathbf{D}_{-j}^{(t)}
				/\tau^{(t)}
				\right)^{-1}\mathbf{X}_{-j}^T \mathbf{y}
				\right]^{\frac{1}{2}}\\
				\leq&
				\left[
				\mathbf{X}_{j}^T \mathbf{X}_{-j} \left( \mathbf{X}_{-j}^T \mathbf{X}_{-j}
				\right)^{-1}\mathbf{X}_{-j}^T \mathbf{X}_{j}
				\right]^{\frac{1}{2}}
				\left[
				\mathbf{y}^T\mathbf{X}_{-j} \left( \mathbf{X}_{-j}^T \mathbf{X}_{-j}
				\right)^{-1}\mathbf{X}_{-j}^T \mathbf{y}
				\right]^{\frac{1}{2}} \stackrel{\mathrm{def}}{=} c_0.\\
			\end{aligned}
			$$
			So let $c_j =\tau_R\left(\mathbf{X}_{j}^T \mathbf{y}+c_0\right)$, then $\mu_j^{(t)} \leq c_j\Sigma_{jj}^{(t)}$ .
	\end{proof}
	The following will prove the proposition $1,2$ based on the above lemma.
	\begin{proof}[\textbf{Proposition $1$ Proof}.]
		By $(17)$,$\lim\limits_{w_j^{(t-1)} \to 0} \Omega_{jj}^{(t)} = \left(
		\frac{1}{v_0} + s_j
		\right)^{-1}, $then
		$$
		\lim\limits_{v_0 \to 0}\lim\limits_{w_j^{(t-1)} \to 0} \Omega_{jj}^{(t)} = 0
		$$
		Therefore
		$$
		\lim\limits_{v_0 \to 0}\lim\limits_{w_j^{(t-1)} \to 0} \Sigma_{jj}^{(t)} = 0
		$$
		By Lemma $8$ we get
		$$
		\lim\limits_{v_0 \to 0}\lim\limits_{w_j^{(t-1)} \to 0}
		\mu_j^{(t)} = 0.
		$$
	\end{proof}
	\begin{proof}[\textbf{Proposition $2$ Proof}.]
		When $w_j^{(t-1)} << 1$, because $v_0$ is fixed and $v_1 > v_0 > 0$, so
        $-\frac{1}{v_0} < \frac{1}{v_1}-\frac{1}{v_0} < 0$, and by Lemma $8,s_j > 0$, so
        By Taylor expansion, we get
		$$
		\left[
		\frac{1}{v_0} + \left(\frac{1}{v_1}-\frac{1}{v_0}\right)w_{j}^{(t-1)}
		+s_j\right]^{-1} =
		\left(
		\frac{1}{v_0} + s_j
		\right)^{-1} + \mathrm{O}(w_j^{(t-1)})
		$$
		Let $h_j = \left(\frac{1}{v_0} + s_j\right)^{-1}$,
		Then by Lemma $7$ and Lemma $8$ we get
		$$
		\Sigma_{jj}^{(t)} \leq h_j + \mathrm{O}(w_j^{(t-1)}),
		\quad \mu_j^{(t)} \leq h_j c_j + \mathrm{O}(w_j^{(t-1)})
		$$
		So
		$$
		\Sigma_{jj}^{(t)} + \left(\mu_j^{(t)}\right)^2
		\leq h_j^2c_j^2+h_j+\mathrm{O}(w_j^{(t-1)}).
		$$
		Let $M_j = \lambda + \frac{1}{2v_0}(h_j^2c_j^2+h_j)$,
        From the definition of $h_j, c_j$, it can be seen that $M_j$ is not related to $v_1$,
		$$
		\begin{aligned}
		w_j^{(t)} &= \mathrm{expit}\left[
		\lambda + \frac{1}{2}\ln\frac{v_0}{v_1}+
		\frac{1}{2}\left(\left(\mu_j^{(t)}\right)^2+\Sigma_{jj}^{(t)}\right)\left(\frac{1}{v_0}-\frac{1}{v_1}\right)\right]\\
		& \leq 
		\mathrm{expit}\left[
		\lambda + \frac{1}{2}\ln\frac{v_0}{v_1}+
		\frac{1}{2}\left(\left(\mu_j^{(t)}\right)^2+\Sigma_{jj}^{(t)}\right)\frac{1}{v_0}\right]\\
		&\leq\mathrm{expit}\left[M_{j} + \frac{1}{2}\ln\frac{v_0}{v_1}  + \mathrm{O}(w_{j}^{(t-1)})\right].
		\end{aligned}
		$$
	\end{proof}
	\section*{Appendix 2}
	In this section, we will consider the proof of theorem $1,2$ under the assumption $1-5$,
    Because it is assumed that $\mathbf{X},\mathbf{y}$ are all random variables, first define some random sequences
    $$
	\begin{gathered}
		\mathbf{A}_n=n^{-1} \mathbf{X}^T \mathbf{X}, \quad \mathbf{b}_n=n^{-1} \mathbf{X}^T \mathbf{y} \\
		c_n=\operatorname{tr}\left[\mathbf{A}_n\left(\mathbf{A}_n+ n^{-1} \tau^{-1} \mathbf{D}\right)^{-1}\right], \quad  \boldsymbol{\beta}_{\mathrm{LS}}=\left(\mathbf{X}^T \mathbf{X}\right)^{-1} \mathbf{X}^T \mathbf{y}
	\end{gathered}
	$$
	From lemma 4-6 in \cite{Ormerod2022}
    , we can get
	\begin{equation}
		\begin{gathered}
			\mathbf{A}_n 
			\stackrel{\text { P }}{\to}
			\mathbf{S}\quad,\quad
			\mathbf{A}_n^{-1}\stackrel{\text { P }}{\to}\mathbf{S}^{-1},\\ 
			n^{-1}\|\boldsymbol{\epsilon}\|^2 \stackrel{\text { P }}{\to} \sigma_0^2\quad,\quad
			n^{-1} \mathbf{X}^T \boldsymbol{\epsilon}
			\stackrel{\text { P }}{\to} \mathbf{0},\\
			\mathbf{b}_n =n^{-1} \mathbf{X}^T\left(\mathbf{X} \boldsymbol{\beta}_0+\boldsymbol{\epsilon}\right)
			\stackrel{\text { P }}{\to}
			\mathbf{S} \boldsymbol{\beta}_0.
		\end{gathered}
	\end{equation}
	In order to get the consistency estimate of $\boldsymbol{\mu}$,
    First prove that $\forall t \geq 1$, $\tau^{(t)}$ is bounded by probability.
    \begin{lemma}
		Under the condition of assumption $1-4$, 
		$$\tau^{(t)} = \mathrm{O}_p(1),\quad1/\tau^{(t)} = \mathrm{O}_p(1) \quad (\forall t \geq 0)$$
	\end{lemma}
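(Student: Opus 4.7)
The plan is to bootstrap off Lemma 6, which already produces, for every realization of $(\mathbf{X},\mathbf{y})$ on which $\mathbf{X}$ is full rank, the sandwich inequality
$$
\tau_L(\mathbf{X},\mathbf{y})\;\le\;\tau^{(t)}\;\le\;\tau_R(\mathbf{X},\mathbf{y})\qquad(\forall\,t\ge 0).
$$
By Assumption 3 the event $\{\operatorname{rank}(\mathbf{X})=p\}$ has probability one for all $n\ge p$, so it is enough to show that $\tau_R=O_p(1)$ and $\tau_L^{-1}=O_p(1)$. Both reductions rely only on the limits already recorded in (18), so no new stochastic machinery is needed.

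First I would handle $\tau_R$. Using $\mathbf{y}=\mathbf{X}\boldsymbol{\beta}_0+\boldsymbol{\varepsilon}$ and the hat matrix $\mathbf{H}=\mathbf{X}(\mathbf{X}^T\mathbf{X})^{-1}\mathbf{X}^T$ one has $\mathbf{y}-\mathbf{X}\boldsymbol{\beta}_{\mathrm{LS}}=(\mathbf{I}-\mathbf{H})\boldsymbol{\varepsilon}$, whence
$$
n^{-1}\bigl\|\mathbf{y}-\mathbf{X}\boldsymbol{\beta}_{\mathrm{LS}}\bigr\|^{2}
\;=\;n^{-1}\|\boldsymbol{\varepsilon}\|^{2}-\bigl(n^{-1}\mathbf{X}^T\boldsymbol{\varepsilon}\bigr)^{T}\mathbf{A}_n^{-1}\bigl(n^{-1}\mathbf{X}^T\boldsymbol{\varepsilon}\bigr).
$$
By (18) the right side converges in probability to $\sigma_0^{2}>0$, so $n^{-1}\cdot(\text{denominator of }\tau_R)\stackrel{P}{\to}\sigma_0^{2}$ and therefore $\tau_R\stackrel{P}{\to}\sigma_0^{-2}$, which is in particular $O_p(1)$.

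Next I would bound $\tau_L^{-1}$. The numerator $2A+n-p$ is $n(1+o(1))$ since $p$ is fixed under Assumptions 1--4. For the denominator I would show each summand is $O_p(n)$: (i) $\|\mathbf{y}\|^{2}\le 2\|\mathbf{X}\boldsymbol{\beta}_0\|^{2}+2\|\boldsymbol{\varepsilon}\|^{2}=O_p(n)$, because $n^{-1}\mathbf{X}^T\mathbf{X}\stackrel{P}{\to}\mathbf{S}$ and $n^{-1}\|\boldsymbol{\varepsilon}\|^{2}\stackrel{P}{\to}\sigma_0^{2}$; (ii) $\mathbf{y}^{T}\mathbf{X}(\mathbf{X}^{T}\mathbf{X})^{-1}\mathbf{X}^{T}\mathbf{y}=n\,\boldsymbol{\beta}_{\mathrm{LS}}^{T}\mathbf{A}_n\boldsymbol{\beta}_{\mathrm{LS}}=O_p(n)$, using $\boldsymbol{\beta}_{\mathrm{LS}}=\mathbf{A}_n^{-1}\mathbf{b}_n\stackrel{P}{\to}\boldsymbol{\beta}_0$ from (18); (iii) the last summand $p\,(2A+n-p)/((2A+n)\tau^{(0)})=O(p)=O(1)$ since $\tau^{(0)}=1$ and $p$ is fixed. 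Dividing by $n$ shows $\tau_L$ is bounded below in probability by a positive constant, i.e.\ $\tau_L^{-1}=O_p(1)$.

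Combining the two bounds with Lemma 6 gives $\tau^{(t)}=O_p(1)$ and $1/\tau^{(t)}=O_p(1)$ uniformly in $t$. The only mildly delicate point is the very first step, namely verifying that Lemma 6 (stated for deterministic $\mathbf{X},\mathbf{y}$) may be invoked pathwise under Assumptions 1--4; this is immediate from Assumption 3 once one notes that the constants $\tau_L,\tau_R$ produced by Lemma 6 depend only on $(\mathbf{X},\mathbf{y},A,B,p,n,\tau^{(0)})$ and not on $t$. The rest of the argument is purely a matter of reading off orders of magnitude from (18), so I do not expect any substantial obstacle beyond keeping track of the $O_p$ book-keeping.
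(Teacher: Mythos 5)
Your argument is correct and is essentially the route the paper intends: the paper's own ``proof'' of this lemma is just a citation to Ormerod and You (2017, Appendix~B, Result~5), whose argument is precisely the one you reconstruct, namely the pathwise sandwich $\tau_L\le\tau^{(t)}\le\tau_R$ from Lemma~6 combined with the probability limits in (31) to show $\tau_R=O_p(1)$ and $\tau_L^{-1}=O_p(1)$ (with $\tau^{(0)}=1$ handling $t=0$ trivially). Your write-up is in fact more self-contained than the paper's, and the order-of-magnitude bookkeeping (including the $O(p)=o(n)$ term in the denominator of $\tau_L$) is sound.
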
	
	The proof is in \cite{Ormerod2017} Appendix B result 5.
	\begin{proof}[\textbf{Theorem $1$ proof}.]
		We will first prove the consistency of $\boldsymbol{\mu}^{(t)}$,
		because$$
		\begin{aligned}
			\boldsymbol{\mu}^{(t)}
			=& 
			\left[
			\tau^{(t-1)}\mathbf{X}^T\mathbf{X}+\mathbf{D}^{(t)}
			\right]^{-1}\mathbf{X}^T \mathbf{y}\\
			=&\left[
			\frac{1}{n}\mathbf{X}^T\mathbf{X}+
			\frac{1}{n\tau^{(t-1)}}\mathbf{D}^{(t)}
			\right]^{-1}\times\frac{1}{n}\mathbf{X}^T \mathbf{y}\\
			=&\left[
			\frac{1}{n}\mathbf{X}^T\mathbf{X}+
			\frac{1}{n\tau^{(t-1)}}\mathbf{D}^{(t)}
			\right]^{-1}\mathbf{b}_n
		\end{aligned}
		 $$from $(31)$ ,$\frac{1}{n}\mathbf{X}^T\mathbf{X} \stackrel{\mathrm{P}}{\to} \mathbf{S},
		 \mathbf{b}_n \stackrel{\text { P }}{\to}
		 \mathbf{S} \boldsymbol{\beta}_0$,but the lemma
		 $10,1/\tau^{(t-1)} = \mathrm{O}_p(1)$,
		 therefore $$
		 \boldsymbol{\mu}^{(t)} \stackrel{\text { P }}{\to} 
		 \mathbf{S}^{-1}\mathbf{S}\boldsymbol{\beta}_0 = \boldsymbol{\beta}_0.
		 $$
		 we will next prove the consistency of $1/\tau^{(t)}$,
		 $$
		 \begin{aligned}
		 	1/\tau^{(t)} &= 
		 	\frac
		 	{2B+\left\|\mathbf{y}-\mathbf{X} \boldsymbol{\mu}^{(t)}\right\|^2+\operatorname{tr}\left(\mathbf{X}^T \mathbf{X} \boldsymbol{\Sigma}^{(t)}\right)}{2A+n}\\
		 	&= \frac
		 	{2B/n+n^{-1}\left\|\mathbf{y}-\mathbf{X} \boldsymbol{\mu}^{(t)}\right\|^2+n^{-1}\operatorname{tr}\left(\mathbf{X}^T \mathbf{X} \boldsymbol{\Sigma}^{(t)}\right)}{2A/n+1}
		 \end{aligned}
		 $$
		 From lemma 6, 
		 $\operatorname{tr}\left(\mathbf{X}^T \mathbf{X} \boldsymbol{\Sigma}^{(t)} \right)
		 \leq p / \tau^{(t-1)}$,from lemma 9
		 $$n^{-1}\operatorname{tr}\left(\mathbf{X}^T \mathbf{X} \boldsymbol{\Sigma}^{(t)}\right) \stackrel{\mathrm{P}}{\to} 0.$$
		 And$\mathbf{y} = \mathbf{X}\boldsymbol{\beta}_0 + \boldsymbol{\epsilon}$,so
		 $$
		 \begin{aligned}
		 	n^{-1}\left\|\mathbf{y}-\mathbf{X} \boldsymbol{\mu}^{(t)}\right\|^2
		 	=&n^{-1}\|\boldsymbol{\epsilon}\|^2+
		 	2\left(\boldsymbol{\beta}_0-
		 	\boldsymbol{\mu}^{(t)}\right)^T
		 	\left(n^{-1}\mathbf{X}^T \boldsymbol{\epsilon}\right) \\
		 	&+\left(\boldsymbol{\beta}_0-\boldsymbol{\mu}^{(t)}\right)^T\left(n^{-1} \mathbf{X}^T \mathbf{X}\right)
		 	\left(\boldsymbol{\beta}_0-\boldsymbol{\mu}^{(t)}\right)
		 \end{aligned}
		 $$
		 From $(31)$,$n^{-1}\|\boldsymbol{\epsilon}\|^2 \stackrel{\text { P }}{\to} \sigma_0^2,n^{-1} \mathbf{X}^T \boldsymbol{\epsilon}
		 \stackrel{\text { P }}{\to} \mathbf{0},
		 n^{-1} \mathbf{X}^T \mathbf{X} \stackrel{\text { P }}{\to} \mathbf{S}$.
		 And $\mathbf{S}$ has finite elements, and$\boldsymbol{\mu}^{(t)} \stackrel{\text { P }}{\to} \boldsymbol{\beta}_0,$
		 So $1/\tau^{(t)} \stackrel{\text { P }}{\to} \sigma_0^2$.
	\end{proof}
	Because in the iterative process of algorithm $1$, it contains probability $w_j^{(t)}$ and covariance matrix
    $\boldsymbol{\Sigma}^{(t)}$ is related, so before proving theorem $2$, we will give
    Asymptotic properties of $\boldsymbol{\Sigma}^{(t)}$.
    \begin{lemma}
		Under the condition of assumption $1-4$, $\forall t \geq 1$,
		$$\boldsymbol{\Sigma}^{(t)} \stackrel{\mathrm{P}}{\to} \mathbf{0} \quad (n \to +\infty)$$
	\end{lemma}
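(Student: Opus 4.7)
The plan is to extract a factor of $1/n$ from $\boldsymbol{\Sigma}^{(t)}$ and then prove that the remaining inverse is $O_p(1)$. Rewrite
\[
\boldsymbol{\Sigma}^{(t)}=\bigl(\tau^{(t-1)}\mathbf{X}^T\mathbf{X}+\mathbf{D}^{(t)}\bigr)^{-1}=\frac{1}{n}\bigl(\tau^{(t-1)}\mathbf{A}_n+n^{-1}\mathbf{D}^{(t)}\bigr)^{-1}=:\frac{1}{n}Q_n^{-1},
\]
so that it suffices to show $Q_n^{-1}=O_p(1)$, after which $\boldsymbol{\Sigma}^{(t)}=O_p(n^{-1})=o_p(1)$, which is exactly the entrywise statement $\boldsymbol{\Sigma}^{(t)}\stackrel{P}{\to}\mathbf{0}$.

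The diagonal entries of $\mathbf{D}^{(t)}$ equal $\tfrac{1}{v_0}+\bigl(\tfrac{1}{v_1}-\tfrac{1}{v_0}\bigr)w_j^{(t-1)}$ and, since $v_1>v_0>0$ and $w_j^{(t-1)}\in[0,1]$, lie in $[1/v_1,1/v_0]$; thus $\|n^{-1}\mathbf{D}^{(t)}\|\leq 1/(nv_0)\to 0$ deterministically. From the limits in $(31)$, $\mathbf{A}_n\stackrel{P}{\to}\mathbf{S}$, and because $\mathbf{S}$ is positive definite (its inverse exists by the convergence $\mathbf{A}_n^{-1}\stackrel{P}{\to}\mathbf{S}^{-1}$), one has $\lambda_{\min}(\mathbf{A}_n)\stackrel{P}{\to}\lambda_{\min}(\mathbf{S})>0$ by continuity of eigenvalues.

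The crucial input is the preceding lemma, which guarantees $1/\tau^{(t-1)}=O_p(1)$: given $\varepsilon>0$ I choose $K>0$ so that $\operatorname{P}(\tau^{(t-1)}<1/K)<\varepsilon/2$. On the intersection of this event with $\{\lambda_{\min}(\mathbf{A}_n)\geq\lambda_{\min}(\mathbf{S})/2\}$, which also has probability at least $1-\varepsilon/2$ for all large $n$, Weyl's inequality applied to the sum of the two positive semidefinite matrices $\tau^{(t-1)}\mathbf{A}_n$ and $n^{-1}\mathbf{D}^{(t)}$ gives
\[
\lambda_{\min}(Q_n)\geq\tau^{(t-1)}\lambda_{\min}(\mathbf{A}_n)\geq\frac{\lambda_{\min}(\mathbf{S})}{2K},
\]
so $\|Q_n^{-1}\|\leq 2K/\lambda_{\min}(\mathbf{S})$ on an event of probability at least $1-\varepsilon$ for all large $n$. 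This yields $Q_n^{-1}=O_p(1)$ and closes the argument.

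The main obstacle is controlling the random scaling $\tau^{(t-1)}$: if it were allowed to collapse to $0$, the dominant term $\tau^{(t-1)}\mathbf{A}_n$ would vanish and $Q_n$ would lose its uniform positive definiteness, leaving only the negligible $n^{-1}\mathbf{D}^{(t)}$. The preceding lemma's $O_p$-bound on $1/\tau^{(t-1)}$ is precisely what rules this out; once it is in hand, the remainder is a routine operator-norm and continuous-mapping computation that requires no further structural information about the iterates $\mathbf{w}^{(t-1)}$.
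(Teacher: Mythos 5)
Your proposal is correct and follows essentially the same route as the paper: pull a factor of $1/n$ out of $\boldsymbol{\Sigma}^{(t)}$, then use $\mathbf{A}_n\stackrel{P}{\to}\mathbf{S}$ from $(31)$, the boundedness of the entries of $\mathbf{D}^{(t)}$, and the key lemma $1/\tau^{(t-1)}=O_p(1)$ to conclude the remaining inverse is $O_p(1)$. You merely spell out the eigenvalue/event bookkeeping (and fix a small scaling slip the paper makes, which writes $\tfrac{1}{n\tau^{(t-1)}}\mathbf{D}^{(t)}$ inside the bracket but drops the $1/\tau^{(t-1)}$ from the prefactor), so no substantive difference.
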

	\begin{proof}[proof]
		From the definition
		$$
		\begin{aligned}
			\boldsymbol{\Sigma}^{(t)} = \left[
			\tau^{(t-1)}\mathbf{X}^T\mathbf{X}+\mathbf{D}^{(t)}\right]^{-1}
			= \frac{1}{n}\left[
			\frac{1}{n}\mathbf{X}^T\mathbf{X}+
			\frac{1}{n\tau^{(t-1)}}\mathbf{D}^{(t)}
			\right]^{-1}
		\end{aligned}
		$$
		From$(31)$, $\frac{1}{n}\mathbf{X}^T\mathbf{X} \stackrel{\mathrm{P}}{\to} \mathbf{S}$,by the lemma
		$9,1/\tau^{(t-1)} = \mathrm{O}_p(1)$,while the matrix
        $\mathbf{S},\mathbf{D}^{(t)}$'s  element is finite, so
		$\boldsymbol{\Sigma}^{(t)} \stackrel{\mathrm{P}}{\to} \mathbf{0}. \quad (n \to +\infty)$
	\end{proof}
	Next, we will prove theorem $2.$
	\begin{proof}[\textbf{Theorem $2$ proof}.]
		From the definition,we only need to prove $n \to +\infty$,
		$$
		\begin{aligned}
		&(i)\forall j \in S^{\star},\operatorname{P}\left( 
		w_j^{(t)} > 0.5\right) \to 1;\\
		&(ii)\forall j \notin S^{\star},\operatorname{P}\left( 
		w_j^{(t)} \leq 0.5\right) \to 1.
		\end{aligned}
		$$
		We will first prove $(ii)$.$\forall j \notin S^{\star}$, then
		$\boldsymbol{\beta}_{0,j} = 0$,by the theorem $1$,
		$\mu_{j}^{(t)} \stackrel{\mathrm{P}}{\to} 0.$
		Suppose $\eta_j^{(t)}$ is defined in algorithm $1$, then by assumption $5$ and theorem $1$,
		$$
		\begin{aligned}
		\eta_j^{(t)}=&
		\lambda + \frac{1}{2}\ln\frac{v_0}{v_1}+
		\frac{1}{2}\left[\left(\mu_j^{(t)}\right)^2+\Sigma_{jj}^{(t)}\right]\left(\frac{1}{v_0}-\frac{1}{v_1}\right)\\
		\leq & \lambda + \frac{1}{2}\ln\frac{v_1\exp\left(-2\lambda\right)}{v_1}+
		\frac{1}{2}\left[\left(\mu_j^{(t)}\right)^2+\Sigma_{jj}^{(t)}\right]\left(\frac{1}{v_0}-\frac{1}{v_1}\right)\\
		\leq & \frac{1}{2}\left[\left(\mu_j^{(t)}\right)^2+\Sigma_{jj}^{(t)}\right]\left(\frac{1}{v_0}-\frac{1}{v_1}\right) \stackrel{\mathrm{P}}{\to} 0\quad
		(n \to +\infty)
		\end{aligned}
		$$
		Then $\operatorname{P}\left(\eta_j^{(t)} \leq 0\right) \stackrel{\mathrm{P}}{\to} 1$, so
		$$
		\begin{aligned}
		\operatorname{P}\left( 
		w_j^{(t)} \leq 0.5\right) 
		&= 
		\operatorname{P}\left( 
		\mathrm{expit}(\eta_j^{(t)})
		\leq 0.5\right)\\
		&= 
		\operatorname{P}\left( 
		\eta_j^{(t)}\leq 0\right){\to}1\quad
		(n \to +\infty)
		\end{aligned}
		$$
		Therefore, $(ii)$ is proved, and we will consider the proof of $(i)$.\\
		$\forall j \in S^{\star}$,by the assumption $5$, then
		$|\boldsymbol{\beta}_{0,j}| \geq l_0$,
		$\frac{l_0^2}{v_0} + \ln{v_0}
		\geq \ln v_1+\frac{l_0^2}{v_1}-2\lambda + 2\delta$, so
		$$
		\begin{aligned}
			\eta_j^{(t)}&=
			\lambda + \frac{1}{2}\ln\frac{v_0}{v_1}+
			\frac{1}{2}\left[\left(\mu_j^{(t)}\right)^2+\Sigma_{jj}^{(t)}\right]
			\left(\frac{1}{v_0}-\frac{1}{v_1}\right)\\
			&\stackrel{\mathrm{P}}{\to}  \lambda + \frac{1}{2}\ln\frac{v_0}{v_1}+
			\frac{1}{2}\boldsymbol{\beta}_{0,j}^2\left(\frac{1}{v_0}-\frac{1}{v_1}\right)\\
			&\geq  \lambda + \frac{1}{2}\ln\frac{v_0}{v_1}+
			\frac{1}{2}l_0^2\left(\frac{1}{v_0}-\frac{1}{v_1}
			\right)\\
			&\geq \delta
			 > 0
		\end{aligned}
		$$
		Henceforth
		$$
		\begin{aligned}
			\operatorname{P}\left( 
			w_j^{(t)} > 0.5\right) 
			&= 
			\operatorname{P}\left( 
			\mathrm{expit}(\eta_j^{(t)})
			> 0.5\right)\\
			&= 
			\operatorname{P}\left( 
			\eta_j^{(t)}> 0\right){\to}1\quad
			(n \to +\infty)
		\end{aligned}
		$$
		Therefore
		$$\operatorname{P}\left(\hat{S}_n = 
		S^\star\right) \to 1 \quad (n \to +\infty)$$
	\end{proof}
    \section*{Appendix 3}
    \begin{lemma}
		$$
		\begin{aligned}
			& \mathbf{y}^{\top} \mathbf{X} \left(\mathbf{X}^{\top} \mathbf{X}+\frac{1}{v_0}\mathbf{I}+\left(\frac{1}{v_1}-\frac{1}{v_0}\right)\mathbf{W}^{(j1)}\right)^{-1} \mathbf{X}^{\top} \mathbf{y} \\
			& -\mathbf{y}^{\top} \mathbf{X} \left(\mathbf{X}^{\top} \mathbf{X}+\frac{1}{v_0}\mathbf{I}+\left(\frac{1}{v_1}-\frac{1}{v_0}\right)\mathbf{W}^{(j0)}\right)^{-1} \mathbf{X}^{\top} \mathbf{y} \\
			& = \left(\frac{1}{v_0}-\frac{1}{v_1}\right)
			\boldsymbol{\mu}_j^{(j0)}\boldsymbol{\mu}_j^{(j1)}
		\end{aligned}
		$$
	\end{lemma}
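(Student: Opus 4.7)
The plan is to exploit the fact that the two matrices being inverted differ only in a single diagonal entry, and hence by a rank one update. Concretely, since $\mathbf{W}^{(j1)}$ and $\mathbf{W}^{(j0)}$ agree except in the $(j,j)$ entry where they take the values $1$ and $0$ respectively, we have $\mathbf{W}^{(j1)}-\mathbf{W}^{(j0)}=\mathbf{e}_j\mathbf{e}_j^{\top}$, where $\mathbf{e}_j$ is the $j$-th standard basis vector. Setting $a=\tfrac{1}{v_1}-\tfrac{1}{v_0}$ and $M_k=\mathbf{X}^{\top}\mathbf{X}+\tfrac{1}{v_0}\mathbf{I}+a\,\mathbf{W}^{(jk)}$, we get $M_1=M_0+a\,\mathbf{e}_j\mathbf{e}_j^{\top}$, so the Sherman--Morrison identity applies directly.

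First I would apply Sherman--Morrison:
$$
M_1^{-1}=M_0^{-1}-\frac{a\,M_0^{-1}\mathbf{e}_j\mathbf{e}_j^{\top}M_0^{-1}}{1+a\,\mathbf{e}_j^{\top}M_0^{-1}\mathbf{e}_j}.
$$
Introduce the abbreviations $u=\mathbf{e}_j^{\top}M_0^{-1}\mathbf{X}^{\top}\mathbf{y}=\boldsymbol{\mu}_j^{(j0)}$ and $d=\mathbf{e}_j^{\top}M_0^{-1}\mathbf{e}_j$. Sandwiching between $\mathbf{y}^{\top}\mathbf{X}$ and $\mathbf{X}^{\top}\mathbf{y}$ gives
$$
\mathbf{y}^{\top}\mathbf{X}\bigl(M_1^{-1}-M_0^{-1}\bigr)\mathbf{X}^{\top}\mathbf{y}=-\frac{a\,u^{2}}{1+ad}.
$$

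Next I would relate $\boldsymbol{\mu}_j^{(j1)}$ to $\boldsymbol{\mu}_j^{(j0)}$ by the same formula. Applying Sherman--Morrison again,
$$
\boldsymbol{\mu}_j^{(j1)}=\mathbf{e}_j^{\top}M_1^{-1}\mathbf{X}^{\top}\mathbf{y}=u-\frac{a\,d\,u}{1+ad}=\frac{u}{1+ad}.
$$
Therefore $\boldsymbol{\mu}_j^{(j0)}\boldsymbol{\mu}_j^{(j1)}=u^{2}/(1+ad)$, which lets the previous display be rewritten as
$$
\mathbf{y}^{\top}\mathbf{X}\bigl(M_1^{-1}-M_0^{-1}\bigr)\mathbf{X}^{\top}\mathbf{y}=-a\,\boldsymbol{\mu}_j^{(j0)}\boldsymbol{\mu}_j^{(j1)}=\Bigl(\tfrac{1}{v_0}-\tfrac{1}{v_1}\Bigr)\boldsymbol{\mu}_j^{(j0)}\boldsymbol{\mu}_j^{(j1)},
$$
which is the claim.

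There is no real obstacle; the only thing to be careful about is (i) recognizing the rank one structure $\mathbf{W}^{(j1)}-\mathbf{W}^{(j0)}=\mathbf{e}_j\mathbf{e}_j^{\top}$ so that Sherman--Morrison is applicable, and (ii) noting that $M_0$ is invertible because $\tfrac{1}{v_0}\mathbf{I}+a\mathbf{W}^{(j0)}=\operatorname{diag}(v_0^{-1},\ldots,v_1^{-1},\ldots,v_0^{-1})$ is positive definite and added to the positive semidefinite $\mathbf{X}^{\top}\mathbf{X}$, and the denominator $1+ad$ is positive since $M_1$ is likewise invertible. Once these points are in place the computation is a two-line application of Sherman--Morrison.
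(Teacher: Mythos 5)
Your proof is correct, and it takes a genuinely different (and shorter) route than the paper. The paper never invokes Sherman--Morrison: it partitions $\boldsymbol{\Sigma}^{(jk)}$ into the $j$ versus $-j$ blocks via its block-inversion lemma (Lemma 3), computes all four blocks of $\boldsymbol{\Sigma}^{(j1)}-\boldsymbol{\Sigma}^{(j0)}$ explicitly in terms of $d_j=\Sigma_{j,j}^{(j1)}-\Sigma_{j,j}^{(j0)}=\bigl(\tfrac{1}{v_0}-\tfrac{1}{v_1}\bigr)\Sigma_{j,j}^{(j1)}\Sigma_{j,j}^{(j0)}$ and the cross term $\mathbf{X}_j^{\top}\mathbf{X}_{-j}\mathbf{C}_j$, sandwiches with $\mathbf{y}^{\top}\mathbf{X}$ to get a perfect square, and then recognizes $\mu_j^{(jk)}=\Sigma_{j,j}^{(jk)}\bigl(\mathbf{X}_j^{\top}\mathbf{y}-\mathbf{X}_j^{\top}\mathbf{X}_{-j}\mathbf{C}_j\mathbf{X}_{-j}^{\top}\mathbf{y}\bigr)$. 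You instead observe the rank-one structure $\mathbf{W}^{(j1)}-\mathbf{W}^{(j0)}=\mathbf{e}_j\mathbf{e}_j^{\top}$ and let Sherman--Morrison do all of that bookkeeping at once, including the clean relation $\mu_j^{(j1)}=\mu_j^{(j0)}/(1+ad)$; this is a tidier argument, while the paper's block computation has the side benefit of exhibiting the explicit closed forms for $\Sigma_{j,j}^{(jk)}$ and $\mu_j^{(jk)}$ that mirror formulas used elsewhere in its appendices. One tiny imprecision in your invertibility remark: $\tfrac{1}{v_0}\mathbf{I}+a\mathbf{W}^{(j0)}$ is not exactly $\operatorname{diag}(v_0^{-1},\ldots,v_1^{-1},\ldots,v_0^{-1})$, because the off-$j$ diagonal entries of $\mathbf{W}$ are the probabilities $w_k\in(0,1)$ rather than $0$ or $1$; but each diagonal entry is still at least $1/v_1>0$, so positive definiteness of $M_0$ and $M_1$ (and hence $1+ad=\det M_1/\det M_0>0$) holds and your argument is unaffected.
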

    \begin{proof}
    Consider the general situation first, let $\boldsymbol{\Sigma} = \left(\mathbf{X}^{\top} \mathbf{X}+\frac{1}{v_0}\mathbf{I}+\left( \frac{1}{v_1}-\frac{1}{v_0}\right)\mathbf{W}\right)^{-1},\mathbf{D} = \frac{1}{v_0}\mathbf {I}+\left(\frac{1}{v_1}-\frac{1}{v_0}\right)\mathbf{W}$, also consider rewriting $\boldsymbol{\Sigma}$ as\begin{equation}
			\nonumber
			{\left[\begin{array}{cc}
					\Sigma_{j,j} & \boldsymbol{\Sigma}_{j,-j} \\
					\boldsymbol{\Sigma}_{-j, j} & \boldsymbol{\Sigma}_{-j,-j}
				\end{array}\right]} 
			=\left[\begin{array}{cc}
				\mathbf{X}_j^T \mathbf{X}_j + \frac{1}{v_0}+\left(\frac{1}{v_1}-\frac{1}{v_0}\right)w_j
				&\mathbf{X}_{j}^T \mathbf{X}_{-j} \\
				\mathbf{X}_{-j}^T \mathbf{X}_{j} 
				& \mathbf{X}_{-j}^T \mathbf{X}_{-j}+
				\mathbf{D}_{-j}
			\end{array}\right]^{-1}
	\end{equation}
    Let $\mathbf{C}_j = \left(\mathbf{X}_{-j}^T \mathbf{X}_{-j}+\mathbf{D}_{-j}\right)^{ -1}$, from Lemma $3$ we get
    \begin{equation}
		\begin{aligned}
			&\Sigma_{j,j} =\left[
			\mathbf{X}_j^T \mathbf{X}_j + \frac{1}{v_0}+\left(\frac{1}{v_1}-\frac{1}{v_0}\right)w_j - 
			\mathbf{X}_{j}^T \mathbf{X}_{-j}\mathbf{C}_j\mathbf{X}_{-j}^T \mathbf{X}_{j}
			\right]^{-1} \\
			&\boldsymbol{\Sigma}_{j,-j} = 
			-\Sigma_{j,j}\mathbf{X}_{j}^T \mathbf{X}_{-j}\mathbf{C}_j
			\\
			&\boldsymbol{\Sigma}_{-j,j} = 
			-\Sigma_{j,j}\mathbf{C}_j\mathbf{X}_{-j}^T \mathbf{X}_{j}\\
			&\boldsymbol{\Sigma}_{-j,-j} = 
			\left(
			\mathbf{X}_{-j}^T \mathbf{X}_{-j}+
			\mathbf{D}_{-j}
			\right)^{-1} + \Sigma_{j,j}\mathbf{C}_j
			\mathbf{X}_{-j}^T \mathbf{X}_{j}
			\mathbf{X}_{j}^T \mathbf{X}_{-j}
			\mathbf{C}_j
		\end{aligned}
	\end{equation}
    Set $k \in \{0,1\}, let $ $\boldsymbol{\Sigma}^{(jk)} = \left(\mathbf{X}^{\top} \mathbf{X}+\frac {1}{v_0}\mathbf{I}+\left(\frac{1}{v_1}-\frac{1}{v_0}\right)\mathbf{W}^{(jk)}\right)^ {-1}$, replace $w_j$ with $k$, get from (45),
    $$\Sigma_{j,j}^{(jk)} = \left[
	\mathbf{X}_j^T \mathbf{X}_j + \frac{1}{v_k}- 
	\mathbf{X}_{j}^T \mathbf{X}_{-j}\mathbf{C}_j\mathbf{X}_{-j}^T \mathbf{X}_{j}
	\right]^{-1}$$
    Let $d_j = \Sigma_{j,j}^{(j1)} - \Sigma_{j,j}^{(j0)}$, then $d_j = \left(\frac{1}{v_0} -\frac{1}{v_1}\right)\Sigma_{j,j}^{(j1)} \Sigma_{j,j}^{(j0)}$, and
    $$
	\begin{aligned}
		\boldsymbol{\Sigma}^{(j1)} - \boldsymbol{\Sigma}^{(j0)}
		&= {\left[\begin{array}{cc}
				d_j & -d_j\mathbf{X}_{j}^T \mathbf{X}_{-j}\mathbf{C}_j \\
				-d_j\mathbf{C}_j\mathbf{X}_{-j}^T \mathbf{X}_{j} & d_j\mathbf{C}_j
				\mathbf{X}_{-j}^T \mathbf{X}_{j}
				\mathbf{X}_{j}^T \mathbf{X}_{-j}
				\mathbf{C}_j
		\end{array}\right]} 
	\end{aligned}
	$$
    Empathy rewrite$\mathbf{X}^{\top} \mathbf{y} = \left[\begin{array}{c}
	\mathbf{X}_{j}^T \mathbf{y} \\
	\mathbf{X}_{-j}^T \mathbf{y}
	\end{array}\right]$,so 
    $$
	\begin{aligned}
		&\mathbf{y}^{\top} \mathbf{X}
		\left(\boldsymbol{\Sigma}^{(j1)} - \boldsymbol{\Sigma}^{(j0)}\right)
		\mathbf{X}^{\top} \mathbf{y} \\
		&=
		d_j\left[
		\left(\mathbf{X}_{j}^T \mathbf{y}\right)^2
		-2\left(\mathbf{X}_{j}^T \mathbf{y}\right)\left(\mathbf{X}_{j}^T \mathbf{X}_{-j}\mathbf{C}_j\mathbf{X}_{-j}^T \mathbf{y}\right)
		+ \left(\mathbf{X}_{j}^T \mathbf{X}_{-j}\mathbf{C}_j\mathbf{X}_{-j}^T \mathbf{y}\right)^2
		 \right]\\
		\quad &=\left(\frac{1}{v_0}-\frac{1}{v_1}\right)\Sigma_{j,j}^{(j1)} \Sigma_{j,j}^{(j0)}\left(
		\mathbf{X}_{j}^T \mathbf{y} - \mathbf{X}_{j}^T \mathbf{X}_{-j}\left(
		\mathbf{X}_{-j}^T \mathbf{X}_{-j}+
		\mathbf{D}_{-j}
		\right)^{-1}\mathbf{X}_{-j}^T \mathbf{y}
		\right)^2
	\end{aligned}
	$$
    So
 $$
	\mathbf{y}^{\top} \mathbf{X}
	\left(\boldsymbol{\Sigma}^{(j1)} - \boldsymbol{\Sigma}^{(j0)}\right)
	\mathbf{X}^{\top} \mathbf{y} = \left(\frac{1}{v_0}-\frac{1}{v_1}\right)
	\boldsymbol{\mu}_j^{(j0)}\boldsymbol{\mu}_j^{(j1)}.
	$$
    \end{proof}
    \begin{proof}[Theorem $3$ proof]]
    Firstly, it is proved by theorem $1$ and the formula $(44)$, because the assumptions are the same, it can be obtained in the same way
    \begin{equation}
			\label{mujk}
		\boldsymbol{\mu}^{(jk)} = \boldsymbol{\beta}_0 +  \mathbf{O}_p^v
		\left(p^3n^{-\frac{1}{2}}\right),k \in \{0,1\}
		\end{equation}
    and 
    \begin{equation}
			\begin{aligned}
				T_{j 0}-T_{j 1}=&-\left(A+\frac{n}{2}\right) \log \left(1+\frac{\left(\frac{1}{v_0}-\frac{1}{v_1}\right)
				\boldsymbol{\mu}_j^{(j0)}\boldsymbol{\mu}_j^{(j1)}}{B+\frac{1}{2}\|\mathbf{y}\|^2-\frac{1}{2} \mathbf{y}^{\top} \mathbf{X} \boldsymbol{\mu}^{(j1)}}\right) \\
				& -\operatorname{logit}(\rho)+\frac{1}{2} \log n+\frac{1}{2} \log v_1 \\
				=&-\left(A+\frac{n}{2}\right) \log \left(1+c_j\right)-\operatorname{logit}(\rho)+\frac{1}{2} \log n+\frac{1}{2} \log v_1\\
			\end{aligned}
		\end{equation}
    From $\mathbf{y} = \mathbf{X}\boldsymbol{ \beta }_0+\boldsymbol{\epsilon}$ and (\ref{mujk}) can get $B+\frac{1}{2}\| \mathbf{y}\|^2-\frac{1}{2} \mathbf{y}^{\top} \mathbf{X} \boldsymbol{\mu}^{(j1)}= \mathrm{O }_p(n)$, and because of $v_0=\mathrm{O}(n^{\frac{1}{2}}), $if $j \in \boldsymbol{\gamma}_0$, $c_j = \mathrm{O}_p(n^{-\frac{1}{2}})$, $$\left(A+\frac{n}{ 2}\right) \log \left(1+c_j\right) =\mathrm{O}_p(n^{\frac{1}{2}}), $$ at this time $T_{j 0}-T_ {j 1} =-\left|\mathrm{O}_p(n^{\frac{1}{2}})\right|$, so $w_j = 1-\left(1+\exp \left| \mathrm{O}_p(n^{\frac{1}{2}})\right|\right)^{-1}.$\\
If $j \notin \boldsymbol{\gamma}_0$, then
$c_j = \mathrm{O}_P(p^6n^{-\frac{3}{2}})$,
And because $p = \mathrm{O}(n^{\frac{1}{12}})$,
In the same way, $$\left(A+\frac{n}{2}\right) \log \left(1+c_j\right) =
\mathrm{O}_p(p^6n^{-\frac{1}{2}})=\mathrm{O}_p(1),$$
So at this time $T_{j 0}-T_{j 1} =\mathrm{O}_p(\log n)$,
So $w_j = \mathrm{O}_p(n^{-1}).$
    \end{proof}
    \printbibliography{}

@article{Ormerod2017,
  title={A variational Bayes approach to variable selection},
  author={Ormerod, John T and You, Chong and M{\"u}ller, Samuel},
  journal={Electronic Journal of Statistics},
  volume={11},
  number={2},
  pages={3549--3594},
  year={2017},
  publisher={Institute of Mathematical Statistics and Bernoulli Society}
}

@article{EMVS,
  title={EMVS: The EM approach to Bayesian variable selection},
  author={Ro{\v{c}}kov{\'a}, Veronika and George, Edward I},
  journal={Journal of the American Statistical Association},
  volume={109},
  number={506},
  pages={828--846},
  year={2014},
  publisher={Taylor \& Francis}
}

@phdthesis{UIUC2016,
  title={Scalable algorithms for Bayesian variable selection},
  author={Wang, Jin},
  year={2016},
  school={University of Illinois at Urbana-Champaign}
}

@article{2014AUS,
  title={On variational Bayes estimation and variational information criteria for linear regression models},
  author={You, Chong and Ormerod, John T and Mueller, Samuel},
  journal={Australian \& New Zealand Journal of Statistics},
  volume={56},
  number={1},
  pages={73--87},
  year={2014},
  publisher={Wiley Online Library}
}

@article{PG,
  title={Bayesian inference for logistic models using P{\'o}lya--Gamma latent variables},
  author={Polson, Nicholas G and Scott, James G and Windle, Jesse},
  journal={Journal of the American statistical Association},
  volume={108},
  number={504},
  pages={1339--1349},
  year={2013},
  publisher={Taylor \& Francis}
}

@article{disscusion1,
  title={Penalized wavelets: Embedding wavelets into semiparametric regression},
  author={Wand, MP and Ormerod, John T},
  journal={Electronic Journal of Statistics},
  volume={5},
  pages={1654--1717},
  year={2011},
  publisher={Institute of Mathematical Statistics and Bernoulli Society}
}

@article{horseshoe,
  title={Sparsity information and regularization in the horseshoe and other shrinkage priors},
  author={Piironen, Juho and Vehtari, Aki},
  journal={Electronic Journal of Statistics},
  volume={11},
  number={2},
  pages={5018--5051},
  year={2017},
  publisher={Institute of Mathematical Statistics and Bernoulli Society}
}

@article{aos2014,
  title={Bayesian variable selection with shrinking and diffusing priors},
  author={Narisetty, Naveen Naidu and He, Xuming},
  journal={The Annals of Statistics},
  volume={42},
  number={2},
  pages={789--817},
  year={2014},
  publisher={Institute of Mathematical Statistics}
}

@book{Spring,
  title={Statistics for high-dimensional data: methods, theory and applications},
  author={B{\"u}hlmann, Peter and Van De Geer, Sara},
  year={2011},
  publisher={Springer Science \& Business Media}
}

@article{GM97,
  title={Approaches for Bayesian variable selection},
  author={George, Edward I and McCulloch, Robert E},
  journal={Statistica sinica},
  pages={339--373},
  year={1997},
  publisher={JSTOR}
}

@article{lasso,
  title={Regression shrinkage and selection via the lasso},
  author={Tibshirani, Robert},
  journal={Journal of the Royal Statistical Society: Series B (Methodological)},
  volume={58},
  number={1},
  pages={267--288},
  year={1996},
  publisher={Wiley Online Library}
}

@article{SCAD,
  title={Variable selection via nonconcave penalized likelihood and its oracle properties},
  author={Fan, Jianqing and Li, Runze},
  journal={Journal of the American statistical Association},
  volume={96},
  number={456},
  pages={1348--1360},
  year={2001},
  publisher={Taylor \& Francis}
}

@article{AIC,
  title={AIC model selection using Akaike weights},
  author={Wagenmakers, Eric-Jan and Farrell, Simon},
  journal={Psychonomic bulletin \& review},
  volume={11},
  number={1},
  pages={192--196},
  year={2004},
  publisher={Springer}
}

@article{BIC,
  title={Multimodel inference: understanding AIC and BIC in model selection},
  author={Burnham, Kenneth P and Anderson, David R},
  journal={Sociological methods \& research},
  volume={33},
  number={2},
  pages={261--304},
  year={2004},
  publisher={Sage Publications Sage CA: Thousand Oaks, CA}
}

@article{review1,
  title={A review of Bayesian variable selection methods: what, how and which},
  author={O'Hara, Robert B and Sillanp{\"a}{\"a}, Mikko J},
  journal={Bayesian analysis},
  volume={4},
  number={1},
  pages={85--117},
  year={2009},
  publisher={International Society for Bayesian Analysis}
}

@article{gpriors,
  title={Mixtures of g priors for Bayesian variable selection},
  author={Liang, Feng and Paulo, Rui and Molina, German and Clyde, Merlise A and Berger, Jim O},
  journal={Journal of the American Statistical Association},
  volume={103},
  number={481},
  pages={410--423},
  year={2008},
  publisher={Taylor \& Francis}
}

@article{2002,
  title={On Bayesian model and variable selection using MCMC},
  author={Dellaportas, Petros and Forster, Jonathan J and Ntzoufras, Ioannis},
  journal={Statistics and Computing},
  volume={12},
  number={1},
  pages={27--36},
  year={2002},
  publisher={Springer}
}

@article{2000,
  title={Bayesian variable selection using the Gibbs sampler},
  author={Dellaportas, Petros and Forster, Jonathan J and Ntzoufras, Ioannis},
  journal={Biostatistics-Basel-},
  volume={5},
  pages={273--286},
  year={2000},
  publisher={Marcel Dekker Inc}
}

@article{fox2012tutorial,
  title={A tutorial on variational Bayesian inference},
  author={Fox, Charles W and Roberts, Stephen J},
  journal={Artificial intelligence review},
  volume={38},
  number={2},
  pages={85--95},
  year={2012},
  publisher={Springer}
}

@article{explaining,
  title={Explaining variational approximations},
  author={Ormerod, John T and Wand, Matt P},
  journal={The American Statistician},
  volume={64},
  number={2},
  pages={140--153},
  year={2010},
  publisher={Taylor \& Francis}
}

@article{mean,
  title={Mean field variational Bayesian inference for nonparametric regression with measurement error},
  author={Pham, Tung H and Ormerod, John T and Wand, Matthew P},
  journal={Computational Statistics \& Data Analysis},
  volume={68},
  pages={375--387},
  year={2013},
  publisher={Elsevier}
}

@article{logit1,
  title={Spike and slab variational Bayes for high dimensional logistic regression},
  author={Ray, Kolyan and Szab{\'o}, Botond and Clara, Gabriel},
  journal={Advances in Neural Information Processing Systems},
  volume={33},
  pages={14423--14434},
  year={2020}
}

@article{logit2,
  title={A novel variational Bayesian method for variable selection in logistic regression models},
  author={Zhang, Chun-Xia and Xu, Shuang and Zhang, Jiang-She},
  journal={Computational Statistics \& Data Analysis},
  volume={133},
  pages={1--19},
  year={2019},
  publisher={Elsevier}
}

@article{sparse,
  title={Variational Bayes for high-dimensional linear regression with sparse priors},
  author={Ray, Kolyan and Szab{\'o}, Botond},
  journal={Journal of the American Statistical Association},
  volume={117},
  number={539},
  pages={1270--1281},
  year={2022},
  publisher={Taylor \& Francis}
}

@article{frequentist,
  title={Frequentist consistency of variational Bayes},
  author={Wang, Yixin and Blei, David M},
  journal={Journal of the American Statistical Association},
  volume={114},
  number={527},
  pages={1147--1161},
  year={2019},
  publisher={Taylor \& Francis}
}

@article{yang2020alpha,
  title={alpha -variational inference with statistical guarantees},
  author={Yang, Yun and Pati, Debdeep and Bhattacharya, Anirban},
  journal={The Annals of Statistics},
  volume={48},
  number={2},
  pages={886--905},
  year={2020},
  publisher={Institute of Mathematical Statistics}
}

@article{zhang2022bayesian,
  title={Bayesian regression using a prior on the model fit: The R2-D2 shrinkage prior},
  author={Zhang, Yan Dora and Naughton, Brian P and Bondell, Howard D and Reich, Brian J},
  journal={Journal of the American Statistical Association},
  volume={117},
  number={538},
  pages={862--874},
  year={2022},
  publisher={Taylor \& Francis}
}

@article{han2019statistical,
  title={Statistical inference in mean-field variational Bayes},
  author={Han, Wei and Yang, Yun},
  journal={arXiv preprint arXiv:1911.01525},
  year={2019}
}

@article{huang2016variational,
  title={A variational algorithm for Bayesian variable selection},
  author={Huang, Xichen and Wang, Jin and Liang, Feng},
  journal={arXiv preprint arXiv:1602.07640},
  year={2016}
}

@article{carbonetto2012scalable,
  title={Scalable variational inference for Bayesian variable selection in regression, and its accuracy in genetic association studies},
  author={Carbonetto, Peter and Stephens, Matthew},
  journal={Bayesian analysis},
  volume={7},
  number={1},
  pages={73--108},
  year={2012},
  publisher={International Society for Bayesian Analysis}
}

@article{wang2020simple,
  title={A simple new approach to variable selection in regression, with application to genetic fine-mapping},
  author={Wang, Gao and Sarkar, Abhishek and Carbonetto, Peter and Stephens, Matthew},
  journal={BioRxiv},
  pages={501114},
  year={2020},
  publisher={Cold Spring Harbor Laboratory}
}

@article{Ormerod2022,
  title={An Approximated Collapsed Variational Bayes Approach to Variable Selection in Linear Regression},
  author={You, Chong and Ormerod, John T and Li, Xiangyang and Pang, Cheng Heng and Zhou, Xiao-Hua},
  journal={Journal of Computational and Graphical Statistics},
  pages={1--11},
  year={2023},
  publisher={Taylor \& Francis}
}
\end{document}